\documentclass{article}
\usepackage{graphicx} 
\usepackage{natbib}
\usepackage{tikz}
\usepackage{tabularx}
\usepackage{tabularray}
\usepackage{doi}
\usepackage{url}
\usepackage{appendix}
\usepackage{comment}
\usepackage{amsfonts,amsmath,amssymb,amsthm}
\usepackage{authblk}

\newtheorem*{theorem*}{Theorem}
\newtheorem{theorem}{Theorem}[section]
\newtheorem{definition}{Definition}[section]

\newtheorem{corollary}{Corollary}[section]
\newtheorem{lemma}{Lemma}[section]

\setcitestyle{authoryear,open={(},close={)}}

\DeclareMathOperator{\Xt}{X_G^{(2)}}
\DeclareMathOperator{\dX}{d_{X}}
\DeclareMathOperator{\dXt}{d_{X}^{(2)}}

\DeclareMathOperator{\At}{\mathcal{A}_{G}^{(2)}}
\DeclareMathOperator{\dA}{d_{\mathcal{A}}}
\DeclareMathOperator{\dAt}{d_{\mathcal{A}}^{(2)}}

\usetikzlibrary{calc, shapes, fit,arrows.meta}

\title{Structured Coordination Games on Planar Graphs: a Dual Approach}
\author[1,2]{John S. McAlister}
\affil[1]{Princeton University, Program in Applied and Computational Mathematics}
\affil[2]{NSF Center for Analysis and Prediction of Pandemic Expansion (APPEX)}

\date{}

\begin{document}

\maketitle

\begin{abstract}
Finding Nash equilibria in the pure coordination game is trivial when every player interacts with every other player evenly. However, when players have a relational structure, the question becomes harder to answer. Here, we show that Nash equilibria to the structured coordination game represent a local notion of the Min-cut partition by way of the potential function. Using the standard relationship between partitions and subgraphs in the planar dual, we define a quasimetric space for which Nash equilibria are local minimizers of a global objective. When the game is restricted to only two strategies, these results allow us to construct a dynamic subgraph process which recapitulates the Myopic Best Response dynamics of the coordination game.    
\end{abstract}

\section{Introduction}
In the two-player coordination game, each player maximizes their payoff by playing the same strategy as their co-player. More specifically, these games are described by the bandwagon property of \cite{Kandori1998} which says that every player's best response is contained in the support of their coplayer's strategy. When we are restricted only to pure strategies, this property reduces to the simple case where every player is better off playing the same strategy as their opponent. In the two-player case, the game is fully understood, and in the multiplayer case, if every player is interacting with every other player evenly, this game is equally easy to analyze. The only $\epsilon-$Nash equilibria are the so-called consensus equilibria where every player is playing the same strategy \citep{Kandori1993,Kandori1995}. 

If there is an inhomogeneous relational structure, however, the game becomes more complicated to analyze. In this case, we imagine the game being played on a graph where individuals are vertices, and they share edges with other players with whom they interact. There has been a lot of work done on this type of system from an economic game theoretic perspective \citep{Ellison1993,Ellison2000,Gilboa1991,Ely2002,Oechssler1997,Oechssler1999,Robson1995}. The main results from these studies, which are well reviewed in \cite{Weidenholzer}, show that with specific relational structures, like a lattice, cycle, or path, the consensus equilibria are still the only $\epsilon-$Nash equilibria. However, the relational structure has an effect on equilibrium selection. This means that, under myopic best response or similar strategy revision protocols, certain relational structures can promote the Pareto-dominant Nash equilibrium or the risk-dominant Nash equilibrium. 

This structured coordination game has also been studied extensively from a statistical physics and sociology perspective under the name ``opinion dynamics" \citep{Degroot1974,friedkin1990social}. These models share a great deal of similarity, and their analysis has a similar history.  

More recently, there has been work on general graphs which do not rely on symmetry. Conditions for existence, stability \citep{Arditti2024}, and invasibility \cite{Paarporn2021} of consensus equilibria have been examined analytically in the case of only two strategies. Non-consensus equilibria (those Nash equilibria where more than one strategy is expressed) are harder to analyze because of their extreme dependence on the structure of the graph. Much of the recent work in understanding these equilibria has come from simulation studies like \cite{Buskens2016,Tomassini2010,McAlister2024simulation} which show how features of the graph like connectedness and diameter change the likelihood of converging to a consensus or non-consensus equilibrium under some strategy revision protocol, but simulation cannot provide analytical results directly relating these non-consensus equilibria to the relational structures which admit them. 

These non-consensus equilibria are of interest because they are related to questions of community detection. There has been a great focus on community detection in the past several decades, with particular interest in top-down vertex partitioning methods like modularity \cite{Clauset2004,Newman2006}. In the same way that a maximum modularity partition of a graph partitions the graph into well-connected communities which are themselves sparsely connected, a non-consensus equilibrium in a coordination game partitions the graph into strategic communities which are more connected internally than externally. 

In this manuscript, we formalize the relationship between the coordination game with payoff matrix $I_m$ and partitioning approaches involved in community detection and proceed to prove analytical results about characteristics of the non-consensus equilibria in planar graphs. We start in section \ref{sec:paritioning} by defining a vertex partition which is equivalent to a non-consensus equilibrium and comparing it to other types of partitions. Then, in section \ref{sec:nashequilibria}, we show that finding such a partition is equivalent to finding a local minimizer of size among subgraphs of the planar dual.  This method can be used to understand qualities of graphs which admit non-consensus equilibria, but it cannot be used to understand the dynamics of coordinating systems under myopic best response unless we follow in the footsteps of previous work and restrict to only two strategies, which we do in section \ref{sec:mbr}. we conclude by discussing how this dual approach improves our understanding of coordination, coordinating systems, and community detection in section \ref{sec:discussion}.

\section{Partitioning and Equilibria}\label{sec:paritioning}
\subsection{The Equilibrium Partition}
Consider the game with payoff matrix $A=I_m$. In the two-player case, if two players use the same strategy, both get a payoff of $1$, and if they use different strategies, they get a payoff of $0$. Assume that a player picks a strategy from the set $C=\{1,...,m\}$ This way a player's payoff is given as 
\[w(i|j)=\delta(i,j)\]
where $\delta(i,j)$ is the Kronecker-$\delta$ which is 1 when $i=j$ and $0$ otherwise. In a multiplayer game, we may say that every player's total payoff is the sum of the pairwise payoffs it gets from each diadic interaction. When the game is played on the graph $G(V,E)$ with adjacency matrix $W$, we let $u:V\to C$ be a strategy profile so the payoff for player $v$ playing against the strategy profile $u$ is computed as 
\[w_v(i|u)=\sum_{w\in V}W_{v,w}\delta(i,u_w)\]
It is important to notice that, because the payoff matrix is the identity, any permutation of the strategies would leave the payoff of every player unchanged. (This result is proved in \cite{McAlister2024simulation} and in \cite{McAlister2026}, but it is a straightforward enough idea that we will not prove it here.) Because of this fact, we will group strategy profiles into equivalence classes of vertex partitions.

Define $\Phi$ as a function from the set of strategy profiles $C^n$ with $n$ players to the class of vertex partitions $\mathcal{Q}_G$ on the graph $G$. It is entirely defined by saying that if $v,w\in V$ use the same strategy in $u$ then $v$ and $w$ are in the same part of the partition $\Phi u$. This mapping is, of course, not injective because every strategic permutation of $u$ maps to the same vertex partition under $\Phi$, so we will discuss only equivalence classes under $\Phi$. we call the set of equivalence classes under $\Phi$, $\mathcal{A}_G$. It is now obvious that $\Phi:\mathcal{A}_G\to \mathcal{Q}_G$ is a bijection. 

Every vertex partition can be considered as a strategy profile, but we are most interested in those vertex partitions which correspond to Nash equilibria.
\begin{definition}
    $P\in \mathcal{Q}_G$ is an \textit{Equilibrium Partition} if and only if there is a $u\in C^n$ such that $\Phi u = P$ and $u$ is a Nash equilibrium. 
\end{definition}

Because all strategy profiles in the same equivalence class under $\Phi$ result in the same payoffs for every player, we call $a\in \mathcal{A}$ a Nash equilibrium if and only if at least one of the elements (and thus all of the elements) in $a$ is a Nash equilibrium. This is a slight abuse of language, but such an abuse makes the system easier to discuss. With this understanding, we can equivalently define an equilibrium partition as any $P\in \mathcal{Q}_P$ such that $\Phi^{-1}P$ is a Nash equilibrium. As a matter of notation. We will now simply call an element of $\mathcal{A}$ a strategy profile and use $u$ to denote such an object. 

The main characteristic of an equilibrium partition is that every vertex is in the part of the partition that contains the plurality of its neighbors. That is equivalent to saying that, for player $v$ in part $i$ of the partition, no other part in the partition contains more of the neighbors of player $v$ than part $i$ contains. Examples of trivial equilibrium partitions (corresponding to the consensus equilibrium) and non-trivial equilibrium partitions are included in figure \ref{fig:examples} 

\begin{figure}
    \centering
        \begin{tikzpicture}[scale = 0.9]
            \node(a)[circle, fill=orange, inner sep =2pt] at (6.5+0,0.5+0){};
            \node(b)[circle, fill=orange, inner sep =2pt] at (6.5+1,0.5+0){};
            \node(c)[circle, fill=orange, inner sep =2pt] at (6.5+1,0.5+1){};
            \node(d)[circle, fill=orange, inner sep =2pt] at (6.5+1,0.5+-1){};
            \node(e)[circle, fill=orange, inner sep =2pt] at (6.5+2,0.5+0){};
            \node(f)[circle, fill=blue, inner sep =2pt] at (6.5+3,0.5+0){};
            \node(g)[circle, fill=blue, inner sep =2pt] at (6.5+3,0.5+1){};
            \node(h)[circle, fill=blue, inner sep =2pt] at (6.5+3,0.5+-1){};
            \node(i)[circle, fill=blue, inner sep =2pt] at (6.5+4,0.5+0){};

            \node[fit=(a)(b)(c)(d)(e),dashed, draw, thick, rectangle,rounded corners=10,inner sep=4pt] {};
            \node[fit=(f)(g)(h)(i),dashed, draw, thick, rectangle,rounded corners=10,inner sep=4pt] {};

            \draw[line width = 0.5mm](a)--(b);
            \draw[line width = 0.5mm](a)--(c);
            \draw[line width = 0.5mm](a)--(d);
            \draw[line width = 0.5mm](b)--(c);
            \draw[line width = 0.5mm](b)--(d);
            \draw[line width = 0.5mm](b)--(e);
            \draw[line width = 0.5mm](c)--(e);
            \draw[line width = 0.5mm](d)--(e);
            \draw[line width = 0.5mm](e)--(f);
            \draw[line width = 0.5mm](e)--(g);
            \draw[line width = 0.5mm](e)--(h);
            \draw[line width = 0.5mm](f)--(g);
            \draw[line width = 0.5mm](f)--(h);
            \draw[line width = 0.5mm](f)--(i);
            \draw[line width = 0.5mm](g)--(i);
            \draw[line width = 0.5mm](h)--(i);
            \draw[line width = 0.5mm](a)to [out = 90, in = 90, looseness =1.5](i);

        \end{tikzpicture}
        \hspace{1cm}
        \begin{tikzpicture}[scale = 0.9]
         \node(a)[circle, fill=blue, inner sep =2pt] at (6.5+0,0.5+0){};
            \node(b)[circle, fill=blue, inner sep =2pt] at (6.5+1,0.5+0){};
            \node(c)[circle, fill=blue, inner sep =2pt] at (6.5+1,0.5+1){};
            \node(d)[circle, fill=blue, inner sep =2pt] at (6.5+1,0.5+-1){};
            \node(e)[circle, fill=blue, inner sep =2pt] at (6.5+2,0.5+0){};
            \node(f)[circle, fill=blue, inner sep =2pt] at (6.5+3,0.5+0){};
            \node(g)[circle, fill=blue, inner sep =2pt] at (6.5+3,0.5+1){};
            \node(h)[circle, fill=blue, inner sep =2pt] at (6.5+3,0.5+-1){};
            \node(i)[circle, fill=blue, inner sep =2pt] at (6.5+4,0.5+0){};
            \node(inv) at (6.5+4,2.3){};

            \node[fit=(a)(b)(c)(d)(e)(f)(g)(h)(i)(inv),dashed, draw, thick, rectangle,rounded corners=10,inner sep=5pt] {};

            \draw[line width = 0.5mm](a)--(b);
            \draw[line width = 0.5mm](a)--(c);
            \draw[line width = 0.5mm](a)--(d);
            \draw[line width = 0.5mm](b)--(c);
            \draw[line width = 0.5mm](b)--(d);
            \draw[line width = 0.5mm](b)--(e);
            \draw[line width = 0.5mm](c)--(e);
            \draw[line width = 0.5mm](d)--(e);
            \draw[line width = 0.5mm](e)--(f);
            \draw[line width = 0.5mm](e)--(g);
            \draw[line width = 0.5mm](e)--(h);
            \draw[line width = 0.5mm](f)--(g);
            \draw[line width = 0.5mm](f)--(h);
            \draw[line width = 0.5mm](f)--(i);
            \draw[line width = 0.5mm](g)--(i);
            \draw[line width = 0.5mm](h)--(i);
            \draw[line width = 0.5mm](a)to [out = 90, in = 90, looseness =1.5](i);

        \end{tikzpicture}
    \caption{\textbf{Left} a non-consensus Nash equilibrium and associated non-trivial partition. \textbf{Right} a consensus Nash equilibrium and the associated trivial partition.}
    \label{fig:examples}
\end{figure}

Because of this property of equilibrium partition, it can be thought of as a useful way to discuss community detection in a more ``locally driven" sense. Each individual is deciding which part of the partition it ``wants" to belong to, and an equilibrium partition is a partitioning of the vertices so that no one can improve their payoff. This means that in an equilibrium partition, every player is at least as connected within their part as they are to any other part in the partition. 

With this definition of equilibrium partitions, we must now note that the coordination game at hand is a potential game. A potential game, first described by \cite{Monderer1996} is game for which there is a potential function $\mathcal{W}:C^n\to \mathbb{R}$ so that for all $v$
    \begin{equation}\label{eq:PotentialDef}\mathcal{W}(r,u_{-v})-\mathcal{W}(s,u_{-v})=w_{v}(r|u_{-v})-w_v(s|u_{-v})\end{equation}
    
    where $u_{-v}$ is the strategy of every player except player $v$. Thus, $(r,u_{-v})$ is the strategy profile where player $v$ plays strategy $r$ and every other player plays according to the strategy profile $u_{-v}$.
    In plain English, this means that there is a single function which describes the change in payoff from a unilateral change in strategy from any player. More specifically, equation \eqref{eq:PotentialDef} describes an exact potential function. Other kinds of potential function exist \citep{Quang2016} and are useful, but for the coordination game, the exact potential function will suffice. The most important point about the potential function is that if players sequentially change their strategies to improve their own payoff, they form a sequence of strategy profiles which will certainly lead to a Nash equilibrium. These paths (called finite improvement paths) imply that, for continuous strategy spaces, Nash equilibria are local maxima of the potential function. 

\begin{lemma}\label{lem:potentialFunction}
    For the coordination game with payoff matrix $I_m$ played of the graph $G$ with adjacency matrix $W$, \[\mathcal{W}(u)=\frac{1}{2}\sum_{v\in V}w(u_v,u)\] is an exact potential function
\end{lemma}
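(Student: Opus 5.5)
The plan is to compute the difference $\mathcal{W}(r,u_{-v})-\mathcal{W}(s,u_{-v})$ directly and match it with $w_v(r|u_{-v})-w_v(s|u_{-v})$. First I rewrite the potential as a double sum over ordered pairs,
\[\mathcal{W}(u)=\frac{1}{2}\sum_{x\in V}\sum_{y\in V}W_{x,y}\,\delta(u_x,u_y),\]
reading $w(u_x,u)$ as $w_x(u_x|u)$. Then I fix the player $v$ and split the double sum according to whether $v$ appears. The terms with $x\neq v$ and $y\neq v$ do not depend on $u_v$, so they cancel in the difference. Throughout I assume $W_{v,v}=0$, meaning there are no self-loops; if there were, the diagonal term $W_{v,v}\delta(u_v,u_v)=W_{v,v}$ would be constant anyway and would also cancel.

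The remaining terms are those with $x=v$, $y\neq v$, and those with $y=v$, $x\neq v$. Their contribution is
\[\frac{1}{2}\Big(\sum_{y\neq v}W_{v,y}\delta(u_v,u_y)+\sum_{x\neq v}W_{x,v}\delta(u_x,u_v)\Big).\]
The key step uses that the graph is undirected, so $W$ is symmetric, together with the symmetry $\delta(a,b)=\delta(b,a)$. With these, the two sums are equal and the $\tfrac{1}{2}$ cancels. The part of $\mathcal{W}(u)$ depending on $u_v$ is therefore exactly $\sum_{y}W_{v,y}\delta(u_v,u_y)=w_v(u_v|u_{-v})$.

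Evaluating this at $u_v=r$ and $u_v=s$ and subtracting gives equation \eqref{eq:PotentialDef}. The only real obstacle is bookkeeping: avoiding double-counting the $v$-terms and stating the symmetry and no-self-loop assumptions on $W$ explicitly. Symmetry of $W$ is precisely what produces the factor of $2$ that the $\tfrac{1}{2}$ in the definition of $\mathcal{W}$ is designed to cancel. I will also remark that $\mathcal{W}$ is constant on equivalence classes under $\Phi$, so the lemma descends to $\mathcal{A}_G$.
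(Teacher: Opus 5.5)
Your proposal is correct and takes essentially the paper's route. You expand $\mathcal{W}$ as a double sum over ordered pairs, discard the terms not involving the deviating player and the diagonal term (no self-loops), and use the symmetry of $W$ and $\delta$ to merge the two remaining sums so that the factor $\tfrac{1}{2}$ cancels. The only cosmetic difference is that you isolate the $u_v$-dependent part of $\mathcal{W}$ before evaluating at $r$ and $s$, whereas the paper subtracts $2\mathcal{W}(u)-2\mathcal{W}(u')$ directly.
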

\begin{proof}
    The proof is a direct computation. Consider the strategy profile $u$ in which a focal individual $x$ plays the strategy $s$ and the strategy profile $u'$ in which every player plays the same strategy as in $u$ except for the player $x$ who plays strategy $r$. This means that 
    \begin{equation*}
        \begin{split}
            2\mathcal{W}(u)-2\mathcal{W}(u')&=\sum_{v\in V}w(u_v,u)-w(u'_v,u')\\
            &=\sum_{v\in V}\sum_{w\in V}W_{v,w}(\delta(u_v,u_w)-\delta(u'_v,u'_w))
        \end{split}
    \end{equation*}
    If neither $v$ nor $w$ is player $x$ then $\delta(u_v,u_w)-\delta(u'_v,u'_w)=0$. Also, if both $v$ and $w$ are the player $x$ then $W_{v,w}=W_{x,x}=0$ because the graph is simple with no loops. Therefore we have the equation
    \begin{equation*}
        \begin{split}
            2\mathcal{W}(u)-2\mathcal{W}(u')&=\sum_{v\neq x}W_{v,x}\delta(u_v,s)+\sum_{w\neq x} W_{x,w}\delta (s,u_w)\\
            &\quad\quad\quad\quad-\sum_{v\neq x}W_{v,x}\delta(u'_v,r)-\sum_{w\neq x} W_{x,w}\delta (r,u'_w)
        \end{split}
    \end{equation*}
    Because of the symmetry of $W$ and of $\delta$ we can combine the sums to see that 
    \[2\mathcal{W}(u)2-\mathcal{W}(u')=2\sum_{v\neq x}W_{v,x}\delta(u_v,s)-2\sum_{v\neq x}W_{v,x}\delta(u'_v,r)\]

    Finally note that for $v\neq x$, $u_v=u'_v$ so we have the final expression
    \begin{equation*}
        \begin{split}\mathcal{W}(u)-\mathcal{W}(u')&=\sum_{v\neq x}W_{x,w}\delta(s,u_w)-\sum_{w\neq x} W_{x,w}\delta (r,u_w)\\
        &= w(s|u_{-x})-w(r|u_{-x})
        \end{split}
    \end{equation*}
    This is exactly the condition required to be an exact potential function. 
\end{proof}

Given that the game is a potential game, we know that Nash equilibria are local maximizers of the potential function. Therefore, equilibrium partitions correspond to local maximizers of the potential function. Of course, we must clarify what is meant by ``local." The description of ``locality" and its consequences are discussed later in this section and in section \ref{sec:nashequilibria}. First we describe how the potential function can be used to unite our understanding of the equilibrium partition with other partitions commonly used in community detection.  

\subsection{Relation to other vertex partitions}

The most famous community detection partitioning method is the maximum modularity partition, which, unsurprisingly, seeks to find the partition $P$ which maximizes the quantity 
\[Q(P)=\frac{1}{2m}\sum_{v,w\in V^2}\left[W_{v,w}-\frac{d_vd_w}{2m}\right]\delta(c_v,c_w)\]
Where $W$ is the adjacency matrix, $d_v$ is the degree of $v$ and $\delta(c_v,c_w)$ is a function which returns 1 if $v$ and $w$ are in the same part in $P$ and $0$ otherwise \citep{Clauset2004}.

Typically, when network scientists are discussing community detection, they are describing methods which find or approximate maximum modularity partitions. Finding such a partition is known to be $\mathcal{NP}$-complete \citep{Brandes2006} and so for very large graphs, approximation methods must be used.

In a heuristic way, the modularity partitions and the equilibrium partitions have a great deal of overlap, and indeed, many partitions which are maximum modularity partitions are also equilibrium partitions. However, there are examples of maximum modularity partitions which are not equilibrium partitions, and there are examples of equilibrium partitions which are not maximum modularity partitions (Fig. \ref{fig:ModularityvsEquilibrium}).  
\begin{figure}
    \centering
    \begin{tikzpicture}
		\node (E1) at (0,0){};
		\node (E2) at (0,1){};
		\node (EQU) at (3,-0.4){};
		\node (Q1) at (6,0){};
		\node (Q2) at (6,1){};
		\node (U1) at (4,-2.5){};
		\node (U2) at (2,-2.5){};
		\node (U3) at (3,-3){};
		
		\node(a) at (1,0.5){a};
		\node(b) at (3,0.75){b};
		\node(c) at (5,0.5){c};
		\node(e) at (3,-0.35){e};
		\node(d) at (2,-0.35){d};
		\node(f) at (4, -0.35){f};
		\node(g) at (3, -1.75){g};

		\node[fit=(E1)(EQU)(E2), draw, ellipse, fill = orange, opacity = 0.2 ,inner sep=2pt]{};
		\node[fit=(Q1)(EQU)(Q2), draw, ellipse,fill = blue, opacity = 0.2, inner sep=2pt]{};
		\node[fit=(U1)(EQU)(U2), draw, ellipse,inner sep=2pt]{};
		
		\node (Ename) at (0,1.75){Minimum Cut Partitions};
		\node(Qname) at (6,1.75){Modularity Partitions};
		\node(Uname) at (-0.2,-2.5){Equilibrium Partitions};
	\end{tikzpicture}\\
	\vspace{0.5cm}
	\begin{tabular}{ccc}
		
		\textbf{a}&&\textbf{b}\\
		
		\begin{tikzpicture}
			\node(a) [circle, fill, inner sep =1.5pt] at (0,0){};
			\node(b) [circle, fill, inner sep =1.5pt] at (1.5,0){};
			\node(c) [circle, fill, inner sep =1.5pt] at (0,1.5){};
			\node(d) [circle, fill, inner sep =1.5pt] at (1.5,1.5){};
			\node(e) [circle, fill, inner sep =1.5pt] at (3.5,0.75){};
			
			\draw(b)--(a)--(d)--(c)--(b)--(e)--(d);
			
			\node[fit=(a)(b)(c)(d),dashed,draw, rectangle,rounded corners=10,inner sep=5pt] {};
			\node[fit=(e),dashed, draw, rectangle,rounded corners=10,inner sep=20pt] {};

		\end{tikzpicture}&&\begin{tikzpicture}
			\node (A) [draw = black, circle] at (0,0){$K_4$};
			\node (B) [draw = black, circle] at (3,0){$K_4$};
			\node (C) [draw = black, circle] at (1.5,-1.5){$K_4$};
			\node(a)[circle, fill, inner sep =1.5pt] at (1.5,0){};
			
			\draw (A)--(a)--(B);
			\draw (a)--(C);
			
			\node[fit=(A),dashed,draw, rectangle,rounded corners=10,inner sep=5pt] {};
			\node[fit=(B),dashed, draw, rectangle,rounded corners=10,inner sep=5pt] {};
			\node[fit=(C),dashed, draw, rectangle,rounded corners=10,inner sep=5pt] {};
			\node[fit=(a),dashed, draw, rectangle,rounded corners=10,inner sep=10pt] {};
			
		\end{tikzpicture}\vspace{0.25cm}\\
		\textbf{c}&&\textbf{d}\vspace{0.25cm}\\
		\begin{tikzpicture}
			\node (A) [draw = black, circle] at (1,0){$K_4$};
			\node (B) [circle,fill,inner sep=1.5pt]at (2,0){};
			\node (c) [circle,fill,inner sep=1.5pt]at (3,0){};
			\node (d) [circle,fill,inner sep=1.5pt]at (4,0.5){};
			\node (e) [circle,fill,inner sep=1.5pt]at (4,-0.5){};
			\node[fit=(A),dashed,draw, rectangle,rounded corners=10,inner sep=5pt] {};
			\node[fit=(B)(c)(d)(e),dashed, draw, rectangle,rounded corners=10,inner sep=5pt] {};
			
			\draw (1.3,0.34)--(2,0)--(1.3,-0.34);
			\draw(2,0)--(3,0)--(4,0.5)--(3,0)--(4,-0.5);
		\end{tikzpicture}
		&\hspace{0.5cm} &
		\begin{tikzpicture}
			\node (A) [draw = black, circle] at (1,0){$K_4$};
			\node (B) [circle,fill,inner sep=1.5pt]at (2,0){};
			\node (c) [circle,fill,inner sep=1.5pt]at (3,0){};
			\node (d) [circle,fill,inner sep=1.5pt]at (4,0.5){};
			\node (e) [circle,fill,inner sep=1.5pt]at (4,-0.5){};
			\node[fit=(A)(B),dashed,draw, rectangle,rounded corners=10,inner sep=5pt] {};
			\node[fit=(c)(d)(e),dashed, draw, rectangle,rounded corners=10,inner sep=5pt] {};
			
			\draw (1.3,0.34)--(2,0)--(1.3,-0.34);
			\draw(2,0)--(3,0)--(4,0.5)--(3,0)--(4,-0.5);
		\end{tikzpicture}\vspace{0.5cm}\\
		
		\textbf{e}&&\textbf{g}\vspace{0.25cm}\\
		\begin{tikzpicture}
			\node (A) [draw = black, circle] at (0,0){$K_4$};
			\node (B) [draw = black, circle] at (1.5,0){$K_4$};
			
			\draw(A)--(B);
			
			\node[fit=(A),dashed,draw, rectangle,rounded corners=10,inner sep=5pt] {};
			\node[fit=(B),dashed,draw, rectangle,rounded corners=10,inner sep=5pt] {};
		\end{tikzpicture}&& \begin{tikzpicture}
    		\node(a)[circle, fill, inner sep =1.5pt] at (0,-0.2){};
            \node(b)[circle, fill, inner sep = 1.5pt] at(0,1.2){};
            \node(c)[circle, fill, inner sep = 1.5pt] at(1,-0.2){};
            \node(d)[circle, fill, inner sep = 1.5pt] at(1,1.2){};
            \node(e)[circle, fill, inner sep = 1.5pt] at(1.6,0.5){};
            \node (f)[circle, fill, inner sep = 1.5pt] at(-0.6,0.5){};

            \draw(f)--(a)--(c)--(e)--(f)--(d)--(e);
            \draw(e)--(b)--(f);
        
            \node[fit=(e)(c)(d),dashed, draw, rectangle,rounded corners=10,inner sep=5pt] {};
            \node[fit=(a)(b)(f),dashed, draw, rectangle,rounded corners=10,inner sep=5pt] {};
            
\end{tikzpicture} \vspace{0.5cm}
		
	\end{tabular}
	\caption[Comparison of mincut, modularity, and equilibrium partitions]{\textbf{Top} A Venn diagram showing Minimum Cut Partitions, Modularity Partitions, and Equilibrium Partitions. \textbf{Bottom} In each of the 6 panels below the Venn diagram is an example (if one is known) of a partition which falls into each region of the Venn diagram. Clusters in the partitions are marked out by dashed lines. For the region $f$, no example is given here, although there is no reason to suspect this region is empty.}
    \label{fig:ModularityvsEquilibrium}
\end{figure}

Although they are not identical, equilibrium partitions and modularity partitions share a great similarity. In each, every vertex is more connected in its community than to any other community. This relationship is easy to see when we consider $k-$regular graphs. 

with added regularity, the modularity of the partition $P$ can be written more neatly as
\begin{equation}\label{eq:regularMod1}
    \begin{split}
    Q(P)=&\frac{1}{2m}{\sum_{v\in V}\sum_{w\in V}\left[W_{v,w}-\frac{k^2}{2m}\right]}\delta(c_v,c_w)\\
    &\frac{1}{2m}{\sum_{v\in V}\sum_{w\in V}\left[W_{v,w}\delta(c_v,c_w)-\frac{k^2}{2m}\delta(c_v,c_w)\right]}
    \end{split}
\end{equation}
Let $u\in \Phi^{-1}P\in \mathcal{A}$ so that we can express $Q(P)$ in terms of $\mathcal{W}(\Phi^{-1}P)$.
splitting the sum in \eqref{eq:regularMod1} we find that 
\begin{equation}\label{eq:regularMod2}
    \begin{split}
        Q(P)&=\frac{1}{2m}{\sum_{v\in V}\sum_{w\in V}\left[W_{v,w}\delta(c_v,c_w)\right]-\frac{k^2}{2m}\sum_{v\in V}\sum_{w\in V}\left[\delta(c_v,c_w)\right]}\\
        &=\frac{1}{m}\frac{1}{2}\sum_{v\in V}w_v(u_v|u)-\frac{k^2}{(2m)^2}\sum_{v\in V}|c_v|
    \end{split}
\end{equation}
where $|C_v|$ is the number of vertices in the same part of the partition as player $v$. Notice the first term in \eqref{eq:regularMod2} is exactly the potential function from lemma \ref{lem:potentialFunction}. Also notice that $\sum_{v\in V}|c_v|=\sum_{i\in C}|c_i|^2$ where $C$ enumerates the parts of the partition $P$ and $|c_i|$ is the number of vertices in the $i$th part of the partition. This means we have
\[
    Q(P)=\frac{1}{m}\mathcal{W}(\Phi^{-1}P)-\frac{k^2}{(2m)^2}\sum_{i\in C}|c_i|^2\]

Lastly, because of the regularity of the graph we can write $2m=nk$ and so our final expression is 
\begin{equation}
    Q(P)=\frac{1}{m}\mathcal{W}(\Phi^{-1}P)-\frac{1}{n^2}\sum_{i\in C}{|c_i|^2}
\end{equation}

Having made this connection, at least in the setting with regular graphs, we can see that maximizing modularity is akin to maximizing game potential with the additional consideration of minimizing the sum of squares of the part sizes in the partition. 

This demonstrates the connection between the two kinds of partitions but it also shows a key limitation in translating between modularity and equilibrium partitions. To calculate the game potential, each player need only know the state of the individuals in its neighborhood and so local changes can be made to improve potential and move towards a local maximum. For the modularity partition, additional global knowledge of the partition is needed.  Because the equilibrium partition is somehow ``less constrained" than the modularity partition,  one can generate pathological equilibrium partitions on amodular graphs. (e.g., $C_4$ as in fig \ref{fig:amodular}) however, the equilibrium partition is associated with a non-strict Nash equilibrium, leading to the conjecture that an amodular graph cannot have an equilibrium partition associated with a strict Nash equilibrium \citep{McAlister2026}.
\begin{figure}
    \centering
    \begin{tikzpicture}
            \node(a)[circle, fill=orange, inner sep =2pt] at (0,0){};
            \node(b)[circle, fill=orange, inner sep =2pt] at (1,0){};
            \node(c)[circle, fill=blue, inner sep =2pt] at (1,1){};
            \node(d)[circle, fill=blue, inner sep =2pt] at (0,1){};

            \draw(a)--(b)--(c)--(d)--(a);

            \node[fit=(a)(b),dashed, draw, thick, rectangle,rounded corners=10,inner sep=4pt] {};

            \node[fit=(c)(d),dashed, draw, thick, rectangle,rounded corners=10,inner sep=4pt] {};
            
    \end{tikzpicture}
    \caption{$C_4$ is an amodular graph meaning that the maximum modularity partition is the trivial partition. The partition pictured here is an equilibrium partition representing a Nash equilibrium but it is not a strict Nash equilibrium. }
    \label{fig:amodular}
\end{figure}

The other important kind of vertex partition with which our partitioning method shares great similarity is the min-cut partition. This vertex partition is the partition which requires the fewest number of edges that connect vertices in different parts of the partition. 

Let $W$ be the adjacency matrix for the graph $G$, which has $m$ edges. The minimum edge cut partition is the partition into $|C|$ parts which minimizes the quantity $E(u)$. 
		\begin{equation}\label{eq:mincut}
			E(u) = \sum_{v,w}W_{vw}(1-\delta(u_vu_w))= 2m-\sum_{v,w}W_{vw}\delta(u_v,u_w)	\end{equation}

    From the form of the mincut partition objective functional \eqref{eq:mincut} it is easy to see how the mincut partition is related to the equilibrium partition. 
    \begin{corollary}\label{cor:mincutPotential}
        $E(\Phi u)= 2m-\mathcal{W}(u)$ and so $-\frac{1}{2}E(\Phi u)$ is an exact potential function for the coordination game.
    \end{corollary}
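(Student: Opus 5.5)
The plan is a direct computation that rewrites the cut objective \eqref{eq:mincut} in terms of the potential of Lemma \ref{lem:potentialFunction}. After that, the potential property follows because exact potentials are only determined up to an additive constant.

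First, I will expand the potential. By definition $w_v(u_v|u)=\sum_{w\in V}W_{v,w}\delta(u_v,u_w)$, so
\[
\mathcal{W}(u)=\frac{1}{2}\sum_{v\in V}\sum_{w\in V}W_{v,w}\delta(u_v,u_w).
\]
Next, since $W$ is the symmetric $0/1$ adjacency matrix of a simple graph with $m$ edges, $\sum_{v,w}W_{v,w}=2m$. This justifies the second equality in \eqref{eq:mincut}. Substituting the double sum gives
\[
E(\Phi u)=2m-\sum_{v,w}W_{v,w}\delta(u_v,u_w)=2m-2\mathcal{W}(u).
\]
This step uses that $E$ depends only on the partition, because $\delta(u_v,u_w)$ is invariant under permutations of strategies, so $E(\Phi u)$ is well defined on the class $u\in\mathcal{A}_G$.

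The main point to handle carefully is this factor of $2$. With $\mathcal{W}$ normalized by $\tfrac12$ as in Lemma \ref{lem:potentialFunction}, the identity reads $E(\Phi u)=2m-2\mathcal{W}(u)$. Equivalently, the corollary's $E(\Phi u)=2m-\mathcal{W}(u)$ holds if $\mathcal{W}$ denotes the unhalved sum. I will state the identity with the factor made explicit. This normalization is exactly what makes the second claim consistent: $-\tfrac12E(\Phi u)=\mathcal{W}(u)-m$.

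Finally, for any player $x$ and strategies $r,s$, the constant $m$ cancels in the difference. Hence
\[
-\tfrac12E(\Phi(r,u_{-x}))+\tfrac12E(\Phi(s,u_{-x}))=\mathcal{W}(r,u_{-x})-\mathcal{W}(s,u_{-x}),
\]
and by Lemma \ref{lem:potentialFunction} this equals $w_x(r|u_{-x})-w_x(s|u_{-x})$. This is exactly condition \eqref{eq:PotentialDef}, so $-\tfrac12E\circ\Phi$ is an exact potential. In particular, equilibrium partitions are local minimizers of the cut size under single-vertex moves.
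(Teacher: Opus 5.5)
Your proposal is correct and follows the paper's own argument: the paper simply observes that $-\tfrac12E(\Phi u)=\mathcal{W}(u)-m$ and that adding a constant to an exact potential preserves exactness, which is exactly your computation and cancellation step. You are also right about the factor of $2$: with the halved $\mathcal{W}$ of Lemma \ref{lem:potentialFunction} the correct identity is $E(\Phi u)=2m-2\mathcal{W}(u)$, and this, rather than the displayed $2m-\mathcal{W}(u)$, is what the paper's proof actually uses.
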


    \begin{proof}
        The result is obvious once we note that a potential function plus a constant is still a potential function. Now we need only observe that $\frac{-1}{2}E(\Phi u)=\mathcal{W}(u)-m$ from Lemma \ref{lem:potentialFunction}.
    \end{proof}

    This means that minimizing $E$ is the same as maximizing $\mathcal{W}$. The only difference between the two types of partitions is the constraints and the type of optimization. A mincut partition is the global minimizer of $E$ (and thus the global maximizer of $\mathcal{W}$) subject to the constraint there there are $|C|$ parts in the partition. In contrast an equilibrium partition is any local maximizer of $\mathcal{W}$ without the constraint on the number of parts in the partition. For this reason we say that an equilibrium partition is the local version of a mincut partition.  

    This tells us something about the overlap between mincut and equilibrium partitions. In particular we know that any mincut partition which is not on the ``boundary" of its constrained space of partitions (meaning that it does not have a part of size 1) must also be an equilibrium partition. 

    We have thus shown that finding equilibrium partitions not only tells us something important about the coordination game, it also tells us something about community detection. Equilibrium partitions can be thought of as a type of locally driven type of community detecting partition and they are associated to other community detecting partitions through the potential function. 

    \subsection{Defining locality in the space of partitions}
    In order to find local maximizers of game potential, we must first describe what ``local" means. The notion of locality we need is defined by the unilateral decisions each player (vertex) can make. From a strategy profile $u$, each of the $n$ players in that strategy profile can change their strategy to one of $|C|-1$ different strategies which means that there are $n(|C|-1)$ strategy profiles that are only 1 unilateral decision away from $u$. Any two strategy profiles can be connected by a chain of unilateral decisions by a finite number of players, and so we can use the unit of ``single strategic changes" to put a metric on the space of strategy profiles. More specifically, let $X$ be the space of strategy profiles and let $d_X(x,y)$ be a metric on $X$ equal to the minimum number of single strategic changes necessary to get from strategy $x$ to strategy $y$. If we think of every strategy profile in $X$ as a vertex on a graph where two vertices are connected by an edge if they are a single strategic change away from one another, then the metric is exactly the graph distance between vertices $x$ and $y$. 
    \begin{definition}[Strategy space $(X_G,d_X(\cdot,\cdot))$]
        The strategy space $(X,d_X(\cdot,\cdot))$ is a metric space where $X$ is the set of all strategy profiles $(C^n)$ on the graph $G$, and $d_X(x,y)$ is the minimum number of single strategic changes to get from strategy profile $x$ to strategy profile $y$.
    \end{definition}
    A Nash equilibrium is any strategy profile $u\in X$ so that $\mathcal{W}(u)\geq\mathcal{W}(x)$ for all $x\in \Gamma(u):=\{y\in X,d(y,u)\leq 1\}$.

    Because $\mathcal{W}=-\frac{1}{2}E(\Phi(u))$ is an exact potential function, when two strategy profiles are in the same equivalence class under $\Phi$, they must have the same value of $\mathcal{W}$. Therefore, we define a similar notion of locality for the set of equivalence classes $\mathcal{A}$. Two equivalence classes $A,B\in \mathcal{A}$ are adjacent to one another if any two strategy profiles $a\in A, b\in B$ are adjacent to one another in $(X,d(\cdot,\cdot))$. Therefore, in the same way, we have a metric on the set $\mathcal{A}$. 
    \begin{definition}[Reduced strategy space $(\mathcal{A},d_\mathcal{A}(\cdot,\cdot))$]
        The strategy space $(\mathcal{A}_G,d_\mathcal{A}(\cdot,\cdot))$ is a metric space where $\mathcal{A}$ is the set of all equivalence classes of strategy profiles under $\Phi$ on the graph $G$, and metric $\dA$ which is defined by the adjacency given by \[\dA(a,b)=1\iff \exists (u,u')\in (a,b) \text{ such that }\dX(u,u')=1\]
    \end{definition}
    Moreover, we know that $u\in \mathcal{A}$ is an equilibrium partition if $\mathcal{W}(u)\geq\mathcal{W}(x)$ for all $x\in \Gamma_\mathcal{A}(u):=\{y\in \mathcal{A};d_{\mathcal{A}}(y,u)\leq 1\}.$ 

    Although this reframing allows us to unite the ideas of mincut partitions and equilibrium partitions very naturally, the metric space $(\mathcal{A}_G,d_\mathcal{A})$ is of little use without understanding its structure. Moreover, the space $(\mathcal{A}_G,d_\mathcal{A})$ does not much improve the way that we can understand the system geometrically. As with most questions about the coordination game, we cannot, at present, describe the structure of the metric space in general, nor can we give a general geometric interpretation. However, we can give a geometric interpretation for a certain subset of graphs, the planar graphs. In this setting, we can describe a quasimetric space for which our understanding of locality can be described using only the partition boundaries. In section \ref{sec:nashequilibria}, we will describe a new quasimetric space with some of the same properties as $(\mathcal{A}_G,d_{\mathcal{A}})$ on this set of graphs which allows us to describe equilibrium partitions through a single minimization problem rather than many coupled maximization problems. From this, we arrive at an equilibrium result, but we cannot derive any results about the game dynamics in general for reasons discussed later in the chapter. Only in the case that there are only two strategies (section \ref{sec:mbr}) can we use the dual understanding of the problem in the dynamic sense. Through these dual approaches, we will gain an understanding of the geometry of coordination and local community detection.

\section{Nash equilibria on planar graphs}\label{sec:nashequilibria}
The results about the potential game are easy to state and to understand in the reduced strategy space, but they are hard to use because the structure of the space is hard to imagine, and the potential surface on that space can feel nonsensical. The goal now is to show, through quasiisometry, that the reduced strategy space is related to an easier to imagine (although harder to define) space so that the potential game results can be used to gain some geometric understanding of the system. To begin, we will consider a quasimetric space called the bridge-free subgraph space. The main result of this section is to show that subgraphs which locally minimize size in this quasimetric space are associated with Nash equilibria in the Strategy space. 

\subsection{bridge-free subgraph space}
The bridge-free subgraph space (which we will call simply the subgraph space) is a quasimetric space which we will build on the set of all bridge-free subgraphs of a planar graph $G$. Call this set $\mathcal{SC}_{G}$. To equip this set with a quasimetric, we must introduce several definitions. 

\begin{definition}[Departure vertex]
    For a face $f$ of the planar graph $G$ and a subgraph $s\in \mathcal{SC}_G$, a vertex $v$ adjacent to the face $f$ is a departure vertex of $f$ if $s$ includes $v$ and at least one edge adjacent to $v$ and not incident to the face $f$. (e.g., Fig. \ref{fig:departureVertex})
\end{definition}

\begin{figure}
    \centering
    \begin{tikzpicture}
            \node(a)[circle, fill = red, inner sep =1.5pt] at (0,0){};
            \node(b)[circle, fill, inner sep =1.5pt] at (0.5,0.866){};
            \node(c)[circle, fill =red, inner sep =1.5pt] at (1.5,0.866){};
            \node(d)[circle, fill, inner sep =1.5pt] at (2,0){};
            \node(e)[circle, fill = red, inner sep =1.5pt] at (1.5,-0.866){};
            \node(f)[circle, fill, inner sep =1.5pt] at (0.5,-0.866){};

            \draw[line width = 0.5mm](a)--(b);
            \draw[line width = 0.5mm](b)--(c);
            \draw[dashed](c)--(d);
            \draw[dashed](d)--(e);
            \draw[dashed](e)--(f);
            \draw[dashed](f)--(a);

            \draw[->][line width = 0.5mm] (c)--(2,1.73);
            \draw[->][line width =0.5mm] (a)--(-1,0);
            \draw[->][line width = 0.5mm] (e)--(2,-1.73);
            \draw[->][line width = 0.5mm] (e)--(1,-1.73);

            \node(a)[circle, fill, inner sep =1.5pt] at (3,0){};
            \node(b)[circle, fill, inner sep =1.5pt] at (3.5,0.866){};
            \node(c)[circle, fill=red, inner sep =1.5pt] at (4.5,0.866){};
            \node(d)[circle, fill, inner sep =1.5pt] at (5,0){};
            \node(e)[circle, fill=red, inner sep =1.5pt] at (4.5,-0.866){};
            \node(f)[circle, fill=red, inner sep =1.5pt] at (3.5,-0.866){};

            \draw[dashed](a)--(b);
            \draw[dashed](b)--(c);
            \draw[line width = 0.5mm](c)--(d);
            \draw[line width = 0.5mm](d)--(e);
            \draw[line width = 0.5mm](e)--(f);
            \draw[dashed](f)--(a);

            \draw[->][line width = 0.5mm] (c)--(5,1.73);
            \draw[->][line width = 0.5mm] (e)--(5,-1.73);
            \draw[->][line width = 0.5mm] (f)--(3,-1.73);
            \end{tikzpicture}
    \caption[Departure vertices]{For a particular face, the departure vertices of that face are the vertices included in the subgraph with at least one adjacent edge not incident to the face included in the subgraph. The examples of departure vertices are highlighted in red, where the dashed lines are edges of the graph which are not included in the subgraph.}
    \label{fig:departureVertex}
\end{figure}

\begin{definition}[Singe Face Rewiring (SFR)]
    For a planar graph $G$ and subgraphs $s,r\in \mathcal{SC}_G$, a Single Face Rewiring (SFR) from $s$ to $r$ is a resampling of edges from $G$ so that for a single face $f$ in $G$ the following are satisfied
    \begin{enumerate}
        \item \textbf{Locality} Any edge not incident to the face $f$ which was in $s$ is in $r$ and any edge not incident to $f$ which was not in $s$ is not in $r$.
        \item \textbf{Closure} Edges incident to $f$ cannot be added or removed from $s$ in such a way that would create a bridge in $r$. 
        \item \textbf{Connectivity} If two departure vertices of $f$ were connected by a path along $f$ in $s$ bur are not connected by a path along $f$ in $r$ then they must be entirely disconnected in $r$.
    \end{enumerate}
\end{definition}

The definition of a single face rewiring seems complicated, but they are indeed easy to imagine. In simpler words, a single face rewiring is a perturbation of the edges in a subgraph $s$ around a single face of the graph $G$ that maintains a certain connectivity condition. There are examples of single face rewirings in figures \ref{fig:SFR1} and \ref{fig:SFR2}

\begin{figure}
    \centering
         \begin{tikzpicture}[scale = 0.85]
            \node(a)[circle, fill, inner sep = 1.5pt] at (0,0){};
            \node(b) [circle,fill, inner sep = 1.5pt]at (1,1.1){};
            \node(c) [circle, fill,inner sep = 1.5pt]at (1.2,2.2){};
            \node(d) [circle, fill,inner sep = 1.5pt]at (2,2.8){};
            \node(e) [circle, fill,inner sep = 1.5pt]at (3.1,-1){};
            \node(f) [circle, fill,inner sep = 1.5pt]at (3.1,1){};
            \node(g) [ fill,inner sep = 3pt]at (3.1,2.4){};
            \node(h) [fill,inner sep = 3pt]at (4.2,0.1){};
            \node(i) [circle, fill,inner sep = 1.5pt]at (4.1,3.4){};
            \node(j) [circle,fill, inner sep = 1.5pt]at (5.2,-1.1){};
            \node(k) [circle, fill,inner sep = 1.5pt]at (5,0.1){};
            \node(l) [circle,fill, inner sep = 1.5pt]at (5.4,3.2){};
            \node(m) [circle, fill,inner sep = 1.5pt]at (6,4){};
            \node(n) [circle, fill,inner sep = 1.5pt]at (6.2,0.2){};
            \node(o) [fill,inner sep = 3pt]at (5.7,2.1){};
            \node(p) [circle, fill,inner sep = 1.5pt]at (6,3.2){};

            \draw[dashed,line width = 0.2mm](a)--(b);
            \draw[dashed,line width = 0.2mm](a)--(h);
            \draw[dashed,line width = 0.2mm](a)--(e);
            \draw[dashed,line width = 0.2mm](b)--(c);
            \draw[dashed,line width = 0.2mm](b)--(f);
            \draw[dashed,line width = 0.2mm](b)--(h);
            \draw[dashed,line width = 0.2mm](c)--(d);
            \draw[dashed,line width = 0.2mm](c)--(f);
            \draw[dashed,line width = 0.2mm](c)--(g);
            \draw[dashed,line width = 0.2mm](d)--(g);
            \draw[dashed,line width = 0.2mm](d)--(i);
            \draw[dashed,line width = 0.2mm](e)--(h);
            \draw[dashed,line width = 0.2mm](e)--(j);
            \draw[dashed,line width = 0.2mm](f)--(h);
            \draw[dashed,line width = 0.2mm](g)--(f);
            \draw[dashed,line width = 0.2mm](g)--(i);
            \draw[dashed,line width = 0.2mm](h)--(k);
            \draw[dashed,line width = 0.2mm](h)--(o);
            \draw[dashed,line width = 0.2mm](i)--(l);
            \draw[dashed,line width = 0.2mm](i)--(m);
            \draw[dashed,line width = 0.2mm](j)--(k);
            \draw[dashed,line width = 0.2mm](j)--(n);
            \draw[dashed,line width = 0.2mm](k)--(n);
            \draw[dashed,line width = 0.2mm](k)--(o);
            \draw[dashed,line width = 0.2mm](l)--(o);
            \draw[dashed,line width = 0.2mm](l)--(p);
            \draw[dashed,line width = 0.2mm](l)--(m);
            \draw[dashed,line width = 0.2mm](m)--(p);
            \draw[dashed,line width = 0.2mm](n)--(o);
            \draw[dashed,line width = 0.2mm](o)--(p);

            \node(lbl) at (1,3) {$s_1$};
            
            \draw[thick,line width = 0.5mm](a)--(b)--(c)--(d)--(g)--(f)--(h)--(e)--(a);
            \draw[thick,line width = 0.5mm](e)--(j)--(n)--(o)--(h)--(e);

            \filldraw[fill=gray, opacity =0.2] (3.1,1)--(3.1,2.4)--(4.1,3.4)--(5.4,3.2)--(5.7,2.1)--(4.2,0.1)--cycle;

            \node(lbl) at (7,1){$\leftrightarrow$};
            
            \end{tikzpicture}
            \begin{tikzpicture}[scale = 0.85]
            \node(a)[circle, fill, inner sep = 1.5pt] at (0,0){};
            \node(b) [circle,fill, inner sep = 1.5pt]at (1,1.1){};
            \node(c) [circle, fill,inner sep = 1.5pt]at (1.2,2.2){};
            \node(d) [circle, fill,inner sep = 1.5pt]at (2,2.8){};
            \node(e) [circle, fill,inner sep = 1.5pt]at (3.1,-1){};
            \node(f) [circle, fill,inner sep = 1.5pt]at (3.1,1){};
            \node(g) [ fill,inner sep = 3pt]at (3.1,2.4){};
            \node(h) [fill,inner sep = 3pt]at (4.2,0.1){};
            \node(i) [circle, fill,inner sep = 1.5pt]at (4.1,3.4){};
            \node(j) [circle,fill, inner sep = 1.5pt]at (5.2,-1.1){};
            \node(k) [circle, fill,inner sep = 1.5pt]at (5,0.1){};
            \node(l) [circle,fill, inner sep = 1.5pt]at (5.4,3.2){};
            \node(m) [circle, fill,inner sep = 1.5pt]at (6,4){};
            \node(n) [circle, fill,inner sep = 1.5pt]at (6.2,0.2){};
            \node(o) [fill,inner sep = 3pt]at (5.7,2.1){};
            \node(p) [circle, fill,inner sep = 1.5pt]at (6,3.2){};
            \draw[dashed,line width = 0.2mm](a)--(b);
            \draw[dashed,line width = 0.2mm](a)--(h);
            \draw[dashed,line width = 0.2mm](a)--(e);
            \draw[dashed,line width = 0.2mm](b)--(c);
            \draw[dashed,line width = 0.2mm](b)--(f);
            \draw[dashed,line width = 0.2mm](b)--(h);
            \draw[dashed,line width = 0.2mm](c)--(d);
            \draw[dashed,line width = 0.2mm](c)--(f);
            \draw[dashed,line width = 0.2mm](c)--(g);
            \draw[dashed,line width = 0.2mm](d)--(g);
            \draw[dashed,line width = 0.2mm](d)--(i);
            \draw[dashed,line width = 0.2mm](e)--(h);
            \draw[dashed,line width = 0.2mm](e)--(j);
            \draw[dashed,line width = 0.2mm](f)--(h);
            \draw[dashed,line width = 0.2mm](g)--(f);
            \draw[dashed,line width = 0.2mm](g)--(i);
            \draw[dashed,line width = 0.2mm](h)--(k);
            \draw[dashed,line width = 0.2mm](h)--(o);
            \draw[dashed,line width = 0.2mm](i)--(l);
            \draw[dashed,line width = 0.2mm](i)--(m);
            \draw[dashed,line width = 0.2mm](j)--(k);
            \draw[dashed,line width = 0.2mm](j)--(n);
            \draw[dashed,line width = 0.2mm](k)--(n);
            \draw[dashed,line width = 0.2mm](k)--(o);
            \draw[dashed,line width = 0.2mm](l)--(o);
            \draw[dashed,line width = 0.2mm](l)--(p);
            \draw[dashed,line width = 0.2mm](l)--(m);
            \draw[dashed,line width = 0.2mm](m)--(p);
            \draw[dashed,line width = 0.2mm](n)--(o);
            \draw[dashed,line width = 0.2mm](o)--(p);
            
            \node(lbl) at (1,3) {$s_2$};
           
            \filldraw[fill=gray, opacity =0.2] (3.1,1)--(3.1,2.4)--(4.1,3.4)--(5.4,3.2)--(5.7,2.1)--(4.2,0.1)--cycle;

            \draw[thick,line width = 0.5mm](a)--(b)--(c)--(d)--(g)--(i)--(l)--(o)--(h)--(e)--(a);
            \draw[thick,line width =0.5mm](e)--(j)--(n)--(o)--(h)--(e);
            
            \end{tikzpicture}
    \caption{A reversible single face rewiring for a face $f$ (colored in gray) of the planar graph $G$. All edges of $G$ are shown. If they are included in the subgraph $s_1$ on the left or $s_2$ on the right, they are filled in. If they are not included, they are dashed. Departure vertices of the face $f$ are shown as black squares. The resampling of edges from $s_1$ to $s_2$ satisfies Locality, Closure, and Connectivity, and thus it is a single face rewiring. The same is true about the resampling from $s_2$ to $s_1$.}
    \label{fig:SFR1}
\end{figure}
\begin{figure}
    \centering
    \begin{tikzpicture}[scale =0.85]
            \node(a)[circle, fill, inner sep = 1.5pt] at (0,0){};
            \node(b) [circle,fill, inner sep = 1.5pt]at (1,1.1){};
            \node(c) [circle, fill,inner sep = 1.5pt]at (1.2,2.2){};
            \node(d) [circle, fill,inner sep = 1.5pt]at (2,2.8){};
            \node(e) [circle, fill,inner sep = 1.5pt]at (3.1,-1){};
            \node(f) [fill=orange,inner sep = 3pt]at (3.1,1){};
            \node(g) [circle, fill,inner sep = 1.5pt]at (3.1,2.4){};
            \node(h) [ fill=blue,inner sep = 3pt]at (4.2,0.1){};
            \node(i) [circle, fill,inner sep = 1.5pt]at (4.1,3.4){};
            \node(j) [circle,fill, inner sep = 1.5pt]at (5.2,-1.1){};
            \node(k) [circle, fill,inner sep = 1.5pt]at (5,0.1){};
            \node(l) [fill=orange, inner sep = 3pt]at (5.4,3.2){};
            \node(m) [circle, fill,inner sep = 1.5pt]at (6,4){};
            \node(n) [circle, fill,inner sep = 1.5pt]at (6.2,0.2){};
            \node(o) [fill=blue,inner sep = 3pt]at (5.7,2.1){};
            \node(p) [circle, fill,inner sep = 1.5pt]at (6,3.2){};

            \draw[dashed,line width = 0.2mm](a)--(b);
            \draw[dashed,line width = 0.2mm](a)--(h);
            \draw[dashed,line width = 0.2mm](a)--(e);
            \draw[dashed,line width = 0.2mm](b)--(c);
            \draw[dashed,line width = 0.2mm](b)--(f);
            \draw[dashed,line width = 0.2mm](b)--(h);
            \draw[dashed,line width = 0.2mm](c)--(d);
            \draw[dashed,line width = 0.2mm](c)--(f);
            \draw[dashed,line width = 0.2mm](c)--(g);
            \draw[dashed,line width = 0.2mm](d)--(g);
            \draw[dashed,line width = 0.2mm](d)--(i);
            \draw[dashed,line width = 0.2mm](e)--(h);
            \draw[dashed,line width = 0.2mm](e)--(j);
            \draw[dashed,line width = 0.2mm](f)--(h);
            \draw[dashed,line width = 0.2mm](g)--(f);
            \draw[dashed,line width = 0.2mm](g)--(i);
            \draw[dashed,line width = 0.2mm](h)--(k);
            \draw[dashed,line width = 0.2mm](h)--(o);
            \draw[dashed,line width = 0.2mm](i)--(l);
            \draw[dashed,line width = 0.2mm](i)--(m);
            \draw[dashed,line width = 0.2mm](j)--(k);
            \draw[dashed,line width = 0.2mm](j)--(n);
            \draw[dashed,line width = 0.2mm](k)--(n);
            \draw[dashed,line width = 0.2mm](k)--(o);
            \draw[dashed,line width = 0.2mm](l)--(o);
            \draw[dashed,line width = 0.2mm](l)--(p);
            \draw[dashed,line width = 0.2mm](l)--(m);
            \draw[dashed,line width = 0.2mm](m)--(p);
            \draw[dashed,line width = 0.2mm](n)--(o);
            \draw[dashed,line width = 0.2mm](o)--(p);

            \draw[thick,line width = 0.5mm](a)--(b)--(f)--(g)--(i)--(l)--(m)--(p)--(o)--(h)--(a);

            \filldraw[fill=gray, opacity =0.2] (3.1,1)--(3.1,2.4)--(4.1,3.4)--(5.4,3.2)--(5.7,2.1)--(4.2,0.1)--cycle;

            \node(lbl) at (7,1){$\to$};
            
            \end{tikzpicture}
                \begin{tikzpicture}[scale =0.85]
            \node(a)[circle, fill, inner sep = 1.5pt] at (0,0){};
            \node(b) [circle,fill, inner sep = 1.5pt]at (1,1.1){};
            \node(c) [circle, fill,inner sep = 1.5pt]at (1.2,2.2){};
            \node(d) [circle, fill,inner sep = 1.5pt]at (2,2.8){};
            \node(e) [circle, fill,inner sep = 1.5pt]at (3.1,-1){};
            \node(f) [fill=orange,inner sep = 3pt]at (3.1,1){};
            \node(g) [circle, fill,inner sep = 1.5pt]at (3.1,2.4){};
            \node(h) [ fill=blue,inner sep = 3pt]at (4.2,0.1){};
            \node(i) [circle, fill,inner sep = 1.5pt]at (4.1,3.4){};
            \node(j) [circle,fill, inner sep = 1.5pt]at (5.2,-1.1){};
            \node(k) [circle, fill,inner sep = 1.5pt]at (5,0.1){};
            \node(l) [fill=orange, inner sep = 3pt]at (5.4,3.2){};
            \node(m) [circle, fill,inner sep = 1.5pt]at (6,4){};
            \node(n) [circle, fill,inner sep = 1.5pt]at (6.2,0.2){};
            \node(o) [fill=blue,inner sep = 3pt]at (5.7,2.1){};
            \node(p) [circle, fill,inner sep = 1.5pt]at (6,3.2){};

            \draw[dashed,line width = 0.2mm](a)--(b);
            \draw[dashed,line width = 0.2mm](a)--(h);
            \draw[dashed,line width = 0.2mm](a)--(e);
            \draw[dashed,line width = 0.2mm](b)--(c);
            \draw[dashed,line width = 0.2mm](b)--(f);
            \draw[dashed,line width = 0.2mm](b)--(h);
            \draw[dashed,line width = 0.2mm](c)--(d);
            \draw[dashed,line width = 0.2mm](c)--(f);
            \draw[dashed,line width = 0.2mm](c)--(g);
            \draw[dashed,line width = 0.2mm](d)--(g);
            \draw[dashed,line width = 0.2mm](d)--(i);
            \draw[dashed,line width = 0.2mm](e)--(h);
            \draw[dashed,line width = 0.2mm](e)--(j);
            \draw[dashed,line width = 0.2mm](f)--(h);
            \draw[dashed,line width = 0.2mm](g)--(f);
            \draw[dashed,line width = 0.2mm](g)--(i);
            \draw[dashed,line width = 0.2mm](h)--(k);
            \draw[dashed,line width = 0.2mm](h)--(o);
            \draw[dashed,line width = 0.2mm](i)--(l);
            \draw[dashed,line width = 0.2mm](i)--(m);
            \draw[dashed,line width = 0.2mm](j)--(k);
            \draw[dashed,line width = 0.2mm](j)--(n);
            \draw[dashed,line width = 0.2mm](k)--(n);
            \draw[dashed,line width = 0.2mm](k)--(o);
            \draw[dashed,line width = 0.2mm](l)--(o);
            \draw[dashed,line width = 0.2mm](l)--(p);
            \draw[dashed,line width = 0.2mm](l)--(m);
            \draw[dashed,line width = 0.2mm](m)--(p);
            \draw[dashed,line width = 0.2mm](n)--(o);
            \draw[dashed,line width = 0.2mm](o)--(p);

            \filldraw[fill=gray, opacity =0.2] (3.1,1)--(3.1,2.4)--(4.1,3.4)--(5.4,3.2)--(5.7,2.1)--(4.2,0.1)--cycle;

            \draw[thick,line width = 0.5mm](a)--(b)--(f)--(h)--(a);
            \draw[thick,line width =0.5mm](l)--(m)--(p)--(o)--(l);
            
            \end{tikzpicture}
    \caption{An irreversible single face rewiring for a face $f$ (colored in gray) of the planar graph $G$. All edges of $G$ are shown. If they are included in the subgraph $s_1$ on the left or $s_2$ on the right, they are filled in. If they are not included, they are dashed. Departure vertices of the face $f$ are shown as black squares. The resampling of edges from $s_1$ to $s_2$ satisfies Locality, Closure, and Connectivity, and thus it is a single face rewiring. Note that the orange departure vertices become disconnected by a path along the face $f$ and become entirely disconnected in $s_2$. The same is \textit{not} true of the reverse direction. To resample edges from $s_2$ to $s_1$, either pair of blue and orange departure vertices becomes disconnected by a path along the face $f$, but both are still connected in $s_1$.}
    \label{fig:SFR2}
\end{figure}

It is important to note that the connectivity condition means that not every single face rewiring is reversible, and so, if we imagine every element of $\mathcal{SC}_G$ as a vertex in a graph and put an edge from one vertex to another if we can get from one to the other by a single face rewiring, the result is a \textit{directed} graph. However, this directed graph is enough to define a quasimetric space. 

\begin{definition}[Bridge-free subgraph space $(\mathcal{SC}_G,d_{\mathcal{SC}}(\cdot,\cdot))$]
The bridge-free subgraph space is the set of all bridge-free subgraphs of $G$ equipped with the metric induced by the directed adjacency of single face rewirings. In particular $d_{\mathcal{SC}}(s_1,s_2)$ is the minimum number of SFRs required to get from $s_1$ to $s_2$. 
\end{definition}
This is a quasimetric exactly because the connectivity condition prevents every adjacency from being symmetric. However, the quasimetric is sufficient for our purposes.

\subsection{Relation between the strategy space and the subgraph space}

For readers unfamiliar with the graphical dual, the standard results which are used in this result are described in \cite{NishizekiPlanarGraphs}. The most important point is that for a planar graph $G$ (and a particular planar embedding), each face of $G$ corresponds to exactly 1 vertex in $G^*$, and each edge in $G$ which separates two faces corresponds to the unique edge in $G^*$ which connects the two vertices corresponding to those faces. This implies that there is a bijection between edges in $G$ and edges in $G^*$, a bijection between faces in $G$ and vertices in $G^*$, and a bijection between vertices of $G$ and faces of $G^*$ for any particular planar embedding of $G$. To ensure that the bijective relationship still holds even without determining a planar embedding, we need only require that $G$ is 3-edge connected and thus has a unique dual. We will assume, for the following, that a particular planar embedding of $G$ is determined, but that assumption can be relaxed if the graph is 3-edge connected. 

For planar graphs, there is a standard relationship between a vertex partition of a planar graph $G$ and a bridge-free subgraph of the dual $G*$. This relationship is easily described through standard facts about the graphical dual. The relationship between edges of $G$ and edges of $G^*$ means that we can relate the cut set of a partition to a subset of edges in $G^*$. Let $\Psi$ be a relation between vertex partitions of $G$ and subgraphs of $G^*$ so that $\Psi P$ is the subgraph which is spanned by the edges of $G^*$ corresponding to the edges of $G$ included in the cut set of $P$. This is the standard relation to show the isomorphism between the cut space of $G$ and the cycle space of $G^*$. Indeed, the set of partitions is a generalization of the cutspace (which ruins the algebraic properties of the space), and the subgraph space is similarly a generalization of the cyclespace without the algebraic properties. 

This relation $\Psi$, together with the function $\Phi$, will make a quasiisometry between $\mathcal{A}_G$ and $\mathcal{SC}_{G^*}$. The main result of this section is to use that quasiisometry to show that subgraphs that locally minimize their size in the subgraph space are associated with Nash equilibria (Theorem \ref{thm:minimalSubgraphs}). The first step is to show that $\Psi$ is a surjection and determine how to restrict the domain so that it is a bijection.

\begin{lemma}\label{lem:surjectivePsi}
    $\Psi:\mathcal{Q}_G\to \mathcal{SC}_{G^*}$ is a surjection
\end{lemma}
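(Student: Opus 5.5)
The plan is to show that every bridge-free subgraph $s$ of $G^*$ is the image under $\Psi$ of some vertex partition of $G$. Given $s\in\mathcal{SC}_{G^*}$, let $E_s\subseteq E(G)$ be the set of edges of $G$ dual to the edges of $s$. I will define a partition $P_s$ of $V$ as the connected components of the graph $G\setminus E_s = (V, E\setminus E_s)$. The goal is then to verify that $\Psi P_s = s$. By definition this means checking that the cut set of $P_s$ is exactly $E_s$. Two inclusions are needed.

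\textbf{First inclusion: the cut set of $P_s$ lies in $E_s$.} If $e=vw\notin E_s$, then $e$ is an edge of $G\setminus E_s$. So $v$ and $w$ are in the same component, and $e$ is not cut.

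\textbf{Second inclusion: every edge of $E_s$ is cut.} This direction is the main obstacle. I must show that for each $e=vw\in E_s$, the endpoints $v$ and $w$ lie in different components of $G\setminus E_s$. Here bridge-freeness enters, through the standard duality between cuts and cycles (\cite{NishizekiPlanarGraphs}). Since $s$ has no bridges, every edge $e^*$ of $s$ lies on a cycle $Z\subseteq s$. A cycle in $G^*$ is a simple closed curve in the plane. It separates the faces of $G^*$, hence the vertices of $G$, into an inside and an outside. The edges of $G$ crossing $Z$ are exactly the duals of the edges of $Z$, which form a minimal edge cut (bond) of $G$ containing $e$.

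Any path from $v$ to $w$ in $G$ must cross $Z$, since $v$ and $w$ are on opposite sides of $Z$; the edge $e$ itself crosses $Z$. Every crossing edge is dual to an edge of $Z\subseteq s$, so it lies in $E_s$. Therefore no $v$--$w$ path survives in $G\setminus E_s$, and $e$ belongs to the cut set of $P_s$.

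Combining the two inclusions, the cut set of $P_s$ equals $E_s$, so $\Psi P_s=s$ and $\Psi$ is surjective. The one care point is the Jordan-curve separation step in the second inclusion. I would cite the cut-space/cycle-space isomorphism for a fixed embedding, noting that loops or parallel edges in $G^*$ (coming from bridges or degree-2 vertices of $G$) are handled identically, since a loop is itself a cycle. I would also remark in passing why bridge-freeness is necessary. A bridge $e^*$ of $s$ lies on no cycle of $s$, and the same argument run in reverse shows that its dual $e$ could never be cut while the other edges of $s$ are cut. This explains why the codomain is $\mathcal{SC}_{G^*}$ rather than all subgraphs.
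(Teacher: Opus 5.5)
Your surjectivity argument is correct, but the proposal proves only half of the lemma. $\Psi$ was introduced as a relation from partitions to arbitrary subgraphs of $G^*$. So the statement $\Psi:\mathcal{Q}_G\to\mathcal{SC}_{G^*}$ also asserts that $\Psi P$ is bridge-free for \emph{every} partition $P\in\mathcal{Q}_G$, including partitions with disconnected parts. The paper proves this first, and this is the half invoked later for the Closure condition in Lemma \ref{lem:MetricSimilarity} (``the image of $\Psi$ is always bridge-free'').

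Your closing remark points at it, but ``the same argument run in reverse'' does not supply it. You showed that an edge of $s$ lying on a cycle of $s$ is dual to a cut edge of $P_s$. What is needed is the converse direction for an arbitrary $P$: no edge of $\Psi P$ is a bridge. That needs its own input, and either of the following works.
\begin{itemize}
\item The paper's route is algebraic. $Cs(P)$ is the union of the cut sets of the bipartitions $\{P_i,P_i^c\}$. Each of these is a cut of $G$, hence dual to an element of the cycle space of $G^*$, i.e.\ an edge-disjoint union of cycles. So every edge of $\Psi P$ lies on a cycle inside $\Psi P$.
\item Alternatively, argue topologically. If $e^*$ were a bridge of $\Psi P$, the same face of $\Psi P$ would lie on both sides of $e^*$. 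Then the endpoints of $e$ could be joined by a walk in $G$ crossing no edge of $\Psi P$, i.e.\ avoiding $Cs(P)$. They would lie in the same part, contradicting $e\in Cs(P)$.
\end{itemize}

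For the surjectivity half, your route differs from the paper's and is worth keeping. The paper takes as parts the vertex sets of $G$ lying in the faces of $s$, described through ``face cycles''. It then uses that each edge of $s$ has different faces of $s$ on its two sides. Your parts, the components of $G\setminus E_s$, are in fact the same partition: two vertices of $G$ lie in the same face of $s$ exactly when some path of $G$ between them crosses no edge of $s$. Your verification, however, needs only that each edge of $s$ lies on some cycle of $s$, plus the Jordan curve theorem. This avoids the paper's claim that every face of a bridge-free $s$ is bounded by a simple cycle. That claim fails, for instance, for the annular face between two nested vertex-disjoint cycles.
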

\begin{proof}
    Consider any vertex partition $P\in\mathcal{Q}_G$ and write it as $\{P_1,P_2,...,P_n\}.$ Let $P_{-i}$ be the partition made from $\{P_i, P_i^c\}$. This is a partition into two parts, and so the cutset of $P_i$ is a cut of the graph $G$. It is a standard result from planar graph theory \citep{FouldsGraphTheory,NishizekiPlanarGraphs,Gross2005} that a cut of a planar graph corresponds to an element of the cycle space because the cutspace and the cyclespace are isomorphic vector spaces. The isomorphism between the two spaces is defined exactly like $\Psi$, so that for every edge $e$ in the cut of $G$, its dual counterpart, $e^*$, is part of the cycle in $G^*$.

    Because of this correspondence, it is obvious that $\Psi P_{-i}$ is in $\mathcal{SC}_{G^*}$. Notice now that 
    \[\Psi P=\bigcup_{i=1}^n \Psi P_{-i}\]
    because the cutset of $P$ is the union of cutsets of $P_{-i}$. Thus $\Psi P$ is the union of cycles and thus necessarily bridge-free. Thus we have shown that $\Psi:\mathcal{Q}_G\to \mathcal{SC}_{G^*}$ is well defined.

    To show that it is a surjection, consider any bridge-free subgraph $s$ of $G^*$. For a planar embedding of $G$ and a corresponding planar embedding of $G^*$, we can enumerate the faces of $s$ and write the set of face cycles of $s$ as $F_s:=\{c_1,...,c_m\}$. Each cycle in $F_s$ is a Jordan curve on the plane that separates a face of $s$ from all the other faces of $s$. Each face of $s$ is a collection of faces of $G^*$, and each face in $G^*$ corresponds to a vertex in $G$. Thus, each face in $s$ corresponds to a collection of vertices in $G$. Let $P_i$ be the collection of vertices in $G$ which correspond to the face surrounded by $c_i$.

    We must note two things about these cycles. First, because $s$ is bridge-free, every face is indeed surrounded by a simple cycle. This is to say that there is no edge in $s$ which is incident to the same face on both sides. The other is that each $c_i$ is ``minimal". This means that if we call the face surrounded by $c_i$ the ``inside" of $c_i$, then it is the smallest cycle which separates any point on the inside from every point on the outside. If this were not the case, then $c_i$ would surround at least two faces of $s$.

    Using these face cycles, we are left with a set of sets of vertices $P:=\{P_1,...,P_m\}$. Note that every vertex is included in a set and no vertex is included in more than one set, so $P$ is a vertex partition. Now we must show that $\Psi P=s$. Observe that if an edge $e^*\in \Psi P_{-i}\subset \Psi P$, then the corresponding edge $e$ connects a vertex in $P_i$ and a vertex not in $P_i$. These two vertices correspond to faces of $G^*$ which were not included in the same face of $s$. The faces are adjacent in $G^*$ because they share an incident edge $e^*$, so that edge $e^*$ must be included in $s$ to ensure the two faces of $G^*$ are not included in the same face of $s$. What we have just shown is that $\Psi P\subset s$. By construction, every edge of $s$ separates two faces of $G^*$ which are not in the same face of $s$. Thus, the corresponding vertices are not in the same part of the partition $P$. This means that every edge in $s$ corresponds to an edge which is in the cutset of the partition $P$ and so $s\subset \Psi P$.
    
     Thus, we have shown that for every $s\in \mathcal{SC}_{G^*}$ there must be a $P\in \mathcal{Q}_G$ so that $\Psi P=s$.
\end{proof}

\begin{lemma}\label{lem:bijectivePsi}
    Let $\mathcal{QC}_G$ be the set of partitions of $G$ whose parts are connected. $\Psi:\mathcal{QC}_G\to\mathcal{SC}_{G^*}$ is a bijection. 
\end{lemma}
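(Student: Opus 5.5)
Restricting $\Psi$ to $\mathcal{QC}_G$, I need surjectivity onto $\mathcal{SC}_{G^*}$ and injectivity; for both I will lean on the construction in the proof of Lemma \ref{lem:surjectivePsi}.

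\textbf{Surjectivity.} Given $s\in\mathcal{SC}_{G^*}$, take the partition $P=\{P_1,\dots,P_m\}$ built there from the faces of $s$, so $\Psi P=s$. It remains to check that each $P_i$ is connected in $G$. Let $R_i$ be the face of $s$ bounded by $c_i$; it is an open connected region of the plane. The vertices of $P_i$ correspond to the faces of $G^*$ contained in $R_i$.

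Two such faces of $G^*$ that share an edge $e^*$ not in $s$ give adjacent vertices of $G$ joined by $e$. So it suffices that the dual faces inside $R_i$ are connected through edges of $G^*$ not in $s$. This holds because $R_i$ is path-connected, and a generic path between two dual faces in $R_i$ crosses only edges of $G^*$ interior to $R_i$, none of which lie in $s$. Hence $P\in\mathcal{QC}_G$.

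\textbf{Injectivity.} Suppose $P,P'\in\mathcal{QC}_G$ with $\Psi P=\Psi P'=s$. Then $P$ and $P'$ have the same cut set $K\subseteq E(G)$, namely the primal edges dual to $s$. For a partition with connected parts, the parts are exactly the connected components of $G\setminus K$:
\begin{itemize}
\item each part is connected using only non-cut edges, so it lies inside a single component;
\item no non-cut edge joins two different parts, so no component meets two parts.
\end{itemize}
Thus $P$ is determined by $K$, which gives $P=P'$. Without connectivity this fails, since a disconnected part could be split without changing the cut set; this is exactly why the restriction to $\mathcal{QC}_G$ is needed.

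\textbf{Main obstacle.} The delicate step is the topological claim in the surjectivity argument, that connectivity of the region $R_i$ yields connectivity of $P_i$ in $G$. I would make it rigorous by a discrete argument rather than Jordan-curve reasoning. Consider the graph $H$ on faces of $G^*$, with two faces adjacent when they share an edge of $G^*$ not in $s$; $H$ is the subgraph of $G$ with edges $E(G)\setminus K$. By definition the faces of $s$ are the unions of dual faces obtained by merging across non-$s$ edges. The face sets of $s$ are therefore the components of $H$, so the $P_i$ are exactly those components and are connected. Using this as the definition of the partition from $s$ makes both directions essentially combinatorial.
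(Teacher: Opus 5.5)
Your proposal is correct. For injectivity you follow essentially the paper's argument. The paper picks $v,w$ in one part of $P$ but in different parts of $Q$, and notes that a path joining them inside that part of $P$ must use an edge of $Cs(Q)$ that is not in $Cs(P)$. Your observation that the parts of a connected-part partition are exactly the components of $G-K$ is the same fact stated globally.

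For surjectivity you take a genuinely different route. The paper argues by contradiction: if the vertex set $Q$ enclosed by a face cycle $c_1$ induced a disconnected subgraph, the cut $\{Q_1,Q_1^c\}$ of one component would dualize to a cycle $c_2$ built from edges of $c_1$, contradicting the ``minimality'' of $c_1$. You argue directly instead. A general-position path inside a face of $s$ crosses only edges of $G^*$ not in $s$, and each crossing is a primal edge between two vertices of the same part, so the faces of $s$ correspond exactly to the components of $G-K$.

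Your route gives an explicit description of the preimage of $s$ in $\mathcal{QC}_G$, namely the components of $G-K$, and this one description serves both halves of the proof. It also does not need each face of $s$ to be bounded by a single simple cycle. The paper's minimality argument leans on that property, but it fails when $s$ is disconnected (two nested disjoint cycles bound an annular face) or has a cut vertex. In exchange, the paper's route stays inside the Jordan-curve language already set up in Lemma \ref{lem:surjectivePsi}.

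Two small caveats. First, the identification of faces of $s$ with components of $H$ is not true ``by definition''; it is exactly your path-crossing argument. Second, if you take the components of $G-K$ as the definition of the partition (to make things purely combinatorial), you still owe $\Psi P=s$. That is, you must show no edge $e\in K$ has both endpoints in one component of $G-K$, and this is where bridge-freeness enters. Such an $e$ would close a cycle of $G$ meeting $K$ only in $e$. Its dual is a cut of $G^*$ meeting $s$ only in $e^*$. But $e^*$ lies on a cycle of $s$, and a cycle meets a cut in an even number of edges, which is a contradiction.
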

\begin{proof}
    First, we will use the construction from Lemma \ref{lem:surjectivePsi} to show that there is a partition in $\mathcal{QC}_G$ in the preimage of any $s\in \mathcal{SC}_{G^*}$. 
    Let $sp(P_i)$ be the subgraph of $G$ with the vertices from $P_i$ and the edges which connects two vertices in $P_i$.

    consider a minimal cycle $c_1$ in $s\subset G^*$ that encloses a set of vertices $Q$ in $G$ so that $sp(Q)$ at least two connected components. Take one component, $Q_1$, and consider the vertex partition $\{Q_1,Q_1^c\}$. The cut that forms this partition is mapped to a cycle $c_2$ in $G^*$ by $\Psi$. Note that every edge of $c_2$ is also included in $c_1$, and $c_2$ is a cycle, so $c_2$ separates a point from inside $c_1$ from every point outside $c_1$, and thus  $c_1$ was not minimal. Thus, we have shown by contradiction that the sets $P_i$ constructed in Lemma \ref{lem:surjectivePsi} are connected. Therefore, for every element $\mathcal{SC}_{G^*}$ there is a partition in $\mathcal{QC}_G$ in its preimage under $\Psi$ and so $\Psi$ is still a surjection with the restricted domain.

    To show that it is injective, consider two partitions $P,Q\in \mathcal{QC}_{G}$. First, note that a partition with connected parts is uniquely determined by its cutset. If $P\neq Q$, then without loss of generality, there is a pair of vertices, $v$ and $w$, which are in the same part of $P$ but not in the same part of $Q$. In this way, suppose $v,w\in P_1$. Thus, there is a path from $v$ to $w$ in $P_1$ (i.e., every edge of the path is in the same part of $P$). There is at least one edge in that path that connects two vertices in different parts of $Q$, so there is an edge which was not in the cutset of $P$ but is in the cutset of $Q$. Because of the bijective mapping of edges in $G$ to edges in $G^*$, it is clear that if $\Psi P=\Psi Q$ then the cutsets of $P$ and $Q$ are identical and thus $P$ and $Q$ are the same partition if they are both in $\mathcal{QC}_G$. This proves that $\Psi:\mathcal{QC}_{G}\to \mathcal{SC}_{G^*}$ is bijective. 
\end{proof}

    Having shown this, we know that we have a surjection from the strategy space to the subgraph space, $\Psi\circ\Phi$. Now we can use this function to translate the potential game results from Corollary \ref{cor:mincutPotential} to the subgraph space
\begin{lemma}\label{lem:Potentialissize}
    Let $\mathcal{V}(s)=\frac{-1}{2}size(s)$ where $size(s)$ is the number of edges in the subgraph $s$. With this construction, $\mathcal{W}(u)=\mathcal{V}(\Psi\circ\Phi u)$ for all $u\in \mathcal{A}$ 
\end{lemma}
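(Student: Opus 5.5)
The plan is to reduce the statement to a counting identity between the cut set of the partition $\Phi u$ in $G$ and the edge set of the dual subgraph $\Psi\circ\Phi u$ in $G^*$. Then I will invoke Corollary \ref{cor:mincutPotential} in the normalization the paper adopts when it writes $\mathcal{W}=-\frac{1}{2}E(\Phi(u))$. So the proof has two steps. First I fix the meaning of $\mathcal{W}$ and of $E$. Second I show $E(\Phi u)=size(\Psi\circ\Phi u)$ for every $u\in\mathcal{A}$. Once both are in place, $\mathcal{W}(u)=-\frac{1}{2}E(\Phi u)=-\frac{1}{2}size(\Psi\circ\Phi u)=\mathcal{V}(\Psi\circ\Phi u)$, which is exactly the statement.

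For the first step I take the potential in the form the paper uses from the end of Section \ref{sec:paritioning} onward: $\mathcal{W}(u)=-\frac{1}{2}E(\Phi u)$. Here $E(\Phi u)$ is the min-cut objective, read as the number of edges of $G$ joining vertices in different parts of $\Phi u$, in line with the verbal definition of the min-cut partition as the one "which requires the fewest number of edges". Corollary \ref{cor:mincutPotential} guarantees that this is an exact potential; it differs from the expression in Lemma \ref{lem:potentialFunction} only by an additive constant. Because $\Phi u$ depends only on the equivalence class, this is well defined on $\mathcal{A}$. I will state this normalization explicitly, since the identity is claimed as an equality and not merely up to constants.

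For the second step, write $D(P)\subseteq E(G)$ for the cut set of a partition $P$. By definition, $\Psi P$ is the subgraph of $G^*$ spanned by $\{e^*: e\in D(P)\}$, and $e\mapsto e^*$ is a bijection $E(G)\to E(G^*)$ for the fixed embedding. Hence $|E(\Psi P)|=|D(P)|$. Lemma \ref{lem:surjectivePsi} shows $\Psi P\in\mathcal{SC}_{G^*}$, so $size$ is evaluated on a legitimate element of the subgraph space. Applying this with $P=\Phi u$ gives $size(\Psi\circ\Phi u)=|D(\Phi u)|=E(\Phi u)$.

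The only delicate point is bookkeeping, namely making sure each cut edge is counted once on both sides. The double sum $\sum_{v,w}W_{vw}(1-\delta(u_v,u_w))$ in \eqref{eq:mincut} runs over ordered pairs and so counts every cut edge twice. I will therefore check that the convention in the corollary (one count per unordered edge) is the one in force. If it were the ordered-pair count, the same argument would give $\mathcal{W}=-\frac{1}{4}$ times that sum, which is again $-\frac{1}{2}size$. Either way, the edge bijection yields the stated identity once the single-count convention for cut edges is fixed.
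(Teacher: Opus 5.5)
Your argument is the paper's: the edge bijection $e\mapsto e^*$ gives $size(\Psi P)=|D(P)|$, and Corollary \ref{cor:mincutPotential} is then used to convert the cut count into $\mathcal{W}$. The counting step is correct. The gap is in the normalization step. It cannot be closed by choosing a convention for $E$, because the identity you are aiming at is false for the $\mathcal{W}$ the paper defines.

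Lemma \ref{lem:potentialFunction} fixes $\mathcal{W}(u)=\frac12\sum_{v,w}W_{vw}\delta(u_v,u_w)$. This is the number of monochromatic edges, so $\mathcal{W}(u)=m-|D(\Phi u)|$. With \eqref{eq:mincut} as written, $E(\Phi u)=2|D(\Phi u)|$, and the corollary gives $-\frac12E(\Phi u)=\mathcal{W}(u)-m=-|D(\Phi u)|$.

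\begin{itemize}
\item Under your single-count reading, $-\frac12E=-\frac12|D|$. This differs from $\mathcal{W}$ by $m-\frac12|D|$, which is not constant. A unilateral switch of $x$ from $s$ to $r$ changes $-\frac12|D|$ by only $\frac12\bigl(w_x(r|u_{-x})-w_x(s|u_{-x})\bigr)$. So it is not an exact potential, and the corollary does not certify it as one.
\item Under the ordered-pair reading, nothing supports $\mathcal{W}=-\frac14E$; the corollary gives $-\frac12E$.
\end{itemize}

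Exact potentials are unique up to an additive constant, so no convention makes $-\frac12\,size(\Psi\circ\Phi u)$ equal to $\mathcal{W}(u)$. A concrete counterexample is the consensus profile on any graph with $m\ge 1$ edges: $\mathcal{W}(u)=m$, while $\Psi\circ\Phi u$ is empty and $\mathcal{V}=0$.

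The paper's own proof makes the same slip. It says $E(P)$ ``enumerates the cutset'' and equates $-\frac12E(\Phi u)$ with $\mathcal{W}(u)$. So you have reproduced its argument faithfully, but you should not adjust the constants to force the stated equality. What your counting step actually proves is
\[
\mathcal{W}(u)=m-size(\Psi\circ\Phi u)=m+2\,\mathcal{V}(\Psi\circ\Phi u).
\]
This is an increasing affine relation between $\mathcal{W}$ and $\mathcal{V}$. It is all that Corollaries \ref{cor:equalpotential} and \ref{cor:minimizingsize} and Theorem \ref{thm:minimalSubgraphs} actually use, so you should state and prove this version instead.
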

\begin{proof}
    For a partition $P\in \mathcal{Q}_{G}$, $E(P)= 2m -\sum_{v,w}W_{v,w}\delta(P_v,P_w)$, Where $\delta(P_v,P_w)$ returns a 1 if $v$ and $w$ are in the same part of the partition and 0 otherwise. Naturally, $E(P)$ enumerates the cutset of the partition $P$.  Because every cut edge in $P$ is an edge in the subgraph $\Psi P\subset G^*$, it is clear that $E(P)=size(\Phi P)$ for all $P\in \mathcal{Q}_G$. Therefore, for any $u\in \mathcal{A}$, \[\mathcal{V}(\Psi\circ\Phi u)=\frac{-1}{2}size(\Psi\circ\Phi u)=\frac{-1}{2}E(\Phi u)=\mathcal{W}(u)\] 
\end{proof}

This result has two important corollaries.

\begin{corollary}\label{cor:equalpotential}
    If $u,u'\in \mathcal{A}$ and $\Psi\circ\Phi u = \Psi\circ\Phi u' =s$ then $\mathcal{W}(u)=\mathcal{W}(u')$
\end{corollary}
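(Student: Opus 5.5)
This corollary follows directly from Lemma \ref{lem:Potentialissize}. That lemma gives the identity $\mathcal{W}(u)=\mathcal{V}(\Psi\circ\Phi u)$ for every $u\in\mathcal{A}$. I will apply it separately to $u$ and to $u'$. Since both have the same image $s$ under $\Psi\circ\Phi$, this gives
\[\mathcal{W}(u)=\mathcal{V}(\Psi\circ\Phi u)=\mathcal{V}(s)=\mathcal{V}(\Psi\circ\Phi u')=\mathcal{W}(u').\]
The only point to check is that $\mathcal{V}$ is a genuine function of the subgraph $s$. This holds because $\mathcal{V}(s)=-\tfrac12\,size(s)$ depends only on the edge set of $s$, and not on which partition produced it.

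The one subtlety is that $\Psi$ is merely surjective on $\mathcal{Q}_G$ (Lemma \ref{lem:surjectivePsi}), not injective. So $\Phi u$ and $\Phi u'$ may be different partitions, for instance partitions with disconnected parts that have the same cutset. This does not affect the argument: the identity in Lemma \ref{lem:Potentialissize} was established for all $u\in\mathcal{A}$, with no restriction to partitions in $\mathcal{QC}_G$. The reason is that $E(P)$ counts exactly the cut edges of $P$, and these are in bijection with the edges of $\Psi P$.

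As a sanity check, I would also give a direct argument. Equal images under $\Psi$ mean the two partitions have identical cutsets, because edges of $G$ and edges of $G^*$ correspond bijectively. Hence $E(\Phi u)=E(\Phi u')$, and Corollary \ref{cor:mincutPotential} then gives $\mathcal{W}(u)=\mathcal{W}(u')$. I expect no real obstacle here; the only care needed is to note that the non-injectivity of $\Psi$ does not matter.
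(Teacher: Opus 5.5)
Your proof is correct and is the paper's argument: apply Lemma~\ref{lem:Potentialissize} to $u$ and $u'$, and note that $\mathcal{V}$ depends only on the common image $s$. Your cutset sanity check is the more robust version, because the constants in that lemma are off; the correct relation is $\mathcal{W}(u)=m-size(\Psi\circ\Phi u)$ (consensus gives $\mathcal{W}=m$ but $\mathcal{V}=0$), and all the corollary actually needs is that $\mathcal{W}(u)$ is determined by the cutset of $\Phi u$.
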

\begin{proof}
    This follows directly from the direct relationship between size and potential in Lemma \ref{lem:Potentialissize}.
\end{proof}

\begin{corollary}\label{cor:minimizingsize}
    If $G$ is a planar graph (with a determined planar embedding) and $G^*$ is its dual, and if $u\in \mathcal{A}$ is a Nash equilibrium, then $size(\Psi\circ\Phi u)\leq size(\Psi\circ\Phi u')$ for all $u'\in \Gamma_{\mathcal{A}}(u).$ 
\end{corollary}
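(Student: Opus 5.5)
The plan is to combine the characterization of Nash equilibria as local maximizers of the potential on the reduced strategy space with Lemma \ref{lem:Potentialissize}, which converts potential into (negative half) subgraph size. The argument needs no structural facts about the subgraph space itself. It only composes the map $\Psi\circ\Phi$ with the potential identity, so it should be a short chain of equalities and one inequality.

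\textbf{Step 1.} Since $u\in\mathcal{A}$ is a Nash equilibrium, I will invoke the characterization from the end of Section \ref{sec:paritioning}: $\mathcal{W}(u)\geq\mathcal{W}(u')$ for every $u'\in\Gamma_{\mathcal{A}}(u)$. This characterization holds because $\mathcal{W}$ is an exact potential (Lemma \ref{lem:potentialFunction}).

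It is worth justifying this briefly. If $d_{\mathcal{A}}(u,u')=1$, then there are representatives differing only in one player $x$'s strategy, with $x$ switching from $s$ to $r$. By the exact potential identity \eqref{eq:PotentialDef}, $\mathcal{W}(u)-\mathcal{W}(u')=w_x(s|u_{-x})-w_x(r|u_{-x})$. This difference is nonnegative because $s$ is a best response for $x$ at a Nash equilibrium. Corollary \ref{cor:equalpotential}, or simply the invariance of $\mathcal{W}$ under strategy permutation, ensures the particular choice of representative does not matter. The case $u'=u$ is trivial.

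\textbf{Step 2.} Apply Lemma \ref{lem:Potentialissize} to both sides, giving $\mathcal{W}(u)=-\tfrac12\,size(\Psi\circ\Phi u)$ and $\mathcal{W}(u')=-\tfrac12\,size(\Psi\circ\Phi u')$. Substituting into the inequality from Step 1 and multiplying by $-2$ reverses it, yielding $size(\Psi\circ\Phi u)\leq size(\Psi\circ\Phi u')$ for all $u'\in\Gamma_{\mathcal{A}}(u)$, as claimed.

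The only point needing care is well-definedness. The composite $\Psi\circ\Phi$ must be a genuine function on $\mathcal{A}$, which follows because $\Phi$ is a bijection onto $\mathcal{Q}_G$ and $\Psi$ is a well-defined function by Lemma \ref{lem:surjectivePsi}. The planar embedding must also be fixed, so that the dual $G^*$ and the edge bijection used in Lemma \ref{lem:Potentialissize} are determined; this is why the hypothesis of a determined embedding appears in the statement. I expect no real obstacle beyond making explicit that neighborhood in $\mathcal{A}$ means a single unilateral deviation, so that Step 1's best-response inequality applies.
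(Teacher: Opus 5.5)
Your proof is correct and is essentially the paper's argument, which simply combines the local maximality of the exact potential at a Nash equilibrium (Corollary \ref{cor:mincutPotential}) with Lemma \ref{lem:Potentialissize}; you additionally spell out the best-response step and the well-definedness of $\Psi\circ\Phi$. One small caution: read literally, the identity $\mathcal{W}(u)=-\tfrac12\,size(\Psi\circ\Phi u)$ is off by normalization. Working directly from the definitions, $\mathcal{W}(u)=m-size(\Psi\circ\Phi u)$, where $m$ is the number of edges of $G$, since $\mathcal{W}$ counts the uncut edges. Your argument only needs $\mathcal{W}$ to be a strictly decreasing affine function of size, so the inequality is unaffected.
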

\begin{proof}
    This follows directly from Corollary \ref{cor:mincutPotential} and Lemma \ref{lem:Potentialissize}.
\end{proof}

This lemma and its associated corollaries tell us that maximizing game potential in the strategy space is the same as minimizing the size of the associated subgraph. In order to reach the main result, the only thing we need to do is show that our ideas of locality can be translated from strategy space to subgraph space. In order to do that, we establish that the function $\Psi\circ\Phi:\mathcal{A}_G\to \mathcal{SC}_{G^*}$ is a quasiisometry. 

    \begin{lemma}\label{lem:MetricSimilarity}
        If $u, u'\in \mathcal{A}_G$ and $\Phi u \in \mathcal{QC}_{G}$ satisfy $d_\mathcal{A}(u,u')=1$ in $\mathcal{A}_G$, then $d_{\mathcal{SC}}(\Psi\circ\Phi u,\Psi\circ\Phi u')=1$ in $\mathcal{SC}_{G^*}$.
    \end{lemma}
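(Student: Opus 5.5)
The plan is to identify the single face of $G^*$ at which the rewiring happens, and then check the three conditions of an SFR one at a time. Write $s=\Psi\circ\Phi u$ and $r=\Psi\circ\Phi u'$.

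Since $d_\mathcal{A}(u,u')=1$, there are representatives of $u$ and $u'$ that differ only at one vertex $x$ of $G$, which moves from part $P_i$ to another part (possibly a new singleton part). The cut status of an edge $vw$ depends only on whether $u_v=u_w$, so only edges incident to $x$ can change cut status. Under the duality bijections, $x$ corresponds to a face $f_x$ of $G^*$, and the edges incident to $x$ correspond exactly to the edges on the boundary of $f_x$. Hence $s$ and $r$ agree on every edge not incident to $f_x$, which is the \textbf{Locality} condition with $f=f_x$. \textbf{Closure} comes for free: by the well-definedness part of Lemma \ref{lem:surjectivePsi}, $r=\Psi(\Phi u')$ is a union of cycles and so is bridge-free. (If the intended reading of Closure is edge-by-edge, I would note that only the endpoint $r$ needs to be bridge-free.) If $s=r$ the lemma is degenerate, and I would read the statement as concerning $s\neq r$.

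The substantive step is \textbf{Connectivity}, and this is where the hypothesis $\Phi u\in\mathcal{QC}_G$ must be used. Going around $x$ in the embedding, the boundary of $f_x$ alternates between edges dual to $x$'s neighbours. A maximal run of consecutive cut edges along $f_x$ in $s$ corresponds to a maximal angular sector of neighbours of $x$ separated from $x$'s side by the partition. The departure vertices are the dual vertices (faces of $G$ at $x$) where a cut edge of $s$ leaves $f_x$, that is, where a boundary between two parts continues away from $x$.

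Suppose, for contradiction, that two departure vertices $d_1,d_2$ are joined by a path $\alpha$ along $f_x$ in $s$, are not joined along $f_x$ in $r$, but are still joined in $r$ by some path $\beta$. Then $\beta$ can be taken to avoid the edges of $f_x$. By Locality its edges also lie in $s$. The closed curve $\alpha\cup\beta$ lies in $s$ and, by the Jordan curve theorem, separates the plane into two regions. Because some edge of $\alpha$ is removed in $r$, $x$ must have a neighbour $w$ across that edge with $u'_w=u'_x$. The part of $u$ containing $w$ (or containing $x$, depending on which side) then has vertices on both sides of $\alpha\cup\beta$, all of whose edges cross cut edges. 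That part would therefore be disconnected in $G$, contradicting $\Phi u\in\mathcal{QC}_G$.

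I expect this Jordan-curve bookkeeping to be the main obstacle: it requires tracking exactly which sectors around $x$ merge or split when $x$ changes part. My first attempt would be a case split on whether $x$ joins an existing neighbouring part or a new singleton part. In the singleton case, all edges of $f_x$ become cut edges, no along-$f_x$ connection is lost, and Connectivity holds vacuously. In the other case, I would argue sector by sector using the connectedness of $P_i$ and of the receiving part. Combining Locality, Closure and Connectivity shows that $s\to r$ is a single SFR, so $d_{\mathcal{SC}}(s,r)=1$.
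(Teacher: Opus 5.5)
Your Locality and Closure steps match the paper's. Your remark that $s=r$ can occur is a fair point the paper passes over: for example, a singleton part $\{x\}$ might move to a part containing no neighbour of $x$. The theorem only needs $d_{\mathcal{SC}}\le 1$, so this does no harm there.

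The gap is in Connectivity. From the broken arc $\alpha$ you extract a single neighbour $w$ with $u'_w=u'_x$. You then assert that the part of $w$ (or of $x$) has vertices on both sides of $\alpha\cup\beta$, but nothing you have established gives this. That $x$ and $w$ lie on opposite sides is harmless, since they are already in different parts of $u$. The part of $x$ may be just $\{x\}$, and you never produce a vertex of $w$'s part on $x$'s side. The hypothesis you have not used is that $d_1,d_2$ are joined along $f_x$ in $r$ by \emph{neither} arc. So the complementary arc $\alpha'$ also loses an edge in $r$; that edge is dual to some $xw'$ with $u'_{w'}=u'_x$, hence $u_{w'}=u_w$. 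This second neighbour is exactly the paper's $w^-$ alongside $w^+$. Without it the contradiction cannot close, and your ``sector by sector'' case split remains only a plan.

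Two further steps are unjustified as written. You claim $\beta$ can be chosen to avoid the edges of $f_x$, and you treat $\alpha\cup\beta$ as a simple closed curve, though $\beta$ may revisit vertices of $\partial f_x$. Both issues can be bypassed. Take the $\mathbb{Z}_2$-sum $c=\alpha+\beta$; it lies in the cycle space of $G^*$ and is therefore dual to a cut of $G$. The dual of $xw$ lies in $\alpha$ but not in $r$, which contains $\beta$, so it is in $c$. The dual of $xw'$ lies on $\alpha'$ and not in $r$, so it is not in $c$. Hence $w$ and $w'$ are on opposite sides of the cut. Now take the path joining $w$ to $w'$ inside their part of $u$; this is where $\Phi u\in\mathcal{QC}_G$ is used. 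It avoids $x$, so it meets only dual edges off $f_x$. The off-face edges of $c$ come from $\beta$ and lie in $s$ by Locality, whereas the path's edges are uncut in $u$. So the path never crosses $c$, which is a contradiction.

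The paper runs the direct, primal version of the same idea. It closes that path through $v$ into a cycle of vertices sharing a strategy in $u'$. The Jordan curve of this cycle meets $\partial f$ once on each arc, so it separates $a$ from $b$, and it crosses only dual edges absent from $s'$. That argument needs no hypothetical connecting path $\beta$ at all.
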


    \begin{proof}
        Given two strategy profiles $u$ and $u'$ that are adjacent in $\mathcal{A}_G$, we seek to show that we can get from $s:=\Psi\circ\Phi u$ to $s':=\Psi\circ\Phi u'$ using an SFR. It will be sufficient to show that starting from $s$ we can resample the edges and be left with $s'$ and satisfy the three conditions: \textbf{Locality, Closure} and \textbf{Connectivity}.

        \textbf{Locality}: Let $v$ be the vertex that changes its strategy in the single strategic change from $u$ to $u'$. Let $f$ be the face in $G^*$ which corresponds to $v$. For notational ease, let $Cs(P)$ be the cutset of the partition $P$. Select an edge $e$ in $G$ which is not adjacent to $v$ and call the corresponding edge in $G^*$, $e^*$.
        If $e\in Cs(\Phi u)$ then, because neither vertex adjacent to $e$ changed its strategy, $e\in sC(\Phi u')$. Likewise $e\notin Cs(\Phi u)\iff e\notin Cs(\Phi u')$. By the definition of $\Psi$, $e^*\in \Psi P\iff e\in Cs(P)$ so we get the two logical strings
        \[e^*\in s\iff e\in Cs(\Phi u)\iff e\in Cs(\Phi u') \iff e^*\in s'\]
        \[e^*\notin s\iff e\notin Cs(\Phi u)\iff e\notin Cs(\Phi u') \iff e^*\notin s'\]
        for every $e^\star$ not incident to the face $f$ (meaning $e$ is not adjacent to vertex $v$). This proves that the resampling from $s$ to $s'$ satisfies the locality condition

        \textbf{Closure}: This condition is satisfied directly by lemma \ref{lem:surjectivePsi}. The image of $\Psi$ is always bridge-free and thus, regardless of the adjacency between $u$ and $u'$, the resampling cannot have added a bridge, and so the resampling from $s$ to $s'$ satisfies the closure condition. 

        \textbf{Connectivity}: This condition is the most complicated, but it can be accomplished by taking advantage of the assumption that $\Phi u\in \mathcal{QC}_G$. Assume that two vertices $a$ and $b$ in $G^*$ are connected by a path incident to $f$ in $s$ but not in $s'$. This means that there are at least two non-adjacent edges incident to $f$ in $G^*$ that are not present in $s'$. More specifically, there are two paths from $a$ to $b$ in $G^*$ incident to $f$, and there is at least one edge incident to $f$ that is not present in $s'$ from each of those two paths. 
        
        This means that for any strategy profile in the preimage, $\Phi^{-1}\circ\Psi^{-1}(s')$. there are two vertices $w^+,w^-$ adjacent to $v$ that are using the same strategy as $v$ in $u'$. Because $w^-$ and $w^+$ did not change their strategy from $u$ to $u'$, it must also be true that they were using the same strategy in $u$. Because we assumed that $\Psi u\in \mathcal{QC}_{G}$, we know that all parts of the partition are connected and thus $w^-$ and $w^+$ are connected by a path of vertices using the same strategy. When $v$ changes its strategy to take on the same strategy as $w^-$ and $w^+$ in $u'$, there is now necessarily a cycle starting and ending at $v$ of vertices using the same strategy in $u'$. 

        From here, we rely on the planar properties of $G$ and $G^*$. The cycle of vertices in the same strategic community starting and ending at $v$ forms a Jordan curve in the plane. In the dual, that same Jordan curve starts and ends in the face $f$ and intersects every edge of $G^*$ corresponding to the edges of $G$, which defines the curve. The Jordan curve theorem tells us that the plane is separated into two parts: an ``inside" and an ``outside." 

        Observe that on any path from $a$ to $b$ in $G^*$ incident to $f$, one of the edges must be absent from $s'$ and correspond to an edge of $G$ that defines the Jordan curve. This means that the Jordan curve intersects both paths from $a$ to $b$ incident to $f$. Importantly, it must intersect both paths exactly once because the cycle in $G$ included only two edges incident to $v$, and so the Jordan curve can intersect only the two corresponding edges of $G^*.$ This implies that $a$ and $b$ cannot both be ``inside" and the cannot both be ``outside." One must be inside, and one must be outside. 

        By construction, every edge of $G^*$ that the Jordan curve intersects is \textit{not} included in $s'$. Because any path from $a$ to $b$ in $G$ must cross the Jordan curve, it is certain that $a$ and $b$ are disconnected in $s'$. Therefore, the resampling from $s$ to $s'$ satisfies the connectivity condition. 

        This means that if $u$ and $u'$ are a single strategic change away from one another, then $\Psi\circ\Phi u$ can be made into $\Psi \circ \Phi u'$ by a resampling of the edges which satisfies all three conditions of the SFR. Therefore $d_{\mathcal{SC}}(\Psi\circ\Phi u,\Psi\circ\Phi u')=1$ in $\mathcal{SG}_{G^*}$. 
        \end{proof}

    With this lemma, we are finally able to prove the main result.
    \begin{theorem}\label{thm:minimalSubgraphs}
    If $s\in \mathcal{SC}_{G^*}$ satisfies \[size(s)\leq size(r)\quad \forall r\in \Gamma_{\mathcal{SC}}(s)\]
    where $\Gamma_{\mathcal{SC}}(s):=\{t\in \mathcal{SC}_{G^*};d_{\mathcal{SC}}(s,t)\leq 1\}$,
    then there is a $u\in \Phi^{-1}\Psi^{-1}s$ so that $u$ is a Nash equilibrium. Moreover, this $u$ will satisfy $\Phi u\in \mathcal{QC}_G$ and so $u=\Phi^{-1}\circ\Psi^{-1}s$ under the domain restriction. 
    \end{theorem}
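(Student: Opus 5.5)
The plan is to pull the local minimality of $s$ back to the strategy space through the maps $\Psi$ and $\Phi$, using Lemma~\ref{lem:bijectivePsi}, Lemma~\ref{lem:Potentialissize} and Lemma~\ref{lem:MetricSimilarity}.

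\textbf{Choice of $u$.} By Lemma~\ref{lem:bijectivePsi}, there is a unique partition $P\in\mathcal{QC}_G$ with $\Psi P=s$. Take $u=\Phi^{-1}P\in\mathcal{A}_G$. Since $\Phi$ is a bijection between $\mathcal{A}_G$ and $\mathcal{Q}_G$, this gives $u\in\Phi^{-1}\Psi^{-1}s$ and $\Phi u\in\mathcal{QC}_G$. This already settles the ``moreover'' clause: under the domain restriction, $u$ is the unique preimage.

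\textbf{Nash equilibrium.} We need $\mathcal{W}(u)\geq\mathcal{W}(u')$ for every $u'\in\Gamma_{\mathcal{A}}(u)$. The case $u'=u$ is trivial, so take $u'$ with $d_{\mathcal{A}}(u,u')=1$. Because $\Phi u\in\mathcal{QC}_G$, Lemma~\ref{lem:MetricSimilarity} applies and gives $d_{\mathcal{SC}}(s,\Psi\circ\Phi u')=1$. We may use the surjective version of $\Psi$ from Lemma~\ref{lem:surjectivePsi} on $\Phi u'$, whose parts need not be connected. Hence $r:=\Psi\circ\Phi u'\in\Gamma_{\mathcal{SC}}(s)$, and the hypothesis gives $size(s)\leq size(r)$. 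Lemma~\ref{lem:Potentialissize} then yields
\[\mathcal{W}(u)=-\tfrac12 size(s)\geq -\tfrac12 size(r)=\mathcal{W}(u').\]
So $u$ is a local maximizer of $\mathcal{W}$ in $(\mathcal{A}_G,d_{\mathcal{A}})$.

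\textbf{From local maximizer to Nash equilibrium.} By the characterization following the definition of the reduced strategy space, which rests on $\mathcal{W}$ being an exact potential (Lemma~\ref{lem:potentialFunction}), a local maximizer is a Nash equilibrium. Concretely, every unilateral deviation of a single player in a representative profile of $u$ lands in some $u'\in\Gamma_{\mathcal{A}}(u)$. The change in that player's payoff equals $\mathcal{W}(u')-\mathcal{W}(u)\leq0$, so no player can profitably deviate.

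\textbf{Main obstacle.} The delicate step is the application of Lemma~\ref{lem:MetricSimilarity}, because its hypothesis requires the \emph{starting} profile to have connected parts. That is exactly why we must choose the representative $u$ with $\Phi u\in\mathcal{QC}_G$, rather than an arbitrary preimage of $s$. The lemma's direction must also be checked against the quasimetric: it produces an SFR from $s$ to $r$, and $\Gamma_{\mathcal{SC}}(s)$ is defined via $d_{\mathcal{SC}}(s,\cdot)$, so the directions match. A further subtlety is that a unilateral change may map to the same subgraph, $r=s$. That case is harmless, since $d\le1$ still holds and the inequality is then an equality.
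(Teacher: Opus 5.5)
Your proposal is correct and takes essentially the same route as the paper. Both pick the unique connected-parts preimage of $s$ via Lemma~\ref{lem:bijectivePsi}, use Lemma~\ref{lem:MetricSimilarity} to get $\Psi\circ\Phi(\Gamma_{\mathcal{A}}(u))\subseteq\Gamma_{\mathcal{SC}}(s)$, turn size-minimality into potential-maximality via Lemma~\ref{lem:Potentialissize}, and conclude with the exact-potential property. Your explicit handling of the degenerate case $\Psi\circ\Phi u'=s$ (where the lemma's literal ``$=1$'' should read ``$\leq 1$'') and your check of the quasimetric's direction are slightly more careful than the paper's write-up.
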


    \begin{proof}
    By Lemma \ref{lem:MetricSimilarity} we know that if $u$ has connected strategic communities (i.e., $u\in \Phi^{-1}\mathcal{QC}_G$) and  $d_{\mathcal{A}}(u,u')\leq 1$ in $\mathcal{A}$ then $d_{\mathcal{SC}}(\Psi\circ \Phi u,\Psi\circ\Phi u')\leq 1$. This means that for $u\in \Phi^{-1}\mathcal{QC}_G$, $\Psi\circ\Phi(\Gamma_\mathcal{A}(u))\subseteq\Gamma_{\mathcal{SC}}(\Psi\circ \Phi u)$.  Equivalently $\Gamma_\mathcal{A}(u)\subset \Phi^{-1}\circ\Psi^{-1}(\Gamma_{\mathcal{SC}}(\Psi\circ\Phi u)$ for  $u\in\Phi^{-1}\mathcal{QC}_G$.  We can complete the first part of the proof by noting that by lemma \ref{lem:Potentialissize} and corollary \ref{cor:equalpotential} if $size(\Psi\circ\Phi u)\leq size(s)$ for all $s\in \Gamma_{\mathcal{SC}}(\Psi\circ\Phi u)$ then \[\mathcal{W}(u)\geq \mathcal{W}(u')\quad \forall u'\in \Phi^{-1}\circ\Psi^{-1}(\Gamma_{\mathcal{SC}}(\Psi\circ\Phi u)).\] 

    Putting these together this means that if $u\in  \Phi^{-1}\mathcal{QC}_G$ and $size(\Psi\circ\Phi u)\leq size(s)$ for all $s\in \Gamma_{\mathcal{SC}}(\Psi\circ\Phi u)$,  then $\mathcal{W}(u)\geq \mathcal{W}(u')$ for all $u'\in \Gamma_{\mathcal{A}}(u)$. Lastly, recall that by lemma \ref{lem:bijectivePsi} for every $s\in \mathcal{SC}_{G^*}$ there is a unique $u\in \Phi^{-1}\circ\Psi^{-1}s$ such that $\Phi u\in \mathcal{QC}_G$. This implies that if $s$ minimizes size in the neighborhood $\Gamma_{\mathcal{SC}}(s)$ then there is a $u\in \Phi^{-1}\circ\Psi^{-1} s$ with $\Phi u\in \mathcal{QC}_G$ that satisfies $\mathcal{W}(u)\geq \mathcal{W}(u')$ for all $u'\in \Gamma_\mathcal{A}(u)$. Potential game theory implies that such a $u$ is a Nash equilibrium. 
    \end{proof}

    This main result tells us that given a planar embedding of a planar graph $G$, we can search for locally minimal subgraphs of the dual $G^*$ to find Nash equilibria of the coordination game and, because of the bijective relationship between $\mathcal{QC}_G$ and $\mathcal{SC}_{G^*}$ we know exactly how to construct the strategy profile which is the Nash equilibrium. This does two important things. It demonstrates that the equilibrium partition is indeed the local version of the mincut partition, and it gives us information about what these types of partitions look like. In application areas considering pattern formation in space, where a locally driven community detection may be important, this result shows that locally minimizing a global objective defined in the dual setting is an appropriate way to solve the problem. 

    Unfortunately, there are drawbacks. Primarily, the result cannot be made to work in the other direction. Because there exist paths through the subgraph space which do not reflect paths through the strategy space, finding a Nash equilibrium does not give us a way to construct a locally minimal subgraph. The requirement that the domain of $\Psi$ be restricted in order to be a bijection ensures that this direction is not true.


    This lack of a bijection between $\mathcal{A}_G$ and $\mathcal{SC}_{G^*}$ causes problems if we want to consider the system in a dynamic sense as well. If one seeks to understand the game through Synchronous Myopic Best Response, there is no way to capture that behavior in the subgraph space. Performing a Single Face Rewiring at every face synchronously results in the case where one strategic community may become disconnected at one instant and reconnected in a different location in the same instant. In this case, the outcome is not well defined without some prior decisions about how changes to the global properties of the network determine local behavior. 
    
    The mixture of local and global information is the key element that makes this game difficult to study analytically. In the dual, the global information about the connectivity of the system required in order to  ``move around" in the subgraph space prevents us from describing flows in a simple way and makes the prospect of using it to describe the game with synchronous update impossible. 

    There is, however, a reduction of the game, which has been taken by many other scholars studying this system, that allows for the dynamics in the dual to be made entirely clear. We will reduce the game to the case where there are only two strategies. This removes the need for global connectivity information in the dual because, with only two strategies, the cut set of the partition fully describes the strategy profile regardless of whether each strategic community is connected. We describe this more exactly in the following section.
    
\section{Restriction to the two strategy game}\label{sec:mbr}
    In the case where we are restricted only to two strategies, many of the issues with the general setting disappear. This is because of the vector space qualities of vertex partitions into two parts. This will allow us to recreate all of the results of the previous section with more precision. We will redefine a Single Face Rewiring and show that through the relaxed definition it forms a metric on the cycle space of a planar graph. From there, we can use the isomorphism between the cut space of a planar graph and the cycle space of its dual to show a more exact correspondence between locally minimal subgraphs and Nash equilibria as well as define a dynamic graph process, entirely in the subgraph space, which exactly recapitulates the Myopic Best Response process in the strategy space. 

    As before, we start by defining the strategy space, the reduced strategy space, and the subgraph space. The strategy space with only two strategies is called $X^{(2)}_G$.
    \begin{definition}[2-Strategy Space]
        $(\Xt,\dXt)$ is the set of all strategy profiles on the graph $G$ with only two strategies, together with the metric $\dXt(u,u')$ which tells the minimum number of single strategic changes to get from $u$ to $u'$.
    \end{definition}
    Likewise, we define the reduced 2-strategy space through equivalence under $\Phi$. 
    \begin{definition}[Reduced 2-Strategy Space]
        $(\At,\dAt)$ is the set of all equivalence classes under $\Phi$ of the set $\Xt$ together with the metric $\dAt$ which is defined by the adjacency relationship $\dAt(a,b)=1\iff \exists (u,u')\in (a,b)$ such that $\dXt(u,u')=1$.
    \end{definition}

    To define the two-strategy subgraph space, we need not define a new set because we will show that the well-known cyclespace is sufficient in this case. Recall the cyclespace of a graph $G$ is the set of all subgraphs of $G$ such that the subgraph has only even degree vertices (i.e., it admits an Eulerian circuit). We will equip this space with a metric which is defined similarly as the SFR.

    \begin{definition}[Cycle Single Face Rewiring]
        A Cycle Single Face Rewiring is a resampling of the graph $G$ to go from a subgraph $s$ to a subgraph $r$ which, for a single face $f$, satisfies the following 
        \begin{enumerate}
            \item \textbf{Locality} Any edge not incident to the face $f$ which was in $s$ is in $r$ and any edge not incident to $f$ which was not in $s$ is not in $r$
            \item \textbf{Closure} Any edge incident to the face $f$ which is in $s$ is not in $r$ and any edge incident to the face $f$ which is not in $s$ is in $r$
        \end{enumerate}
    \end{definition}

    We can take advantage of the fact that $C(G)$ is a vector space over the field $\mathbb{Z}_2$ to give a more compact definition. The definition of a CSFR states that every edge incident to $f$ becomes a non-edge and every non-edge incident to $f$ becomes an edge. This is equivalent to adding the cycle of edges incident to the face $f$ in $G$ to the existing subgraph. Therefore, we give the following corollary, which is direct from the definition 

    \begin{corollary}\label{cor:CSFRequiv}
        If $s$ and $r$ are subgraphs of $G$ and one can get from $s$ to $r$ by a CSFR of the face $f$ then $r=s+C_f$ in $C(G)$ where $C_f$ is the cycle of edges incident to the face $f$ in $G$.
    \end{corollary}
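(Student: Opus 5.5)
The plan is to identify subgraphs with their edge sets and use the $\mathbb{Z}_2$ vector-space structure of the edge space of $G$. In that space, symmetric difference is addition. The whole statement should then reduce to rewriting the two CSFR conditions in terms of membership of each edge in $s$ and in $r$. Let $E_f$ denote the set of edges incident to the face $f$; its indicator vector is exactly $C_f$, the boundary cycle of $f$. We will show that $r = s + C_f$ by checking, edge by edge, that $e\in r$ if and only if $e$ lies in exactly one of $s$ and $C_f$.

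First, take an edge $e\notin E_f$. By \textbf{Locality}, $e\in r \iff e\in s$. Since $e\notin C_f$, membership in $s+C_f$ also reduces to membership in $s$. Hence $r$ and $s+C_f$ agree off $E_f$.

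Second, take an edge $e\in E_f$. By \textbf{Closure}, $e\in r \iff e\notin s$. Since $e\in C_f$, the edge $e$ lies in $s+C_f$ exactly when $e\notin s$. Hence $r$ and $s+C_f$ also agree on $E_f$.

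Combining the two cases, $r=s+C_f$ in the edge space.

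The only point needing care is to confirm that this identity holds inside $C(G)$, and not merely in the ambient edge space. For that, note that $C_f$ is a face boundary and so has all vertex degrees even. This holds for a bridge-free face boundary; more generally, every edge appearing twice on the facial walk cancels mod $2$. So $C_f\in C(G)$. If $s\in C(G)$, then $r=s+C_f\in C(G)$ because $C(G)$ is closed under addition.

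I expect the only mild obstacle to be this interpretation of $C_f$ when the facial walk of $f$ traverses some edge twice, for instance a bridge. In that case the mod-$2$ boundary and the set ``edges incident to $f$'' can differ. I would handle it by stating the convention that $C_f$ is the indicator of $E_f$, which matches the CSFR definition verbatim, and by observing that for the bridgeless graphs under consideration the two notions coincide.
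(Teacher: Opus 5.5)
Your proof is correct and follows the paper's argument. The paper also notes that a CSFR toggles exactly the edges incident to $f$ and leaves every other edge unchanged, which is addition of $C_f$ over $\mathbb{Z}_2$. Your edge-by-edge check, and your remark that $C_f\in C(G)$ (with the bridge caveat), just make explicit what the paper takes as immediate from the definition.
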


    \begin{definition}[2-Subgraph Space]
        $(C(G),d_C)$ is the set of all subgraphs of the graph $G$ which admit an Eulerian circuit together with the metric defined by CSFRs  
    \end{definition}

    \subsection{Nash equilibria in the two-strategy game}
    Because the metric space is slightly different (notice that not every CSFR is an SFR because there is no connectivity condition on the CSFR), we will show again that $\Phi\circ \Psi$ is an isometry.  For this main result, we must first ensure the functions are still well defined and bijective even between these different spaces.

    \begin{lemma}\label{lem:uniqueCutset}
        If $\mathcal{Q}_G^{(2)}$ is the set of partitions of the graph $G$ into two parts and $T(G)$ is the cutspace of the graph $G$ then $Cs:\mathcal{Q}_G^{(2)}\to T(G)$ which is defined by $Cs(P)$ is the cutset of $P$ is a bijection.
    \end{lemma}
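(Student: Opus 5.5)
The plan is to identify a two-part partition $P=\{A,A^c\}$ with the unordered pair of complementary vertex sets and to work over $\mathbb{Z}_2$. For $A\subseteq V$, let $\delta(A)$ denote the set of edges with exactly one endpoint in $A$. Then $Cs(P)=\delta(A)=\delta(A^c)$. I would take $G$ to be connected, which is implicit throughout the paper, and count the trivial partition $\{V,\emptyset\}$ as a two-part partition mapping to the empty cut (the zero vector). Without this convention the zero element of $T(G)$ would have no preimage, so I would state the convention explicitly.

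\textbf{Well-definedness and surjectivity.} Recall that $T(G)$ is the subspace of the edge space spanned by the vertex stars $\delta(\{v\})$. The key identity is
\[\sum_{v\in A}\delta(\{v\})=\delta(A)\quad\text{over }\mathbb{Z}_2.\]
This holds because an edge with both endpoints in $A$ is counted twice and cancels, while an edge with exactly one endpoint in $A$ is counted once. Hence every $\delta(A)$ lies in $T(G)$, so $Cs$ is well defined and independent of which part is called $A$. Conversely, every element of $T(G)$ is a sum of some set of vertex stars, say those indexed by a set $A$. By the identity that element equals $\delta(A)=Cs(\{A,A^c\})$, so $Cs$ is surjective.

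\textbf{Injectivity.} Suppose $\delta(A)=\delta(B)$. Adding the two expansions gives $\delta(A\triangle B)=\delta(A)+\delta(B)=\emptyset$, so no edge leaves $A\triangle B$. Therefore $A\triangle B$ is a union of connected components of $G$. Since $G$ is connected, $A\triangle B$ is either $\emptyset$ or $V$. In the first case $B=A$, and in the second $B=A^c$. Either way $\{A,A^c\}=\{B,B^c\}$, so the two partitions coincide.

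\textbf{Main obstacle.} The algebra is routine; the care goes into the hypotheses. One must fix connectedness of $G$ and the convention for the trivial partition. Injectivity is exactly where connectedness enters: for a disconnected graph, swapping sides on a single component gives a different partition with the same cutset, so the map would fail to be injective. This mirrors the connectivity restriction needed in Lemma \ref{lem:bijectivePsi}. With only two parts, however, connectivity of the whole graph suffices and no connectivity of the individual parts is required.
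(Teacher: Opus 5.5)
Your argument is correct, but it reaches the result by a different route from the paper.

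The paper takes $T(G)$ to be the set of cuts, so well-definedness and surjectivity are read off from the definition of a cut. It proves injectivity by a parity count: if $v,w$ lie in the same part of $P$ but in different parts of $Q$, a path from $v$ to $w$ uses an even number of edges of $Cs(P)$ and an odd number of edges of $Cs(Q)$, so the cutsets differ.

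You instead take $T(G)$ to be the $\mathbb{Z}_2$-span of the vertex stars. The identity $\sum_{v\in A}\delta(\{v\})=\delta(A)$ then proves, rather than assumes, that every element of the cut space is a cut. You get injectivity from $\delta(A\triangle B)=\delta(A)+\delta(B)=\emptyset$ together with connectedness.

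The paper's argument is more elementary and needs no linear structure. Yours is shorter and makes explicit two hypotheses the paper leaves implicit. First, the paper's path from $v$ to $w$ requires $G$ to be connected. Second, the zero cut has a preimage only if the trivial partition is admitted, which the later treatment of consensus profiles needs anyway. Your version also shows that $A\mapsto\delta(A)$ is linear with kernel $\{\emptyset,V\}$. That is exactly the fact used without proof in Lemma \ref{lem:twoStratIsometry}: flipping a single vertex $v$ changes the cutset by $\delta(\{v\})$.
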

    This result is well known, but for completeness we include the proof here. 
    \begin{proof}
        First observe that by the definition of a cut, the cutset of a partition into two parts is always in the cutspace, so this is a well-defined function. Moreover, this same definition says that every cut in the cutspace separates the graph into two parts. This means that for every cut  $t\in T(G)$ there is a partition $P\in \mathcal{Q}_G^{(2)}$ such that $Cs(P)=t$. Thus we need only show $Cs$ is injective. 
            
        To show injectivity, assume two partitions into two parts, $P$ and $Q$, are different. Without loss of generality, say there exists a pair of vertices $v,w$ which are in the same part in $P$ and in different parts in $Q$. If this does not happen, $P$ and $Q$ would be identical. Consider the shortest path from $v$ to $w$. The sequence of vertices can be labeled from $v$ to $w$ in order by which part they belong to, and when two adjacent vertices are not in the same part, they must have a cut edge between them. Let $Cs(P)$ be the cutset of the partition $P$ and let $Cs(Q)$ be the cutset of the partition $Q$. The path from $v$ to $w$ must have an even number of edges included in $C_P$. This is because either every vertex in the path is in the same part as $v$ and $w$ or, whenever the path enters the opposite part, it must exit before it reaches $w$. The same path must have an odd number of edges included in $C_Q$ because the path must leave the part that $v$ is in at least once, and if it enters it again, it must depart before it arrives at $w$.  This means that if $P\neq Q$ then $Cs(P)\neq Cs(Q)$.
    \end{proof}

    In addition to this well-known bijection, there is another well-known bijection between the cutspace $T(G)$ and the cyclespace of the dual, $C(G^*)$\cite{NishizekiPlanarGraphs}. We call this function $\Lambda$, but we notice that it is defined nearly exactly as we had defined $\Psi$. For a cut $t\in T(G)$, $\Lambda t$ is the subgraph spanned by the edges of $G^*$ corresponding to edges of $G$ included in $t$. This means that we can easily prove that $\Psi$ is a bijection.
    
    \begin{lemma}\label{lem:psiBijective}
        $\Psi:\mathcal{Q}_{G}^{(2)}\to C(G^*)$ is a bijection and can be expressed as $\Psi (P)=\Lambda \circ Cs(P)$. 
    \end{lemma}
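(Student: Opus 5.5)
The plan is to show that $\Psi$ restricted to $\mathcal{Q}_G^{(2)}$ agrees pointwise with the composition $\Lambda\circ Cs$. Bijectivity then follows because a composition of bijections is a bijection. Lemma \ref{lem:uniqueCutset} already gives that $Cs:\mathcal{Q}_G^{(2)}\to T(G)$ is a bijection. The standard cut–cycle duality for planar graphs \cite{NishizekiPlanarGraphs} gives that $\Lambda:T(G)\to C(G^*)$ is a bijection. So almost all of the work is in identifying $\Psi$ with $\Lambda\circ Cs$ and checking that the domains and codomains match.

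First, fix $P\in\mathcal{Q}_G^{(2)}$. By definition, $\Psi P$ is the subgraph of $G^*$ spanned by the duals $e^*$ of the edges $e$ in the cutset of $P$, that is, by $\{e^*: e\in Cs(P)\}$. Likewise, $\Lambda(Cs(P))$ is the subgraph of $G^*$ spanned by $\{e^*: e\in Cs(P)\}$. The two are built from the same edge set under the same edge bijection $e\mapsto e^*$, so $\Psi(P)=\Lambda\circ Cs(P)$ for every two-part partition. This is a direct unpacking of definitions.

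Second, I would confirm that the target really is $C(G^*)$, i.e.\ that $\Lambda$ maps $T(G)$ onto the even-degree subgraphs of $G^*$. If needed, I would sketch the standard argument in both directions.
\begin{itemize}
\item[] \emph{Cuts go to even subgraphs.} For a dual vertex $f^*$, its incident edges are dual to the boundary walk of the face $f$. That closed walk crosses between the two parts of $P$ an even number of times, so $f^*$ has even degree in $\Lambda(Cs(P))$.
\item[] \emph{Even subgraphs come from cuts.} An even subgraph is an edge-disjoint union of cycles. Each cycle is a Jordan curve, and via Lemma \ref{lem:surjectivePsi} it is the dual of a cut $\{Q,Q^c\}$. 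Symmetric differences of cuts are cuts, so the whole even subgraph is dual to a cut.
\end{itemize}
Injectivity of $\Lambda$ is immediate, since $e\mapsto e^*$ is a bijection on edges and a subgraph is determined by its edge set.

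I expect the main obstacle to be a subtlety in the cutspace direction rather than the composition. A cut $t\in T(G)$ corresponds to a two-part partition only if the trivial partition (everything in one part, empty cut) is counted in $\mathcal{Q}_G^{(2)}$. We also need $G$ connected, so that the partition is determined by its cut up to swapping the two parts. I would handle this by adopting the convention already implicit in Lemma \ref{lem:uniqueCutset}: partitions are unlabeled, and the trivial partition, which corresponds to the empty even subgraph of $G^*$, is included. With that convention fixed, the proof is simply $\Psi=\Lambda\circ Cs$, a composition of two bijections.
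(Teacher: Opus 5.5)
Your proposal is correct and follows the same route as the paper: the paper also observes that $\Psi P$ and $\Lambda(Cs(P))$ are spanned by the same dual edges, then gets bijectivity by composing Lemma \ref{lem:uniqueCutset} with the standard cut--cycle bijection $\Lambda$. Your extra checks are conventions the paper leaves implicit: that $\Lambda$ lands in $C(G^*)$, that $G$ must be connected, and that the trivial partition must be counted in $\mathcal{Q}_G^{(2)}$ so the empty cut and empty even subgraph have a preimage.
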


    \begin{proof}
        Recall that $\Psi$ is defined as the function that takes a partition of $G$ and returns the subgraph of $G^*$ spanned by the edges of $G^*$ which correspond to edges of $G$ in the cutset of the original partition. $Cs(P)$ is the cut set of the original partition and $\Lambda$ takes a cut set from $T(G)$ and returns the subgraph of $G^*$ which is spanned by edges of the cut. 

        For any partition $P$, $\Psi P$, the subgraph spanned by the edges of the dual corresponding to the edges in $P$ can be described as the bijection $\Lambda \circ Cs(P)$ (see Lemma \ref {lem:uniqueCutset}) and as $\Psi P$. Thus we know $\Psi$ is bijective from $\mathcal{Q}_G^{(2)}$ to $C(G^*)$.
    \end{proof}

    These results allow us to prove the main ingredient in showing the stronger result in this restricted setting. Again, we will now show that $\Psi\circ\Phi:\At\to C(G^*)$ is an isometry.

    \begin{lemma}\label{lem:twoStratIsometry}
         $\Psi\circ\Phi:\At\to C(G^*)$ is an isometry. 
    \end{lemma}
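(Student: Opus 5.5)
Since $\Psi\circ\Phi:\At\to C(G^*)$ is a bijection, the plan is to prove that it carries the adjacency relation of $\At$ exactly onto the adjacency relation of $(C(G^*),d_C)$. Both metrics are path metrics generated by their unit adjacencies, so an adjacency-preserving bijection in both directions maps shortest paths to shortest paths and hence preserves distance. Bijectivity is immediate: $\Phi:\At\to\mathcal{Q}_G^{(2)}$ is a bijection by construction of the equivalence classes, and $\Psi:\mathcal{Q}_G^{(2)}\to C(G^*)$ is a bijection by Lemma \ref{lem:psiBijective}.

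\textbf{Key step.} For a vertex $v$ of $G$, let $f_v$ be the corresponding face of $G^*$, and let $C_{f_v}$ be its boundary cycle. The edges of $C_{f_v}$ are exactly the duals of the edges of $G$ incident to $v$. Suppose $u,u'$ differ only in the strategy of $v$. With two strategies, flipping $v$ toggles the cut status of every edge incident to $v$ and of no other edge. So $Cs(\Phi u')=Cs(\Phi u)+\delta(v)$ in $T(G)$, where $\delta(v)$ is the star of $v$. Applying the linear bijection $\Lambda$ and Lemma \ref{lem:psiBijective} gives
\[\Psi\Phi u'=\Psi\Phi u+\Lambda\delta(v)=\Psi\Phi u+C_{f_v}.\]
By Corollary \ref{cor:CSFRequiv} (and the definition of a CSFR, whose locality and closure conditions are exactly this symmetric difference), this is a single CSFR at $f_v$. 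Hence $\dAt(u,u')=1$ implies $d_C(\Psi\Phi u,\Psi\Phi u')=1$.

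\textbf{Converse.} Suppose $r=s+C_f$ for some face $f$ of $G^*$, with $s=\Psi\Phi u$. Let $v$ be the vertex of $G$ dual to $f$, and let $u'$ be $u$ with $v$'s strategy switched. By the computation above, $\Psi\Phi u'=s+C_f=r$. By injectivity, $u'$ is the unique preimage of $r$, so $\dAt(u,\Phi^{-1}\Psi^{-1}r)=1$. Both relations are symmetric, since $C_f+C_f=0$ and single strategic changes are reversible. Therefore the adjacency graphs are isomorphic via $\Psi\circ\Phi$, and the graph distances agree, which is the claimed isometry.

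\textbf{Expected obstacle.} The main subtlety is identifying $\Lambda\delta(v)$ with $C_{f_v}$, the set of dual edges incident to $f_v$. For a simple graph in general this fails only when $v$ is a cut vertex, or when a bridge makes a face boundary non-simple. I would handle this by working in $\mathbb{Z}_2$ edge sets throughout, where $C_{f_v}$ is defined as the set of edges incident to the face and the identity holds edgewise. A smaller point is that $C(G^*)$ is connected under CSFRs. This follows because the face boundaries generate the cycle space, which corresponds via $\Lambda$ to the fact that vertex stars generate $T(G)$; it is in any case implied by the adjacency correspondence together with the connectivity of $\At$.
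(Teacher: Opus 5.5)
Your proposal is correct and follows essentially the paper's route: a single strategic change at $v$ alters the cutset only at edges incident to $v$, and the linear map $\Lambda$ turns this into adding the face cycle $C_{f_v}$ in $C(G^*)$. Your version is a bit tighter than the paper's in two places. First, computing the difference exactly as the full star $\delta(v)$ gives $\Lambda\delta(v)=C_{f_v}$ edgewise, which avoids the paper's ``otherwise some vertex has degree 1'' step; that step presumes the face boundary is a simple cycle, which is false when $v$ is a cut vertex of $G$. Second, your explicit flip of the dual vertex actually proves the converse direction, which the paper attributes to bijectivity alone.
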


    \begin{proof}
        Again we will argue that if $u$ and $u'$ are adjacent in $(\At,\dAt)$, then $c=\Psi\circ\Phi u$ and $c'=\Psi \circ \Phi u'$ are adjacent in $(C(G^*),d_C)$. To show $c$ and $c'$ are adjacent in this way, we must show that $c=c'+C_f$ in $\mathbb{Z}_2$ for some $f$ by Corollary \ref{cor:CSFRequiv}.
        
        If $u$ and $u'$ are adjacent in $(\At,\dAt)$, then there is a single vertex $v$ which changed their strategy. Without loss of generality, we can say they changed from strategy 1 to strategy 2. This means that the cutsets of the partitions $\Phi u$ and $\Phi u'$ differ only on edges adjacent to the vertex $v$.
        This is equivalent to saying that $Cs(\Phi u)-Cs(\Phi u')$ is a cut of $G$ which only has edges adjacent to $v$. Taking the isomorphism from the cutspace to the cyclespace of $G^*$ we can say 
        \[\Psi \circ \Phi u-\Psi \circ \Phi u'=\Lambda (Cs(\Phi u)-\Lambda (Cs(\Phi u'))=\Lambda (Cs(\Phi u)-Cs(\Phi u')) \]
        
        which implies that $c-c'$ is an element of the cyclespace of $G^*$ which only includes edges incident to the face $f$ of $G^*$ corresponding to the vertex $v$. Now observe that there is at least one edge in $c-c'$. If it did not, then $c-c'=0$ and, because of the bijectivity of $\Phi$ and $\Psi$ (lemma \ref{lem:psiBijective}), we would know $u=u'$ which is not the case. Because $c-c'$ is an element of the cyclespace of $G^*$, and because it is not empty, and because it only contains edges incident to a single face of $G^*$, it must contain every edge incident to that face. If it was missing an en edge, there would be a vertex with degree 1. Thus we know that $c-c'$ is the cycle of edges around the face $f$ which we call $C_F$. This means $c=c'+C_F$. Thus we know that if $\dAt(u,u')= 1$ then $d_C(\Psi\circ\Phi u ,\Psi\circ\Phi u')= 1$. 
        Because $\Psi \circ \Phi$ is bijective (lemma \ref{lem:psiBijective}) it must be also that $d_C(c,c')=1$ implies that $\dAt(\Phi^{-1}\circ\Psi^{-1} c,\Phi^{-1}\circ\Psi^{-1} c')=1$ and so our result it proved.
    \end{proof}

    This isometry is not a quasiisometry because there is no issue with non-uniqueness or with asymmetry. This is a full isometric bijective relation between the reduced $2-$strategy space and the cutspace of the dual. With this isometry, we can prove the stronger version of Theorem \ref{thm:minimalSubgraphs}

    \begin{theorem}\label{thm:2stratMinimalSubgraphs}
        For a planar graph $G$ with determined planar embedding and for its dual graph $G^*$, $s\in C(G^*)$ satisfies 
        \[size(s)\leq size(r)\, \forall r\in \Gamma_C(s), \]
        where $\Gamma_C(s):=\{r\in C(G^*);d_C(s,r)\leq 1\} $, if and only if $\Phi^{-1}\circ\Psi^{-1}s$ is a Nash equilibrium. 
    \end{theorem}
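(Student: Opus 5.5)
The plan is to transport the local-optimality condition across the bijective isometry of Lemma \ref{lem:twoStratIsometry}, using the potential/size identity of Lemma \ref{lem:Potentialissize}. Fix $s\in C(G^*)$. By Lemma \ref{lem:psiBijective} and the bijectivity of $\Phi$ onto partitions, there is a unique $u:=\Phi^{-1}\circ\Psi^{-1}s\in\At$. Note that $\Phi^{-1}\circ\Psi^{-1}$ is a genuine function here, since in the two-strategy setting no restriction to connected parts is needed.

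First I will show that $\Psi\circ\Phi$ maps the neighbourhood $\Gamma_{\mathcal{A}}^{(2)}(u):=\{u'\in\At;\dAt(u,u')\le 1\}$ bijectively onto $\Gamma_C(s)$. Lemma \ref{lem:twoStratIsometry} gives both directions of adjacency: $\dAt(u,u')=1$ if and only if $d_C(\Psi\circ\Phi u,\Psi\circ\Phi u')=1$. Distance zero corresponds to equality because the map is a bijection. Hence $\Psi\circ\Phi(\Gamma_{\mathcal{A}}^{(2)}(u))=\Gamma_C(s)$, and every $r\in\Gamma_C(s)$ equals $\Psi\circ\Phi u'$ for a unique $u'\in\Gamma^{(2)}_{\mathcal{A}}(u)$.

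Next I will apply Lemma \ref{lem:Potentialissize}, whose proof did not use the number of strategies. It gives $\mathcal{W}(u')=-\tfrac12\,size(\Psi\circ\Phi u')$ for every $u'\in\At$. So the condition $size(s)\le size(r)$ for all $r\in\Gamma_C(s)$ is equivalent, term by term, to $\mathcal{W}(u)\ge\mathcal{W}(u')$ for all $u'\in\Gamma^{(2)}_{\mathcal{A}}(u)$.

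Finally I will invoke the potential-game characterization from Section \ref{sec:paritioning}, and make it explicit in both directions since the theorem is an ``if and only if''. A class $u$ is a Nash equilibrium exactly when no single player can strictly improve by a unilateral deviation. By Lemma \ref{lem:potentialFunction}, any unilateral deviation from a representative of $u$ changes that player's payoff by exactly the change in $\mathcal{W}$, and it lands in a class adjacent to $u$ in $\At$. Conversely, every class adjacent to $u$ arises from such a deviation. One subtlety needs a sentence: an adjacency witnessed by some pair $(x,x')$ with $x\in u$ can be realized from any representative, by permuting the two strategies. Hence $u$ is a Nash equilibrium iff $\mathcal{W}(u)\ge\mathcal{W}(u')$ on $\Gamma^{(2)}_{\mathcal{A}}(u)$, and combining with the previous step yields the claim.

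The only step I expect to need care is this last equivalence, namely that local maximality of $\mathcal{W}$ on the reduced space is not merely sufficient but also necessary for equilibrium. The concern is that passing to $\Phi$-equivalence classes might hide or add neighbours. I would handle it with the permutation-invariance remark above: with two strategies, the swap $1\leftrightarrow 2$ maps representatives to representatives and preserves single-vertex changes. So adjacency in $\At$ coincides exactly with the set of unilateral deviations from any fixed representative.
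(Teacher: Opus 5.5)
Your proposal is correct and follows the paper's proof. Both pull local size-minimality back through the bijective isometry of Lemma~\ref{lem:twoStratIsometry} and the monotone size--potential relation of Lemma~\ref{lem:Potentialissize}, then apply the potential-game characterization of Nash equilibria in each direction; your explicit matching of $\Gamma^{(2)}_{\mathcal{A}}(u)$ with $\Gamma_C(s)$ and the strategy-swap remark spell out what the paper leaves implicit, and the $\Leftarrow$ direction needs the reverse adjacency $d_C=1\Rightarrow d^{(2)}_{\mathcal{A}}=1$, which holds not from bijectivity alone (as the paper's lemma suggests) but because $C_f$ is the dual of the edge star of the vertex $v$ corresponding to $f$, so adding $C_f$ is exactly flipping $v$.
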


    \begin{proof}
         $\implies$ If $s$ is a local size minimizer in $C(G^*)$ then there are no faces $f$ for which $size(s+C_f)\leq size(f)$. By Lemma \ref{lem:twoStratIsometry}, this means that $u:=\Phi^{-1}\circ\Psi^{-1}s$ has fewer edges connecting vertices playing different strategies than any strategy profile in $\Gamma_{\At}(u).$

         Now, because of Lemma \ref{lem:Potentialissize}, we know also that $u$ has a higher potential than any strategy profile in its vicinity in $At$ and thus $u$ is a Nash equilibrium. 

        $\impliedby$ If $u$ is a Nash equilibrium, then by the definition of potential game $u$ has a higher potential than any strategy profile in its vicinity in $At$. By Lemmas \ref{lem:Potentialissize} and \ref{lem:twoStratIsometry}, $\Psi \circ \Phi u$ has fewer edges than any cycle in its vicinity in $C(G^*)$. This is equivalent to saying it is a local size minimizer and thus
        \[size(s)\leq size(r)\, \forall r\in \Gamma_C(s)\]
         Where $\Gamma_C(s):=\{r\in C(G^*);d_C(r,s)\leq 1\}$. This completes the proof.  
    \end{proof}

    This is a stronger result because not only do we recover what we already knew about minimal subgraphs (or cycles) indicating the presence of Nash equilibria. We also understand that Nash equilibria can be used to find minimal cycles in the cyclespace of the dual.

    The same isometry result which was used to prove theorem \ref{thm:2stratMinimalSubgraphs} can be used to prove an even stronger result in the dynamic case
    
    \subsection{Myopic Best Response in the two strategy case}
    Like many researchers considering the coordination game, we hope to be able to understand the dynamics when the game is repeated, and players are changing their strategies iteratively. The most standard way to understand this is through the strategy revision protocol called Myopic Best Response (MBR). This update rule simply says that every player observes the strategy profile at the present time step and, assuming that non of their co-players will change strategy, they update their own strategy to whichever strategy would maximize their individual payoff. Any sequence of strategy profiles which satisfies this rule, if it converges, must converge to a Nash equilibrium because every player is playing its best response.

    In general, MBR is not deterministic because there may be multiple strategies which maximize a player's payoff, and so some other information must be supplied to make a general MBR sequence into a deterministic sequence. In the case of two strategies, it is sufficient to say that, in the case of a tie, a player will not change its strategy. With this additional tie-breaking rule (which can be thought of as some $\epsilon$ cost to changing strategy), we can define an MBR operator which is well defined

    \begin{definition}[MBR operator]
        $f:\At \to \At$ is the operator which takes a strategy profile and returns the strategy profile resulting from a single iteration of Myopic Best Response. 
    \end{definition}

    These myopic best response sequences are a fundamental way to understand the dynamics of iterated games like this and have been employed to study structured coordination many times since the 1990s (e.g., \cite{Kandori1993,Ellison1993,Ellison2000,Oechssler1997,Oechssler1999} etc.)

    Again, although this is simple to understand, it is not always easy to imagine. The isometry between the strategy space and the cycle space in the dual allows us to discuss this behavior in a more easy to imagine, geometric way. This easier to imagine flow is called the Cycle Shortening Flow. We will define it, then demonstrate their equivalence through the isometry.

    \begin{definition}[Cycle Shortening Flow] For a cycle $c$ in the cyclespace of a planar graph $G$, let $g:C(G)\to C(G)$ be the operator defined as 
    \[g(c)=c+\sum_{f\in F(G^*)}H(c,f)C_F\]
    where $H(c,f)=1$ is more than half the edges of the face $f$ are included in $c$ and $0$ otherwise.

    A sequence of cycles $\{c_1,c_2,...\}$ satisfies the cycle shortening flow if and only if $c_t=g(c_{t-1})$.
    \end{definition}

    This flow is not obvious from its definition, but an illustrative example is included in Figure \ref{fig:SubgraphShortenningConsensus}. A more complete example can be seen in appendix \ref{app:CSFexample}. In short, the cycle changes to take the shortest path around every face that it touches. With this heuristic, it is easy to see that if these sequences converge, they must converge to Nash equilibria. 
    \begin{figure}
    \centering
    \scalebox{0.8}{
    \begin{tikzpicture}
        \node(a)[circle, fill=orange, inner sep =2pt] at (0,0){};
            \node(b)[circle, fill=orange, inner sep =2pt] at (1,0){};
            \node(c)[circle, fill=blue, inner sep =2pt] at (1,1){};
            \node(d)[circle, fill=blue, inner sep =2pt] at (1,-1){};
            \node(e)[circle, fill=blue, inner sep =2pt] at (2,0){};
            \node(f)[circle, fill=orange, inner sep =2pt] at (3,0){};
            \node(g)[circle, fill=orange, inner sep =2pt] at (3,1){};
            \node(h)[circle, fill=blue, inner sep =2pt] at (3,-1){};
            \node(i)[circle, fill=orange, inner sep =2pt] at (4,0){};

            \draw[line width = 0.5mm](a)--(b);
            \draw[line width = 0.5mm](a)--(c);
            \draw[line width = 0.5mm](a)--(d);
            \draw[line width = 0.5mm](b)--(c);
            \draw[line width = 0.5mm](b)--(d);
            \draw[line width = 0.5mm](b)--(e);
            \draw[line width = 0.5mm](c)--(e);
            \draw[line width = 0.5mm](d)--(e);
            \draw[line width = 0.5mm](e)--(f);
            \draw[line width = 0.5mm](e)--(g);
            \draw[line width = 0.5mm](e)--(h);
            \draw[line width = 0.5mm](f)--(g);
            \draw[line width = 0.5mm](f)--(h);
            \draw[line width = 0.5mm](f)--(i);
            \draw[line width = 0.5mm](g)--(i);
            \draw[line width = 0.5mm](h)--(i);
            \draw[line width = 0.5mm](a)to [out = 90,in = 90, looseness =2](i);

            \draw[->, line width = 0.5 mm] (2,-1.5)to [bend right = 30] (2.5, -2.2);

            \node(a)[circle, fill=orange, inner sep =2pt] at (3+0,-3+0){};
            \node(b)[circle, fill=blue, inner sep =2pt] at (3+1,-3+0){};
            \node(c)[circle, fill=orange, inner sep =2pt] at (3+1,-3+1){};
            \node(d)[circle, fill=orange, inner sep =2pt] at (3+1,-3+-1){};
            \node(e)[circle, fill=blue, inner sep =2pt] at (3+2,-3+0){};
            \node(f)[circle, fill=orange, inner sep =2pt] at (3+3,-3+0){};
            \node(g)[circle, fill=orange, inner sep =2pt] at (3+3,-3+1){};
            \node(h)[circle, fill=orange, inner sep =2pt] at (3+3,-3+-1){};
            \node(i)[circle, fill=orange, inner sep =2pt] at (3+4,-3+0){};

            \draw[line width = 0.5mm](a)--(b);
            \draw[line width = 0.5mm](a)--(c);
            \draw[line width = 0.5mm](a)--(d);
            \draw[line width = 0.5mm](b)--(c);
            \draw[line width = 0.5mm](b)--(d);
            \draw[line width = 0.5mm](b)--(e);
            \draw[line width = 0.5mm](c)--(e);
            \draw[line width = 0.5mm](d)--(e);
            \draw[line width = 0.5mm](e)--(f);
            \draw[line width = 0.5mm](e)--(g);
            \draw[line width = 0.5mm](e)--(h);
            \draw[line width = 0.5mm](f)--(g);
            \draw[line width = 0.5mm](f)--(h);
            \draw[line width = 0.5mm](f)--(i);
            \draw[line width = 0.5mm](g)--(i);
            \draw[line width = 0.5mm](h)--(i);
            \draw[line width = 0.5mm](a)to [out = 90,in = 90, looseness =2](i);

            \draw[->, line width = 0.5 mm] (3+2,-0.5)to [bend left= 30] (3+2.5, 0.2);

            \node(a)[circle, fill=orange, inner sep =2pt] at (6+0,-0+0){};
            \node(b)[circle, fill=orange, inner sep =2pt] at (6+1,-0+0){};
            \node(c)[circle, fill=blue, inner sep =2pt] at (6+1,-0+1){};
            \node(d)[circle, fill=blue, inner sep =2pt] at (6+1,-0+-1){};
            \node(e)[circle, fill=orange, inner sep =2pt] at (6+2,-0+0){};
            \node(f)[circle, fill=orange, inner sep =2pt] at (6+3,-0+0){};
            \node(g)[circle, fill=orange, inner sep =2pt] at (6+3,-0+1){};
            \node(h)[circle, fill=orange, inner sep =2pt] at (6+3,-0+-1){};
            \node(i)[circle, fill=orange, inner sep =2pt] at (6+4,-0+0){};

            \draw[line width = 0.5mm](a)--(b);
            \draw[line width = 0.5mm](a)--(c);
            \draw[line width = 0.5mm](a)--(d);
            \draw[line width = 0.5mm](b)--(c);
            \draw[line width = 0.5mm](b)--(d);
            \draw[line width = 0.5mm](b)--(e);
            \draw[line width = 0.5mm](c)--(e);
            \draw[line width = 0.5mm](d)--(e);
            \draw[line width = 0.5mm](e)--(f);
            \draw[line width = 0.5mm](e)--(g);
            \draw[line width = 0.5mm](e)--(h);
            \draw[line width = 0.5mm](f)--(g);
            \draw[line width = 0.5mm](f)--(h);
            \draw[line width = 0.5mm](f)--(i);
            \draw[line width = 0.5mm](g)--(i);
            \draw[line width = 0.5mm](h)--(i);
            \draw[line width = 0.5mm](a)to [out = 90,in = 90, looseness =2](i);
            
            \draw[->, line width = 0.5 mm] (6+2,-1.5)to [bend right= 30] (6+2.5, -2.2);

            \node(a)[circle, fill=orange, inner sep =2pt] at (9+0,-3+0){};
            \node(b)[circle, fill=orange, inner sep =2pt] at (9+1,-3+0){};
            \node(c)[circle, fill=orange, inner sep =2pt] at (9+1,-3+1){};
            \node(d)[circle, fill=orange, inner sep =2pt] at (9+1,-3+-1){};
            \node(e)[circle, fill=orange, inner sep =2pt] at (9+2,-3+0){};
            \node(f)[circle, fill=orange, inner sep =2pt] at (9+3,-3+0){};
            \node(g)[circle, fill=orange, inner sep =2pt] at (9+3,-3+1){};
            \node(h)[circle, fill=orange, inner sep =2pt] at (9+3,-3+-1){};
            \node(i)[circle, fill=orange, inner sep =2pt] at (9+4,-3+0){};

            \draw[line width = 0.5mm](a)--(b);
            \draw[line width = 0.5mm](a)--(c);
            \draw[line width = 0.5mm](a)--(d);
            \draw[line width = 0.5mm](b)--(c);
            \draw[line width = 0.5mm](b)--(d);
            \draw[line width = 0.5mm](b)--(e);
            \draw[line width = 0.5mm](c)--(e);
            \draw[line width = 0.5mm](d)--(e);
            \draw[line width = 0.5mm](e)--(f);
            \draw[line width = 0.5mm](e)--(g);
            \draw[line width = 0.5mm](e)--(h);
            \draw[line width = 0.5mm](f)--(g);
            \draw[line width = 0.5mm](f)--(h);
            \draw[line width = 0.5mm](f)--(i);
            \draw[line width = 0.5mm](g)--(i);
            \draw[line width = 0.5mm](h)--(i);
            \draw[line width = 0.5mm](a)to [out = 90,in = 90, looseness =2](i);

    \end{tikzpicture}
    }
    \vspace{1cm}
    
    \scalebox{0.8}{
    \begin{tikzpicture}
         \node(a)[circle, fill, inner sep =1.5pt] at (0,0){};
            \node(b)[circle, fill, inner sep =1.5pt] at (1,0){};
            \node(c)[circle, fill, inner sep =1.5pt] at (1,1){};
            \node(d)[circle, fill, inner sep =1.5pt] at (0,1){};
            \node(e)[circle, fill, inner sep =1.5pt] at (2,0){};
            \node(f)[circle, fill, inner sep =1.5pt] at (2,1){};
            \node(g)[circle, fill, inner sep =1.5pt] at (3,1){};
            \node(h)[circle, fill, inner sep =1.5pt] at (3,0){};
            \node(i)[circle, fill, inner sep =1.5pt] at (1.5,2){};
            \node(j)[circle, fill, inner sep =1.5pt] at (1.5,-1){};

            \draw[dashed](a)--(b);
            \draw[dashed](a)--(d);
            \draw[dashed](a)--(j);
            \draw[dashed](b)--(c);
            \draw[dashed](b)--(j);
            \draw[dashed](c)--(d);
            \draw[dashed](c)--(i);
            \draw[dashed](d)--(i);
            \draw[dashed](e)--(f);
            \draw[dashed](e)--(h);
            \draw[dashed](e)--(j);
            \draw[dashed](f)--(g);
            \draw[dashed](f)--(i);
            \draw[dashed](g)--(h);
            \draw[dashed](g)--(i);
            \draw[dashed](h)--(j);
            \draw[dashed](j)to [out = 180,in = 180, looseness = 2.5](i);

            \draw[line width = 0.5mm](a)--(b);
            \draw[line width = 0.5mm](b)--(c);
            \draw[line width = 0.5mm](c)--(d);
            \draw[line width = 0.5mm](d)--(i);
            \draw[line width = 0.5mm](i)--(f);
            \draw[line width = 0.5mm](f)--(e);
            \draw[line width = 0.5mm](e)--(h);
            \draw[line width = 0.5mm](h)--(j);
            \draw[line width = 0.5mm](j)--(a);

            \draw[->, line width = 0.5 mm] (1.5,-1.5)to [bend right = 30] (2, -2.2);

             \node(a)[circle, fill, inner sep =1.5pt] at (3+0,-3+0){};
            \node(b)[circle, fill, inner sep =1.5pt] at (3+1,-3+0){};
            \node(c)[circle, fill, inner sep =1.5pt] at (3+1,-3+1){};
            \node(d)[circle, fill, inner sep =1.5pt] at (3+0,-3+1){};
            \node(e)[circle, fill, inner sep =1.5pt] at (3+2,-3+0){};
            \node(f)[circle, fill, inner sep =1.5pt] at (3+2,-3+1){};
            \node(g)[circle, fill, inner sep =1.5pt] at (3+3,-3+1){};
            \node(h)[circle, fill, inner sep =1.5pt] at (3+3,-3+0){};
            \node(i)[circle, fill, inner sep =1.5pt] at (3+1.5,-3+2){};
            \node(j)[circle, fill, inner sep =1.5pt] at (3+1.5,-3+-1){};

            \draw[dashed](a)--(b);
            \draw[dashed](a)--(d);
            \draw[dashed](a)--(j);
            \draw[dashed](b)--(c);
            \draw[dashed](b)--(j);
            \draw[dashed](c)--(d);
            \draw[dashed](c)--(i);
            \draw[dashed](d)--(i);
            \draw[dashed](e)--(f);
            \draw[dashed](e)--(h);
            \draw[dashed](e)--(j);
            \draw[dashed](f)--(g);
            \draw[dashed](f)--(i);
            \draw[dashed](g)--(h);
            \draw[dashed](g)--(i);
            \draw[dashed](h)--(j);
            \draw[dashed](j)to [out = 180,in = 180, looseness = 2.5](i);

            \draw[line width = 0.5mm](a)--(b);
            \draw[line width = 0.5mm](b)--(j);
            \draw[line width = 0.5mm](j)--(e);
            \draw[line width = 0.5mm](e)--(f);
            \draw[line width = 0.5mm](f)--(i);
            \draw[line width = 0.5mm](i)--(c);
            \draw[line width = 0.5mm](c)--(d);
            \draw[line width = 0.5mm](d)--(a);

            \draw[->, line width = 0.5 mm] (3+1.5,-0.5)to [bend left= 30] (3+2, 0.2);

                         \node(a)[circle, fill, inner sep =1.5pt] at (6+0,-0+0){};
            \node(b)[circle, fill, inner sep =1.5pt] at (6+1,-0+0){};
            \node(c)[circle, fill, inner sep =1.5pt] at (6+1,-0+1){};
            \node(d)[circle, fill, inner sep =1.5pt] at (6+0,-0+1){};
            \node(e)[circle, fill, inner sep =1.5pt] at (6+2,-0+0){};
            \node(f)[circle, fill, inner sep =1.5pt] at (6+2,-0+1){};
            \node(g)[circle, fill, inner sep =1.5pt] at (6+3,-0+1){};
            \node(h)[circle, fill, inner sep =1.5pt] at (6+3,-0+0){};
            \node(i)[circle, fill, inner sep =1.5pt] at (6+1.5,-0+2){};
            \node(j)[circle, fill, inner sep =1.5pt] at (6+1.5,-0+-1){};

            \draw[dashed](a)--(b);
            \draw[dashed](a)--(d);
            \draw[dashed](a)--(j);
            \draw[dashed](b)--(c);
            \draw[dashed](b)--(j);
            \draw[dashed](c)--(d);
            \draw[dashed](c)--(i);
            \draw[dashed](d)--(i);
            \draw[dashed](e)--(f);
            \draw[dashed](e)--(h);
            \draw[dashed](e)--(j);
            \draw[dashed](f)--(g);
            \draw[dashed](f)--(i);
            \draw[dashed](g)--(h);
            \draw[dashed](g)--(i);
            \draw[dashed](h)--(j);
            \draw[dashed](j)to [out = 180,in = 180, looseness = 2.5](i);

            \draw[line width = 0.5mm](a)--(b);
            \draw[line width = 0.5mm](b)--(j);
            \draw[line width = 0.5mm](j)--(a);
            \draw[line width = 0.5mm](d)--(i);
            \draw[line width = 0.5mm](i)--(c);
            \draw[line width = 0.5mm](c)--(d);

            \draw[->, line width = 0.5 mm] (6+1.5,-1.5)to [bend right= 30] (6+2, -2.2);

            \node(a)[circle, fill, inner sep =1.5pt] at (9+0,-3+0){};
            \node(b)[circle, fill, inner sep =1.5pt] at (9+1,-3+0){};
            \node(c)[circle, fill, inner sep =1.5pt] at (9+1,-3+1){};
            \node(d)[circle, fill, inner sep =1.5pt] at (9+0,-3+1){};
            \node(e)[circle, fill, inner sep =1.5pt] at (9+2,-3+0){};
            \node(f)[circle, fill, inner sep =1.5pt] at (9+2,-3+1){};
            \node(g)[circle, fill, inner sep =1.5pt] at (9+3,-3+1){};
            \node(h)[circle, fill, inner sep =1.5pt] at (9+3,-3+0){};
            \node(i)[circle, fill, inner sep =1.5pt] at (9+1.5,-3+2){};
            \node(j)[circle, fill, inner sep =1.5pt] at (9+1.5,-3+-1){};

            \draw[dashed](a)--(b);
            \draw[dashed](a)--(d);
            \draw[dashed](a)--(j);
            \draw[dashed](b)--(c);
            \draw[dashed](b)--(j);
            \draw[dashed](c)--(d);
            \draw[dashed](c)--(i);
            \draw[dashed](d)--(i);
            \draw[dashed](e)--(f);
            \draw[dashed](e)--(h);
            \draw[dashed](e)--(j);
            \draw[dashed](f)--(g);
            \draw[dashed](f)--(i);
            \draw[dashed](g)--(h);
            \draw[dashed](g)--(i);
            \draw[dashed](h)--(j);
            \draw[dashed](j)to [out = 180,in = 180, looseness = 2.5](i);

    \end{tikzpicture}}
    \caption[Subgraph shortening to consensus]{\textbf{Top :} A trajectory in the dynamic pure coordination game on the graph $G$ with two strategies that leads to a consensus equilibrium. \textbf{Bottom :} The corresponding cycle shortening flow in the dual $G^*$. The dotted edges are edges of $G^*$ which are not included in the subgraph $s_t$, and the bold edges are included in $s_t$.}
    \label{fig:SubgraphShortenningConsensus}
\end{figure}
    
Having defined these two flows, one entirely in the strategy space and one entirely in the cycle space, we now can show that if a sequence satisfies one flow, its image (or preimage) under the isometry satisfies the other. 

\begin{theorem}\label{thm:csfEquiv}
    The dynamics in $\At$ of the pure coordination game under Myopic best response are equivalent to the Cycle Shortening Flow on $C(G^*)$. More specifically:  Given a $u_0\in \At$ then the flows $u(t+1)=f(u(t)), u(0)=u_0$ and $s_{t+1}=g(s_t), g(0)=\Psi\circ\Phi u_0$ satisfy $\Psi\circ\Phi u(t)=s_t$ for all $t$.
\end{theorem}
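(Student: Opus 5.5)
The plan is to prove the statement by induction on $t$, reducing everything to a single identity: for every $u\in\At$,
\[\Psi\circ\Phi(f(u))=g(\Psi\circ\Phi u).\]
With this in hand, the base case $\Psi\circ\Phi u(0)=s_0$ holds by definition. Induction then gives $\Psi\circ\Phi u(t+1)=\Psi\circ\Phi f(u(t))=g(\Psi\circ\Phi u(t))=g(s_t)=s_{t+1}$.

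To prove the identity, we work vertex by vertex in $G$, equivalently face by face in $G^*$. Fix $u$ and let $c=\Psi\circ\Phi u$. For a vertex $v$ of $G$ with dual face $f_v$, the edges of $G^*$ incident to $f_v$ are exactly the duals of the edges at $v$. An edge at $v$ lies in the cutset of $\Phi u$ iff the neighbour plays the other strategy, so the number of edges of $C_{f_v}$ lying in $c$ equals the number of neighbours of $v$ disagreeing with $v$. With two strategies, $v$'s best response differs from its current strategy exactly when strictly more than half its neighbours disagree. The tie rule (no change on ties) matches the strict ``more than half'' in $H$, so $v$ switches under $f$ iff $H(c,f_v)=1$.

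Next we identify the partition of $f(u)$ through the cut space. Let $S=\{v: H(c,f_v)=1\}$ be the set of switching vertices. Since there are only two strategies, $f(u)$ is obtained from $u$ by flipping every vertex of $S$. An edge $vw$ changes cut status iff exactly one of $v,w$ lies in $S$. Therefore, in $T(G)$ over $\mathbb{Z}_2$,
\[Cs(\Phi f(u))=Cs(\Phi u)+\sum_{v\in S}\delta(v),\]
where $\delta(v)$ is the star of edges at $v$. Edges with both ends in $S$ appear twice in the sum and cancel, which is exactly right because such an edge is flipped twice. We then apply $\Lambda$, which is linear as the cut-to-cycle isomorphism (Lemma \ref{lem:psiBijective}). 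We use $\Lambda(\delta(v))=C_{f_v}$, which is already the content of the proof of Lemma \ref{lem:twoStratIsometry}, together with $\Psi=\Lambda\circ Cs$. This gives
\[\Psi\circ\Phi f(u)=c+\sum_{f}H(c,f)C_f=g(c).\]

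The main obstacle is that $g$ is simultaneous. A naive composition of single-face moves, as in Lemma \ref{lem:twoStratIsometry}, would have each $H$ evaluated on an intermediate cycle rather than on $c$. The $\mathbb{Z}_2$-linearity argument avoids this by evaluating every $H(c,f_v)$ on the original $c$ and summing all the flips at once. Two side points also need checking. First, $f$ is well defined on the equivalence classes $\At$: flipping both strategies commutes with best response. Second, the notation in the definition of $g$ (faces $f\in F(G^*)$ of the graph carrying the cycles) must be read as faces of $G^*$, i.e.\ vertices of $G$; I would state that convention explicitly.
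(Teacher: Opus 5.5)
Your proposal is correct and follows essentially the same route as the paper. The paper also reduces the theorem to the one-step identity $\Psi\circ\Phi f(u)=g(\Psi\circ\Phi u)$, checking edge by edge that a dual edge changes status exactly when one of its two bounding faces has $H=1$, i.e.\ when exactly one endpoint switches. Your identity $Cs(\Phi f(u))=Cs(\Phi u)+\sum_{v\in S}\delta(v)$, pushed through the linear map $\Lambda$, is the vector form of that parity check, and your explicit handling of the tie rule and of the induction fills in steps the paper leaves implicit.
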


\begin{proof}
    The only thing necessary to prove is that $f(u)=\Phi^{-1}\circ\Psi^{-1}(g(\Psi\circ\Phi u))$ Starting from the strategy profile $u$, if $u'=f(u)$ that means that if $v$ was playing a best response in $u$ then it did not change strategies and so it is playing the same strategy in $u'$. If it was not playing a best response in $u$, then it is playing the opposite strategy in $u'$. Notice that if two players are playing the same strategy and they both change strategies, in the two-strategy case, they are still playing the same strategy. Likewise, if they are playing different strategies and both change strategies, then they are still playing different strategies. This is to say, if both players change strategies, then if the edge between those two players was in $Cs(\Phi u)$, then it is also in $Cs(\Phi u')$. Moreover, if it was not in $Cs(\Phi u)$, it will not be in $Cs(\Phi u')$. Likewise, if only one of two adjacent players changes its strategy, then if the edge was in $Cs(\Phi u)$, it will not be in $Cs(\Phi u')$, and if it was not in $Cs(\Phi u)$, it will be in $Cs(\Phi u')$.

    This means we have captured the difference in the edges contained in $\Psi\circ \Phi u$ and $\Psi \circ\Phi f(u)$. Consider any edge in $G^*$. It separates two faces $f_1$ and $f_2$, which correspond to two players $v_1$ and $v_2$ in $G$. If exactly one of these vertices is not playing a best response, then the edge changes its status (either it was included in $\Psi \circ \Phi u$ and is no longer included in $\Psi \circ\Phi f(u)$ or it was not included in $\Psi \circ \Phi u$ and is now included in $\Psi \circ \Phi f(u)$). If both or neither of the vertices are playing a best response, the edge does not change its status.

    Observe that if a player, $v$, is not playing a best response in $u$, then, in the dual, the corresponding face $f$ will have more edges than non-edges in $\Psi\circ\Phi u$. This is exactly the condition from the cycle shortening flow for when $C_f$ is added to the subgraph. Using this logic, and the definition of the Cycle Shortening Flow, we can say the following about $\Psi\circ \Phi u$ and $g(\Psi\circ\Phi u)$: Consider an edge in $G^*$ which separates two faces $f_1$ and $f_2$. These faces correspond to two adjacent vertices in $G$, $v_1$ and $v_2$. If exactly one of these vertices is not playing a best response then exactly one of $C_{f_1}$ or $C_{f_2}$ is added to $\Psi\circ\Phi u$ and so (by addition in the field $\mathbb{Z}_2$) if the edge was in $\Psi\circ\Phi u$ it is not it $g(\Psi \circ\Phi u)$ and if it was not in $\Psi\circ\Phi u$ then it is in $g(\Psi\circ\Phi u)$. Likewise, if both or neither of $v_1$ and $v_2$ are playing a best response in $u$, then if the edge is in $\Psi\circ\Phi u$, it will remain in $g(\Psi\circ\Phi u)$, and if it was not in $\Psi\circ\Phi u$, then it will not be in $g(\Psi\circ\Phi u)$.

    This is a long winded way of saying that, given any $u\in \At$, If an edge is in $\Psi\circ\Phi f(u)$, then it will also be in $g(\Psi \circ\Phi u)$, and if an edge is missing from $\Psi \circ\Phi f(u)$, then it will also be absent from $g(\Psi\circ\Phi u)$. This means that these subgraphs are identical and we can conclude, by Lemma \ref{lem:psiBijective} that,
    \[f(u)=\Phi^{-1}\circ\Psi^{-1}g(\Psi\circ\Phi u).\]
\end{proof}

This result shows that there is a dynamic graph process, which can be defined entirely without reference to the game theory, which recapitulates the behavior of MBR in the pure coordination game. Moreover, it gives a second way to prove theorem \ref{thm:2stratMinimalSubgraphs} because Nash equilibria are stationary solutions of the MBR flow which correspond to stationary solutions of the Cycle Shortening Flow which must be minimal cycles. These results, together with the more general, less powerful theorem \ref{thm:minimalSubgraphs}, give us some geometric interpretations of coordination on planar graphs.

\section{Discussion}\label{sec:discussion}

Thinking of the pure coordination game on planar graphs through the dual gives us a much better geometric intuition for the structures that admit non-consensus equilibria. By proving that there is a quasiisometry between the strategy space $\mathcal{A}_G$ and the space of bridge-free subgraphs in the dual, $\mathcal{SC}_{G^*}$,  we have demonstrated that a locally minimal subgraph always corresponds to a Nash equilibrium in the original graph. 

This correspondence helps us understand several things about the pure coordination game. First, it makes the connection between the pure coordination game and the mincut partition quite obvious. Although this connection can be made without the use of the dual, the dual approach allows us to see that equilibrium partitions are to mincut partitions as local optima are to global optima. Excepting the degenerate case in which a part of a mincut partition has a single vertex, this makes the set containment immediate. In demonstrating this relationship, it also enforces our understanding of coordination as a type of local community detection. 
The reason that this is important is that it gives us a locally driven concept of community detection. Instead of finding communities by maximizing a global objective, we can find communities by maximizing a local objective at every point. The use of the dual approach turns the local maximization problem into a global minimization problem. Therefore, through the use of the dual, we can describe a local community detection method through a global optimization problem. 

This type of locally driven community detection is becoming more and more important in the age of autonomous agents. When autonomous agents interact with one another in a coordination-like setting, finding communities in the network of agents cannot always be directed by a top-down, global optimization approach. Instead, if each agent is making decisions autonomously, understanding the community structure of such a network requires a local concept of community detection. These findings show that finding locally driven communities in a planar assemblage of autonomous agents is equivalent to finding minimal subgraphs in the dual. 

Although we can think of this as a global optimization problem in the dual sense, we get no added computational power through this new conception. Because the locality in the space of subgraphs is defined by local perturbations, finding locally minimal subgraphs in the dual is not more computationally efficient than finding Nash equilibria in the original game. The dual concept only gives us a better geometric intuition for the usefulness of these coordination game equilibria. 

The correspondence between Nash equilibria and minimal subgraphs also helps us imagine the kinds of planar graphs which admit non-consensus equilibria. If, in the dual, every cycle has at least one face, $f$, in which the majority of the edges of $f$ are included in the cycle, then the graph $G$ cannot admit a non-consensus equilibrium with two strategies. The same statement for $m>2$ strategies is more complicated, but heuristically we can think about trying to stretch a rubber band (or weird complex of rubber bands) around a polyhedron defined by a planar graph. If that polyhedron is shaped so that, no matter how we stretch the rubber bands along its edges, it will eventually slide off, then that planar graph admits only the consensus equilibrium.

This heuristic is only truly correct in the continuum limit of the planar graph, but it allows us to think about the structures required of the dual graph to admit non-consensus equilibrium.  Again, this way of conceptualizing Nash equilibria is easier for us to imagine than for computers to solve because it involves identifying minimal cycles with a locality which feels quite natural to us but is not yet clear algorithmically. 

Although this method does not improve our ability to understand the discrete game algorithmically, it does make the connection to the continuous space game quite clear. It is beyond the scope of this manuscript, but the cycle shortening flow has clear similarities to the curve shortening flow. In the same way that Stationary solutions of the curve shortening flow are minimal surfaces, stationary solutions to the cycle shortening flow are minimal subgraphs and thus Nash equilibria. Further study of how this game may translate to the continuous setting may show many similarities to the curve shortening flow and to the network flow.

Being able to imagine the strategy space with a certain metric and being able to describe the dynamic game as flows through the strategy space makes it possible to draw equivalences between the game dynamics and other better understood flows, or at least other flows which are easier to define. These connections help us get a better intuitive understanding of the game dynamics and, in some cases, can show us where non-consensus equilibria can emerge. 
\bibliography{references}
\appendix
\section{Cycle Shortening Flow Example}\label{app:CSFexample}
The following several figures are a sequence of subgraphs from the cycle space of the graph $G$ which are obeying cycle shortening flow from section \ref{sec:mbr}. At each time step, for each face $f$, $C_f$ (the cycle of edges incident to $f$) is either added or not added to the subgraph depending on the present number of edges adjacent to $f$. If there are more edges than non-edges of $f$ included in the subgraph then $C_f$ is added. Otherwise, it is not. From this, the cycle shortening flow is very simple. 

\begin{figure}
    \centering
  \begin{tikzpicture}[scale =1.5]
    \begin{scope}[circle, fill=black, draw=black, inner sep = 1.5pt]
      \draw[black]
        (6.036, 5.662) node[fill = black, opacity = 0.4](1){}
        (7.583, 5.73) node[fill = black, opacity = 0.4] (2){}
        (2.931, 5.148) node [fill = black](3){}
        (6.079, 1.579) node[fill = black, opacity = 0.4] (4){}
        (3.997, 1.274) node [fill = black, opacity = 0.4](5){}
        (-0.012, 4.278) node[fill = black] (6){}
        (4.406, 6.708) node[fill = black, opacity = 0.4] (7){}
        (4.398, 2.542) node[fill = black] (9){}
        (7.655, 3.346) node[fill = black, opacity = 0.4] (10){}
        (2.52, -0.353) node [fill = black, opacity = 0.4](11){}
        (0.791, 5.56) node[fill = black] (12){}
        (0.97, 3.766) node[fill = black] (13){}
        (7.826, 0.406) node [fill = black, opacity = 0.4](15){}
        (0.866, 6.721) node[fill = black, opacity = 0.4] (16){}
        (3.493, 2.663) node [fill = black](17){}
        (-0.11, 0.959) node[fill = black, opacity = 0.4] (18){}
        (6.016, 2.098) node[fill = black, opacity = 0.4] (19){}
        (3.528, 5.576) node[fill = black] (20){}
        (1.611, 2.899) node[fill = black, opacity = 0.4] (21){}
        (2.14, 5.139) node [fill = black, opacity = 0.4](22){}
        (3.159, 7.648) node[fill = black] (24){}
        (6.647, 3.386) node[fill = black, opacity = 0.4] (25){}
        (6.801, 2.593) node[fill = black, opacity = 0.4] (26){}
        (1.435, 1.512) node[fill = black, opacity = 0.4] (27){}
        (1.687, 6.188) node [fill = black](28){}
        (6.597, 5.672) node[fill = black] (29){}
        (4.896, 1.102) node[fill = black, opacity = 0.4] (30){}
        (1.274, 1.987) node[fill = black, opacity = 0.4] (31){}
        (5.92, 5.031) node[fill = black] (32){}
        (1.909, 2.782) node [fill = black](33){}
        (4.266, 5.679) node [fill = black, opacity = 0.4](34){}
        (4.962, 5.572) node[fill = black, opacity = 0.4] (35){}
        (5.037, 7.254) node[fill = black, opacity = 0.4] (36){}
        (4.34, 4.261) node[fill = black] (37){}
        (4.675, 4.815) node[fill = black] (38){}
        (6.02, 3.722) node [fill = black](39){}
        (0.271, -0.26) node[fill = black, opacity = 0.4] (40){}
        (5.203, 5.266) node[fill = black, opacity = 0.4] (41){}
        (5.813, 3.241) node[fill = black] (42){}
        (0.577, 2.03) node[fill = black, opacity = 0.4] (43){}
        (6.867, 4.454) node[fill = black] (44){}
        (6.734, 1.979) node [fill = black, opacity = 0.4](45){}
        (4.192, 1.494) node[fill = black, opacity = 0.4] (46){}
        (7.537, 2.158) node [fill = black, opacity = 0.4](47){}
        (3.476, 3.797) node [fill = black](48){}
        (2.764, 3.977) node [fill = black](49){};
        \end{scope}
      \begin{scope}[thick]
        \draw (6) to (13);
        \draw (13) to (49);
        \draw (48) to (49);
        \draw (33) to (48);

        \draw (3) to (37);
        \draw (37) to (38);
        
        \draw (39) to (42);
        \draw (9) to (42);
        \draw (9) to (17);
        \draw (17) to (33);

        \draw (3) to (20);
        \draw (20) to (24);
        \draw (24) to (28);
        \draw (12) to (28);
        \draw (6) to (12);

        \draw (32) to (38);
        
        \draw (29) to (32);
        
        \draw (29) to (44);
        
        \draw (39) to (44);

      \end{scope}
        
      \begin{scope}[dotted]
        \draw (1) to (29);
        \draw (1) to (32);
        \draw (1) to (35);
        \draw (1) to (36);
        \draw (1) to (41);
        \draw (2) to (10);
        \draw (2) to (29);
        \draw (2) to (44);
        
        \draw (3) to (22);
        \draw (3) to (28);
        
        \draw (3) to (48);
        \draw (3) to (49);
        \draw (4) to (15);
        \draw (4) to (19);
        \draw (4) to (30);
        \draw (4) to (45);
        \draw (5) to (11);
        \draw (5) to (17);
        \draw (5) to (27);
        \draw (5) to (30);
        \draw (5) to (33);
        \draw (5) to (46);

        \draw (6) to (43);
        \draw (7) to (20);
        \draw (7) to (24);
        \draw (7) to (34);
        \draw (7) to (35);
        \draw (7) to (36);
        
        \draw (9) to (19);
        
        \draw (9) to (30);
        \draw (9) to (37);
        \draw (9) to (46);
        \draw (10) to (44);
        \draw (10) to (47);
        \draw (11) to (27);
        \draw (11) to (40);
        \draw (12) to (13);
        \draw (12) to (16);
        \draw (12) to (22);
        
        \draw (13) to (21);
        \draw (13) to (22);
        \draw (13) to (43);
        
        \draw (15) to (45);
        \draw (16) to (28);
        
        \draw (17) to (37);
        \draw (17) to (46);
        \draw (17) to (48);
        \draw (18) to (27);
        \draw (18) to (40);
        \draw (18) to (43);
        \draw (19) to (26);
        \draw (19) to (30);
        \draw (19) to (42);
        \draw (19) to (45);
        
        \draw (20) to (28);
        \draw (20) to (34);
        \draw (20) to (37);
        \draw (20) to (38);
        \draw (21) to (31);
        \draw (21) to (33);
        \draw (21) to (43);
        \draw (21) to (49);
        \draw (22) to (28);
        \draw (22) to (49);
        
        \draw (24) to (36);
        \draw (25) to (26);
        \draw (25) to (39);
        \draw (25) to (42);
        \draw (25) to (44);
        \draw (26) to (42);
        \draw (26) to (44);
        \draw (26) to (45);
        \draw (26) to (47);
        \draw (27) to (31);
        \draw (27) to (33);
        \draw (27) to (40);
        \draw (27) to (43);

        \draw (30) to (46);
        \draw (31) to (33);
        \draw (31) to (43);
        
        \draw (32) to (39);
        \draw (32) to (41);
        \draw (32) to (44);
        
        \draw (33) to (49);
        \draw (34) to (35);
        \draw (34) to (38);
        \draw (35) to (36);
        \draw (35) to (38);
        \draw (35) to (41);
        
        \draw (37) to (39);
        \draw (37) to (42);
        \draw (37) to (48);
        \draw (38) to (39);
        \draw (38) to (41);

        \draw (44) to (47);
        \draw (45) to (47);
        
      \end{scope}
    \end{tikzpicture}

    \caption[Cycle shortening flow frame 1]{An initial subgraph (in the solid lines) from the cycle space of the graph shown in the dotted lines}
    \label{fig:CSF1}
\end{figure}

\begin{figure}
    \centering
  \begin{tikzpicture}[scale =1.5]
      \begin{scope}[circle, fill=black, draw=black, inner sep = 1.5pt]
      \draw[black]
        (6.036, 5.662) node[fill = black, opacity = 0.4](1){}
        (7.583, 5.73) node[fill = black, opacity = 0.4] (2){}
        (2.931, 5.148) node [fill = black, opacity = 0.4](3){}
        (6.079, 1.579) node[fill = black, opacity = 0.4] (4){}
        (3.997, 1.274) node [fill = black, opacity = 0.4](5){}
        (-0.012, 4.278) node[fill = black, opacity = 0.4] (6){}
        (4.406, 6.708) node[fill = black, opacity = 0.4] (7){}
        (4.398, 2.542) node[fill = black] (9){}
        (7.655, 3.346) node[fill = black, opacity = 0.4] (10){}
        (2.52, -0.353) node [fill = black, opacity = 0.4](11){}
        (0.791, 5.56) node[fill = black] (12){}
        (0.97, 3.766) node[fill = black] (13){}
        (7.826, 0.406) node [fill = black, opacity = 0.4](15){}
        (0.866, 6.721) node[fill = black, opacity = 0.4] (16){}
        (3.493, 2.663) node [fill = black](17){}
        (-0.11, 0.959) node[fill = black, opacity = 0.4] (18){}
        (6.016, 2.098) node[fill = black, opacity = 0.4] (19){}
        (3.528, 5.576) node[fill = black] (20){}
        (1.611, 2.899) node[fill = black, opacity = 0.4] (21){}
        (2.14, 5.139) node [fill = black, opacity = 0.4](22){}
        (3.159, 7.648) node[fill = black, opacity = 0.4] (24){}
        (6.647, 3.386) node[fill = black, opacity = 0.4] (25){}
        (6.801, 2.593) node[fill = black, opacity = 0.4] (26){}
        (1.435, 1.512) node[fill = black, opacity = 0.4] (27){}
        (1.687, 6.188) node [fill = black](28){}
        (6.597, 5.672) node[fill = black, opacity = 0.4] (29){}
        (4.896, 1.102) node[fill = black, opacity = 0.4] (30){}
        (1.274, 1.987) node[fill = black, opacity = 0.4] (31){}
        (5.92, 5.031) node[fill = black] (32){}
        (1.909, 2.782) node [fill = black](33){}
        (4.266, 5.679) node [fill = black, opacity = 0.4](34){}
        (4.962, 5.572) node[fill = black, opacity = 0.4] (35){}
        (5.037, 7.254) node[fill = black, opacity = 0.4] (36){}
        (4.34, 4.261) node[fill = black] (37){}
        (4.675, 4.815) node[fill = black] (38){}
        (6.02, 3.722) node [fill = black](39){}
        (0.271, -0.26) node[fill = black, opacity = 0.4] (40){}
        (5.203, 5.266) node[fill = black, opacity = 0.4] (41){}
        (5.813, 3.241) node[fill = black] (42){}
        (0.577, 2.03) node[fill = black, opacity = 0.4] (43){}
        (6.867, 4.454) node[fill = black] (44){}
        (6.734, 1.979) node [fill = black, opacity = 0.4](45){}
        (4.192, 1.494) node[fill = black, opacity = 0.4] (46){}
        (7.537, 2.158) node [fill = black, opacity = 0.4](47){}
        (3.476, 3.797) node [fill = black](48){}
        (2.764, 3.977) node [fill = black](49){};
        \end{scope}
       \begin{scope}[thick]
        \draw (9) to (17);
        \draw (9) to (42);
        \draw (12) to (13);
        \draw (12) to (28);
        \draw (13) to (49);
        \draw (17) to (48);
        \draw (20) to (28);
        \draw (20) to (37);
        \draw (32) to (38);
        \draw (32) to (44);
        \draw (33) to (48);
        \draw (33) to (49);
        \draw (37) to (38);
        \draw (39) to (44);
        \draw (39) to (42);
      \end{scope}
        
      \begin{scope}[dotted]
        \draw (1) to (29);
        \draw (1) to (32);
        \draw (1) to (35);
        \draw (1) to (36);
        \draw (1) to (41);
        \draw (2) to (10);
        \draw (2) to (29);
        \draw (2) to (44);
        \draw (3) to (20);
        \draw (3) to (22);
        \draw (3) to (28);
        \draw (3) to (37);
        \draw (3) to (48);
        \draw (3) to (49);
        \draw (4) to (15);
        \draw (4) to (19);
        \draw (4) to (30);
        \draw (4) to (45);
        \draw (5) to (11);
        \draw (5) to (17);
        \draw (5) to (27);
        \draw (5) to (30);
        \draw (5) to (33);
        \draw (5) to (46);
        \draw (6) to (12);
        \draw (6) to (13);
        \draw (6) to (43);
        \draw (7) to (20);
        \draw (7) to (24);
        \draw (7) to (34);
        \draw (7) to (35);
        \draw (7) to (36);
        \draw (9) to (19);
        \draw (9) to (30);
        \draw (9) to (37);
        \draw (9) to (46);
        \draw (10) to (44);
        \draw (10) to (47);
        \draw (11) to (27);
        \draw (11) to (40);
        \draw (12) to (16);
        \draw (12) to (22);
        \draw (13) to (21);
        \draw (13) to (22);
        \draw (13) to (43);
        \draw (15) to (45);
        \draw (16) to (28);
        \draw (17) to (33);
        \draw (17) to (37);
        \draw (17) to (46);
        \draw (18) to (27);
        \draw (18) to (40);
        \draw (18) to (43);
        \draw (19) to (26);
        \draw (19) to (30);
        \draw (19) to (42);
        \draw (19) to (45);
        \draw (20) to (24);
        \draw (20) to (34);
        \draw (20) to (38);
        \draw (21) to (31);
        \draw (21) to (33);
        \draw (21) to (43);
        \draw (21) to (49);
        \draw (22) to (28);
        \draw (22) to (49);
        \draw (24) to (28);
        \draw (24) to (36);
        \draw (25) to (26);
        \draw (25) to (39);
        \draw (25) to (42);
        \draw (25) to (44);
        \draw (26) to (42);
        \draw (26) to (44);
        \draw (26) to (45);
        \draw (26) to (47);
        \draw (27) to (31);
        \draw (27) to (33);
        \draw (27) to (40);
        \draw (27) to (43);
        \draw (29) to (32);
        \draw (29) to (44);
        \draw (30) to (46);
        \draw (31) to (33);
        \draw (31) to (43);
        \draw (32) to (39);
        \draw (32) to (41);
        \draw (34) to (35);
        \draw (34) to (38);
        \draw (35) to (36);
        \draw (35) to (38);
        \draw (35) to (41);
        \draw (37) to (39);
        \draw (37) to (42);
        \draw (37) to (48);
        \draw (38) to (39);
        \draw (38) to (41);
        \draw (44) to (47);
        \draw (45) to (47);
        \draw (48) to (49);
      \end{scope}
    \end{tikzpicture}

    \caption[Cycle shortening flow frame 2]{After the first step of the cycle shortening flow, for every face $f$ of $G$ with more edges than non-edges in the subgraph, the cycle $C_f$ is added to the subgraph. }
    \label{fig:CSF2}
\end{figure}

\begin{figure}
    \centering
  \begin{tikzpicture}[scale =1.5]
       \begin{scope}[circle, fill=black, draw=black, inner sep = 1.5pt]
      \draw[black]
        (6.036, 5.662) node[fill = black, opacity = 0.4](1){}
        (7.583, 5.73) node[fill = black, opacity = 0.4] (2){}
        (2.931, 5.148) node [fill = black, opacity = 0.4](3){}
        (6.079, 1.579) node[fill = black, opacity = 0.4] (4){}
        (3.997, 1.274) node [fill = black, opacity = 0.4](5){}
        (-0.012, 4.278) node[fill = black, opacity = 0.4] (6){}
        (4.406, 6.708) node[fill = black, opacity = 0.4] (7){}
        (4.398, 2.542) node[fill = black] (9){}
        (7.655, 3.346) node[fill = black, opacity = 0.4] (10){}
        (2.52, -0.353) node [fill = black, opacity = 0.4](11){}
        (0.791, 5.56) node[fill = black] (12){}
        (0.97, 3.766) node[fill = black] (13){}
        (7.826, 0.406) node [fill = black, opacity = 0.4](15){}
        (0.866, 6.721) node[fill = black, opacity = 0.4] (16){}
        (3.493, 2.663) node [fill = black](17){}
        (-0.11, 0.959) node[fill = black, opacity = 0.4] (18){}
        (6.016, 2.098) node[fill = black, opacity = 0.4] (19){}
        (3.528, 5.576) node[fill = black] (20){}
        (1.611, 2.899) node[fill = black, opacity = 0.4] (21){}
        (2.14, 5.139) node [fill = black, opacity = 0.4](22){}
        (3.159, 7.648) node[fill = black, opacity = 0.4] (24){}
        (6.647, 3.386) node[fill = black, opacity = 0.4] (25){}
        (6.801, 2.593) node[fill = black, opacity = 0.4] (26){}
        (1.435, 1.512) node[fill = black, opacity = 0.4] (27){}
        (1.687, 6.188) node [fill = black](28){}
        (6.597, 5.672) node[fill = black, opacity = 0.4] (29){}
        (4.896, 1.102) node[fill = black, opacity = 0.4] (30){}
        (1.274, 1.987) node[fill = black, opacity = 0.4] (31){}
        (5.92, 5.031) node[fill = black] (32){}
        (1.909, 2.782) node [fill = black](33){}
        (4.266, 5.679) node [fill = black, opacity = 0.4](34){}
        (4.962, 5.572) node[fill = black, opacity = 0.4] (35){}
        (5.037, 7.254) node[fill = black, opacity = 0.4] (36){}
        (4.34, 4.261) node[fill = black, opacity = 0.4] (37){}
        (4.675, 4.815) node[fill = black] (38){}
        (6.02, 3.722) node [fill = black](39){}
        (0.271, -0.26) node[fill = black, opacity = 0.4] (40){}
        (5.203, 5.266) node[fill = black, opacity = 0.4] (41){}
        (5.813, 3.241) node[fill = black] (42){}
        (0.577, 2.03) node[fill = black, opacity = 0.4] (43){}
        (6.867, 4.454) node[fill = black, opacity = 0.4] (44){}
        (6.734, 1.979) node [fill = black, opacity = 0.4](45){}
        (4.192, 1.494) node[fill = black, opacity = 0.4] (46){}
        (7.537, 2.158) node [fill = black, opacity = 0.4](47){}
        (3.476, 3.797) node [fill = black](48){}
        (2.764, 3.977) node [fill = black](49){};
        \end{scope}
      \begin{scope}[thick]
        \draw (9) to (17);
        \draw (9) to (42);
        \draw (12) to (13);
        \draw (12) to (28);
        \draw (13) to (49);
        \draw (17) to (33);
        
        \draw (20) to (28);
        \draw (20) to (38);
        \draw (32) to (38);
        \draw (32) to (39);
        \draw (33) to (48);

        \draw (39) to (42);
        \draw (48) to (49);
      \end{scope}
        
      \begin{scope}[dotted]
        \draw (1) to (29);
        \draw (1) to (32);
        \draw (1) to (35);
        \draw (1) to (36);
        \draw (1) to (41);
        \draw (2) to (10);
        \draw (2) to (29);
        \draw (2) to (44);
        \draw (3) to (20);
        \draw (3) to (22);
        \draw (3) to (28);
        \draw (3) to (37);
        \draw (3) to (48);
        \draw (3) to (49);
        \draw (4) to (15);
        \draw (4) to (19);
        \draw (4) to (30);
        \draw (4) to (45);
        \draw (5) to (11);
        \draw (5) to (17);
        \draw (5) to (27);
        \draw (5) to (30);
        \draw (5) to (33);
        \draw (5) to (46);
        \draw (6) to (12);
        \draw (6) to (13);
        \draw (6) to (43);
        \draw (7) to (20);
        \draw (7) to (24);
        \draw (7) to (34);
        \draw (7) to (35);
        \draw (7) to (36);
        \draw (9) to (19);
        \draw (9) to (30);
        \draw (9) to (37);
        \draw (9) to (46);
        \draw (10) to (44);
        \draw (10) to (47);
        \draw (11) to (27);
        \draw (11) to (40);
        \draw (12) to (16);
        \draw (12) to (22);
        \draw (13) to (21);
        \draw (13) to (22);
        \draw (13) to (43);
        \draw (15) to (45);
        \draw (16) to (28);
        
        \draw (17) to (37);
        \draw (17) to (46);
         \draw (17) to (48);
        \draw (18) to (27);
        \draw (18) to (40);
        \draw (18) to (43);
        \draw (19) to (26);
        \draw (19) to (30);
        \draw (19) to (42);
        \draw (19) to (45);
        \draw (20) to (24);
        \draw (20) to (34);
         \draw (20) to (37);
        
        \draw (21) to (31);
        \draw (21) to (33);
        \draw (21) to (43);
        \draw (21) to (49);
        \draw (22) to (28);
        \draw (22) to (49);
        \draw (24) to (28);
        \draw (24) to (36);
        \draw (25) to (26);
        \draw (25) to (39);
        \draw (25) to (42);
        \draw (25) to (44);
        \draw (26) to (42);
        \draw (26) to (44);
        \draw (26) to (45);
        \draw (26) to (47);
        \draw (27) to (31);
        \draw (27) to (33);
        \draw (27) to (40);
        \draw (27) to (43);
        \draw (29) to (32);
        \draw (29) to (44);
        \draw (30) to (46);
        \draw (31) to (33);
        \draw (31) to (43);
        
        \draw (32) to (41);
         \draw (32) to (44);
         \draw (33) to (49);
        \draw (34) to (35);
        \draw (34) to (38);
        \draw (35) to (36);
        \draw (35) to (38);
        \draw (35) to (41);
         \draw (37) to (38);
        \draw (37) to (39);
        \draw (37) to (42);
        \draw (37) to (48);
        \draw (38) to (39);
        \draw (38) to (41);
         \draw (39) to (44);
        \draw (44) to (47);
        \draw (45) to (47);
        
      \end{scope}
    \end{tikzpicture}

    \caption[Cycle shortening flow frame 3]{After the second  step of the cycle shortening flow, for every face $f$ of $G$ with more edges than non-edges in the subgraph, the cycle $C_f$ is added to the subgraph. }
    \label{fig:CSF3}
\end{figure}

\begin{figure}
    \centering
  \begin{tikzpicture}[scale =1.5]
         \begin{scope}[circle, fill=black, draw=black, inner sep = 1.5pt]
      \draw[black]
        (6.036, 5.662) node[fill = black, opacity = 0.4](1){}
        (7.583, 5.73) node[fill = black, opacity = 0.4] (2){}
        (2.931, 5.148) node [fill = black, opacity = 0.4](3){}
        (6.079, 1.579) node[fill = black, opacity = 0.4] (4){}
        (3.997, 1.274) node [fill = black, opacity = 0.4](5){}
        (-0.012, 4.278) node[fill = black, opacity = 0.4] (6){}
        (4.406, 6.708) node[fill = black, opacity = 0.4] (7){}
        (4.398, 2.542) node[fill = black] (9){}
        (7.655, 3.346) node[fill = black, opacity = 0.4] (10){}
        (2.52, -0.353) node [fill = black, opacity = 0.4](11){}
        (0.791, 5.56) node[fill = black] (12){}
        (0.97, 3.766) node[fill = black] (13){}
        (7.826, 0.406) node [fill = black, opacity = 0.4](15){}
        (0.866, 6.721) node[fill = black, opacity = 0.4] (16){}
        (3.493, 2.663) node [fill = black](17){}
        (-0.11, 0.959) node[fill = black, opacity = 0.4] (18){}
        (6.016, 2.098) node[fill = black, opacity = 0.4] (19){}
        (3.528, 5.576) node[fill = black] (20){}
        (1.611, 2.899) node[fill = black, opacity = 0.4] (21){}
        (2.14, 5.139) node [fill = black, opacity = 0.4](22){}
        (3.159, 7.648) node[fill = black, opacity = 0.4] (24){}
        (6.647, 3.386) node[fill = black, opacity = 0.4] (25){}
        (6.801, 2.593) node[fill = black, opacity = 0.4] (26){}
        (1.435, 1.512) node[fill = black, opacity = 0.4] (27){}
        (1.687, 6.188) node [fill = black](28){}
        (6.597, 5.672) node[fill = black, opacity = 0.4] (29){}
        (4.896, 1.102) node[fill = black, opacity = 0.4] (30){}
        (1.274, 1.987) node[fill = black, opacity = 0.4] (31){}
        (5.92, 5.031) node[fill = black, opacity = 0.4] (32){}
        (1.909, 2.782) node [fill = black](33){}
        (4.266, 5.679) node [fill = black, opacity = 0.4](34){}
        (4.962, 5.572) node[fill = black, opacity = 0.4] (35){}
        (5.037, 7.254) node[fill = black, opacity = 0.4] (36){}
        (4.34, 4.261) node[fill = black, opacity = 0.4] (37){}
        (4.675, 4.815) node[fill = black] (38){}
        (6.02, 3.722) node [fill = black](39){}
        (0.271, -0.26) node[fill = black, opacity = 0.4] (40){}
        (5.203, 5.266) node[fill = black, opacity = 0.4] (41){}
        (5.813, 3.241) node[fill = black] (42){}
        (0.577, 2.03) node[fill = black, opacity = 0.4] (43){}
        (6.867, 4.454) node[fill = black, opacity = 0.4] (44){}
        (6.734, 1.979) node [fill = black, opacity = 0.4](45){}
        (4.192, 1.494) node[fill = black, opacity = 0.4] (46){}
        (7.537, 2.158) node [fill = black, opacity = 0.4](47){}
        (3.476, 3.797) node [fill = black](48){}
        (2.764, 3.977) node [fill = black](49){};
        \end{scope}
            \begin{scope}[thick]
        \draw (9) to (17);
        \draw (9) to (42);
        \draw (12) to (13);
        \draw (12) to (28);
        \draw (13) to (49);
        \draw (17) to (48);
        
        \draw (20) to (28);
        \draw (20) to (38);
        
        \draw (33) to (48);
        \draw (33) to (49);
        
        \draw (38) to (39);
        \draw (39) to (42);
        
      \end{scope}
        
      \begin{scope}[dotted]
        \draw (1) to (29);
        \draw (1) to (32);
        \draw (1) to (35);
        \draw (1) to (36);
        \draw (1) to (41);
        \draw (2) to (10);
        \draw (2) to (29);
        \draw (2) to (44);
        \draw (3) to (20);
        \draw (3) to (22);
        \draw (3) to (28);
        \draw (3) to (37);
        \draw (3) to (48);
        \draw (3) to (49);
        \draw (4) to (15);
        \draw (4) to (19);
        \draw (4) to (30);
        \draw (4) to (45);
        \draw (5) to (11);
        \draw (5) to (17);
        \draw (5) to (27);
        \draw (5) to (30);
        \draw (5) to (33);
        \draw (5) to (46);
        \draw (6) to (12);
        \draw (6) to (13);
        \draw (6) to (43);
        \draw (7) to (20);
        \draw (7) to (24);
        \draw (7) to (34);
        \draw (7) to (35);
        \draw (7) to (36);
        \draw (9) to (19);
        \draw (9) to (30);
        \draw (9) to (37);
        \draw (9) to (46);
        \draw (10) to (44);
        \draw (10) to (47);
        \draw (11) to (27);
        \draw (11) to (40);
        \draw (12) to (16);
        \draw (12) to (22);
        \draw (13) to (21);
        \draw (13) to (22);
        \draw (13) to (43);
        \draw (15) to (45);
        \draw (16) to (28);
         \draw (17) to (33);
        \draw (17) to (37);
        \draw (17) to (46);
         
        \draw (18) to (27);
        \draw (18) to (40);
        \draw (18) to (43);
        \draw (19) to (26);
        \draw (19) to (30);
        \draw (19) to (42);
        \draw (19) to (45);
        \draw (20) to (24);
        \draw (20) to (34);
         \draw (20) to (37);
        
        \draw (21) to (31);
        \draw (21) to (33);
        \draw (21) to (43);
        \draw (21) to (49);
        \draw (22) to (28);
        \draw (22) to (49);
        \draw (24) to (28);
        \draw (24) to (36);
        \draw (25) to (26);
        \draw (25) to (39);
        \draw (25) to (42);
        \draw (25) to (44);
        \draw (26) to (42);
        \draw (26) to (44);
        \draw (26) to (45);
        \draw (26) to (47);
        \draw (27) to (31);
        \draw (27) to (33);
        \draw (27) to (40);
        \draw (27) to (43);
        \draw (29) to (32);
        \draw (29) to (44);
        \draw (30) to (46);
        \draw (31) to (33);
        \draw (31) to (43);
        
        \draw (32) to (41);
         \draw (32) to (44);
         \draw (32) to (38);
         \draw (32) to (39);
         
        \draw (34) to (35);
        \draw (34) to (38);
        \draw (35) to (36);
        \draw (35) to (38);
        \draw (35) to (41);
         \draw (37) to (38);
        \draw (37) to (39);
        \draw (37) to (42);
        \draw (37) to (48);
        
        \draw (38) to (41);
         \draw (39) to (44);
        \draw (44) to (47);
        \draw (45) to (47);
        \draw (48) to (49);
      \end{scope}
    \end{tikzpicture}

    \caption[Cycle shortening flow frame 4]{After the third step of the cycle shortening flow, for every face $f$ of $G$ with more edges than non-edges in the subgraph, the cycle $C_f$ is added to the subgraph. }
    \label{fig:CSF4}
\end{figure}

\begin{figure}
    \centering
  \begin{tikzpicture}[scale =1.5]
         \begin{scope}[circle, fill=black, draw=black, inner sep = 1.5pt]
      \draw[black]
        (6.036, 5.662) node[fill = black, opacity = 0.4](1){}
        (7.583, 5.73) node[fill = black, opacity = 0.4] (2){}
        (2.931, 5.148) node [fill = black, opacity = 0.4](3){}
        (6.079, 1.579) node[fill = black, opacity = 0.4] (4){}
        (3.997, 1.274) node [fill = black, opacity = 0.4](5){}
        (-0.012, 4.278) node[fill = black, opacity = 0.4] (6){}
        (4.406, 6.708) node[fill = black, opacity = 0.4] (7){}
        (4.398, 2.542) node[fill = black] (9){}
        (7.655, 3.346) node[fill = black, opacity = 0.4] (10){}
        (2.52, -0.353) node [fill = black, opacity = 0.4](11){}
        (0.791, 5.56) node[fill = black] (12){}
        (0.97, 3.766) node[fill = black] (13){}
        (7.826, 0.406) node [fill = black, opacity = 0.4](15){}
        (0.866, 6.721) node[fill = black, opacity = 0.4] (16){}
        (3.493, 2.663) node [fill = black](17){}
        (-0.11, 0.959) node[fill = black, opacity = 0.4] (18){}
        (6.016, 2.098) node[fill = black, opacity = 0.4] (19){}
        (3.528, 5.576) node[fill = black] (20){}
        (1.611, 2.899) node[fill = black, opacity = 0.4] (21){}
        (2.14, 5.139) node [fill = black, opacity = 0.4](22){}
        (3.159, 7.648) node[fill = black, opacity = 0.4] (24){}
        (6.647, 3.386) node[fill = black, opacity = 0.4] (25){}
        (6.801, 2.593) node[fill = black, opacity = 0.4] (26){}
        (1.435, 1.512) node[fill = black, opacity = 0.4] (27){}
        (1.687, 6.188) node [fill = black](28){}
        (6.597, 5.672) node[fill = black, opacity = 0.4] (29){}
        (4.896, 1.102) node[fill = black, opacity = 0.4] (30){}
        (1.274, 1.987) node[fill = black, opacity = 0.4] (31){}
        (5.92, 5.031) node[fill = black, opacity = 0.4] (32){}
        (1.909, 2.782) node [fill = black](33){}
        (4.266, 5.679) node [fill = black, opacity = 0.4](34){}
        (4.962, 5.572) node[fill = black, opacity = 0.4] (35){}
        (5.037, 7.254) node[fill = black, opacity = 0.4] (36){}
        (4.34, 4.261) node[fill = black, opacity = 0.4] (37){}
        (4.675, 4.815) node[fill = black] (38){}
        (6.02, 3.722) node [fill = black](39){}
        (0.271, -0.26) node[fill = black, opacity = 0.4] (40){}
        (5.203, 5.266) node[fill = black, opacity = 0.4] (41){}
        (5.813, 3.241) node[fill = black] (42){}
        (0.577, 2.03) node[fill = black, opacity = 0.4] (43){}
        (6.867, 4.454) node[fill = black, opacity = 0.4] (44){}
        (6.734, 1.979) node [fill = black, opacity = 0.4](45){}
        (4.192, 1.494) node[fill = black, opacity = 0.4] (46){}
        (7.537, 2.158) node [fill = black, opacity = 0.4](47){}
        (3.476, 3.797) node [fill = black](48){}
        (2.764, 3.977) node [fill = black](49){};
        \end{scope}
            \begin{scope}[thick]
        \draw (9) to (17);
        \draw (9) to (42);
        \draw (12) to (13);
        \draw (12) to (28);
        \draw (13) to (49);

        \draw (20) to (28);
        \draw (20) to (38);
        
        \draw (33) to (48);

        \draw (38) to (39);
        \draw (39) to (42);

        \draw (17) to (33);
        \draw (48) to (49);
      \end{scope}
        
      \begin{scope}[dotted]
        \draw (17) to (48);
        \draw (33) to (49);
      
        \draw (1) to (29);
        \draw (1) to (32);
        \draw (1) to (35);
        \draw (1) to (36);
        \draw (1) to (41);
        \draw (2) to (10);
        \draw (2) to (29);
        \draw (2) to (44);
        \draw (3) to (20);
        \draw (3) to (22);
        \draw (3) to (28);
        \draw (3) to (37);
        \draw (3) to (48);
        \draw (3) to (49);
        \draw (4) to (15);
        \draw (4) to (19);
        \draw (4) to (30);
        \draw (4) to (45);
        \draw (5) to (11);
        \draw (5) to (17);
        \draw (5) to (27);
        \draw (5) to (30);
        \draw (5) to (33);
        \draw (5) to (46);
        \draw (6) to (12);
        \draw (6) to (13);
        \draw (6) to (43);
        \draw (7) to (20);
        \draw (7) to (24);
        \draw (7) to (34);
        \draw (7) to (35);
        \draw (7) to (36);
        \draw (9) to (19);
        \draw (9) to (30);
        \draw (9) to (37);
        \draw (9) to (46);
        \draw (10) to (44);
        \draw (10) to (47);
        \draw (11) to (27);
        \draw (11) to (40);
        \draw (12) to (16);
        \draw (12) to (22);
        \draw (13) to (21);
        \draw (13) to (22);
        \draw (13) to (43);
        \draw (15) to (45);
        \draw (16) to (28);
         
        \draw (17) to (37);
        \draw (17) to (46);
         
        \draw (18) to (27);
        \draw (18) to (40);
        \draw (18) to (43);
        \draw (19) to (26);
        \draw (19) to (30);
        \draw (19) to (42);
        \draw (19) to (45);
        \draw (20) to (24);
        \draw (20) to (34);
         \draw (20) to (37);
        
        \draw (21) to (31);
        \draw (21) to (33);
        \draw (21) to (43);
        \draw (21) to (49);
        \draw (22) to (28);
        \draw (22) to (49);
        \draw (24) to (28);
        \draw (24) to (36);
        \draw (25) to (26);
        \draw (25) to (39);
        \draw (25) to (42);
        \draw (25) to (44);
        \draw (26) to (42);
        \draw (26) to (44);
        \draw (26) to (45);
        \draw (26) to (47);
        \draw (27) to (31);
        \draw (27) to (33);
        \draw (27) to (40);
        \draw (27) to (43);
        \draw (29) to (32);
        \draw (29) to (44);
        \draw (30) to (46);
        \draw (31) to (33);
        \draw (31) to (43);
        
        \draw (32) to (41);
         \draw (32) to (44);
         \draw (32) to (38);
         \draw (32) to (39);
         
        \draw (34) to (35);
        \draw (34) to (38);
        \draw (35) to (36);
        \draw (35) to (38);
        \draw (35) to (41);
         \draw (37) to (38);
        \draw (37) to (39);
        \draw (37) to (42);
        \draw (37) to (48);
        
        \draw (38) to (41);
         \draw (39) to (44);
        \draw (44) to (47);
        \draw (45) to (47);
        
      \end{scope}
    \end{tikzpicture}

    \caption[Cycle shortening flow frame 5]{After the fourth step of the cycle shortening flow, the subgraph is close to a local minimum. This is not a proper local minimum but the only shortening step is back to frame \ref{fig:CSF4}. This means that the flow ends in a 2-cycle and neither strategy profile is a Nash equilibrium. In the mixed strategy case, a Nash equilibrium may be found between these two strategy profiles.  }
    \label{fig:CSF5}
\end{figure}
\end{document}